\numberwithin{equation}{section}
\numberwithin{table}{section}
\newtheorem{theorem}{Theorem}[subsection]
\newtheorem{corollary}[theorem]{Corollary}
\newtheorem{definition}[theorem]{Definition}
\begin{document} 
\begin{titlepage}
\centering
\vspace{-2in}
{\huge\bfseries Reliability, calibration and metrology in ionizing radiation dosimetry}\\[5pt]

\vspace{-0pt}\hrulefill\hspace{0.2cm} \floweroneleft\floweroneright \hspace{0.2cm} \hrulefill

\vspace{0.25in}
{\large\bfseries A systems analysis}

\vspace{.5in}
\Large\decosix

\vspace{0.5in}
{\large A\\[12pt] Monograph\\[12pt]\large\normalfont by\\[25pt]\Large\bfseries Luisiana X. Cundin}\\[3cm]

\includegraphics[width=0.65\textwidth]{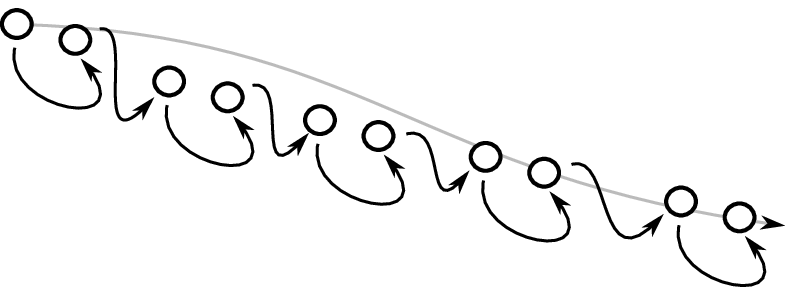}

\vfill
\hrulefill\hspace{0.2cm}\decofourleft\decofourright\hspace{0.2cm}\hrulefill

\vspace{12pt}
{\large August 19, 2013}
\end{titlepage}

\normalfont\normalsize

\begin{abstract} 
Radiation dosimetry systems are complex systems, comprised of a milieu of components, designed for determining absorbed dose after exposure to ionizing radiation. Although many materials serve as absorbing media for measurement, thermoluminescent dosimeters represent some of the more desirable materials available; yet, reliability studies have revealed a clear and definite decrement in dosimeter sensitivity after repeated use. Unfortunately, repeated use of any such material for absorbing media in ionizing radiation dosimetry will in time experience performance decrements; thus, in order to achieve the most accuracy and/or precision in dosimetry, it is imperative proper compensation be made in calibration. Yet, analysis proves the majority of the measured decrement in sensitivity experienced by dosimeters is attributable to drift noise and not to any degradation in dosimeter performance, at least, not to any great degree. In addition to investigating dosimeter reliability, implications for metrological traceability and influences for calibration, this monograph addresses certain errors in applied statistics, which are rather alarming habits found in common usage throughout the radiation dosimetry community. This monograph can also be considered a systems analysis of sorts; because, attention to such key topics, so general and essential in nature, produces an effective system analysis, where each principle covered aptly applies to radiation dosimetry systems in general. 
\end{abstract}
\tableofcontents
\newpage
\listoffigures

\newpage

\section{Introduction} Radiation dosimetry is the measurement and/or calculation of absorbed dose in both matter and biological tissue exposed to ionizing radiation. Many industries, organizations and medical facilities rely upon accurate dosimetry, indeed, for many institutions, the very health and safety of patients and\slash or institutional members depend on such determinations; as a consequence, it is imperative that the highest degree of accuracy and precision be maintained for any dosimetry system. In conjunction with many typical, common sense maintenance routines, periodic calibration should also be performed on various system components to ensure proper functioning and metrological traceability. Inescapable random processes persistently deflect a system's adherence to mandatory standards, hence the need for frequent calibration; yet, proper compensation and\slash or adjustment is required for components changing in some consistent manner, e.g. frequent adjustments are made for the natural decay of radioactive material used for standard sources in dosimetry. Although some adjustments are more abstract in nature, such as compensating for radioactive decay strength, other measures are quite physical in nature, such as voltages to equipment can be adjusted, even, mechanical adjustments are often made to components to maintain proper calibration of the system.  Unfortunately, adjustments are often necessary to compensate for an improper implementation of some calibration schema and this can be just as detrimental, if not more so to a system's fidelity, as foregoing any calibration routine. The misapplication of a calibration scheme by unwittingly breaking the underlying principles for some calibration scheme represents far greater jeopardy than just not applying any calibration routine altogether, for at least in the latter case, there would exist justifiable misgivings.   

It can be quite easy to dismiss the importance of calibration given the widespread, necessary practice throughout all analytic chemistry; thus, rendering such bland, common, even vulgar practices to appear unexciting, insignificant and, as a consequence, easily ignored or disregarded. Despite the apparent vulgarity often attributed to the issue of calibration, inseparably entwined are many related concepts essential to any dynamic system, such as reliability, stability and performance. The act of calibration, the successful accomplishment of calibrating a system secures many essential and desirous attributes sought from many dynamic systems; but, in the case of radiation dosimetry systems, calibration is simply indispensable, for it represents the very essence and enables realizing the very purpose of such systems.   

The present analysis is by no means exhaustive in its coverage, there are a host of issues associated with dosimetry systems that are not covered by this program; yet, efficiency, accuracy and precision count amongst some of the more important parameters to be considered, where efficiency enjoins not only productive efficiency, but economic efficiency as well, as for accuracy and precision, their importance is quite obvious. As with all complex systems, the interdependency of all parameters can often force optimization efforts into multifaceted, multi-parameter analyses requiring considerable resources before completion. 

It is often possible to achieve system optimization by judicious application of a suitable calibration method and, by happenstance, such is the present case. A global investigation of system variability identifies the \emph{internal standard} calibration method the most efficient method to ensure the highest accuracy and precision for systems employing thermoluminescent dosimeters; furthermore, it is further realized that same method optimizes additional aspects of the system. 

The programmatic systems analysis presented is of such an abstract nature that its applicability goes beyond systems relying just on thermoluminescent dosimeters. Analysis proves the most effective means for system\textendash wide control and optimization is ubiquitous to all radiation dosimetry systems, regardless of the specific materials utilized for quantifying radiation exposures, specifically, systems employing materials other than thermoluminescent dosimeters; thus, this systems analysis provides a very general program incorporating principles, methodologies and statistical calculus fundamental to any radiation dosimetry system.

Once a dose has been determined, accurate or otherwise, that dose must be related to a biologically equivalent dose before it can be truly relevant for human health and safety concerns. The mapping of dose to equivalent dose is wrought with complications, especially, regarding the type of radiation detected, where alpha particles have radically different biological effects than, say, neutrons \cite{Moscovith2,Moscovith3}. The topic of equivalent dose is voluminous, but it is not required for a discussion on accuracy and precision with respect to dosimetry systems, for once a dose has been determined, equivalent dose is a separate mapping altogether -- largely theoretical. Summarily, the present systems analysis stops short of discussing equivalent dose mappings; rather, the discussion concentrates primarily on calibration, metrological traceability, system error, variability and stability, also, performance degradation exhibited by thermoluminescent dosimeters and their reliability from continued reuse.

The program will first explore thermoluminescent material characteristics revealed through experimental performance decrements as a function of reuse, the associated statistics describe ensemble behavior and the potential ramifications for continued calibration of a dosimetry system. Thermoluminescent material characteristics are extracted from a detailed reliability study designed to isolate material response as a function of reuse; moreover, appreciable decrements are clearly recorded \cite{ShVoss}. The analysis is intended to identify perceived decrements in performance for thermoluminescent dosimeters, after repeated reuse. The salient issues are of common concern in radiation dosimetry and a host of investigations have been authored surrounding the efficiency and reliability of these devices \cite{Moscovith,ShVoss,Bos,Deward,Harris,Horowitz,Chiriotti,Ramaseshan,Ha,Stoebe,Zha}.

As is often done, when presented with an ensemble of values, statistics are drawn for describing the behavior. The program discusses the implications for instantaneous ensemble statistics, which may not capture the intended behavior sought. A unique sensitivity is intrinsic to each dosimeter; hence, ensemble statistics generates \emph{insufficient} statistics describing the overall behavior. The property of insufficiency in statistics denotes parameter specific dependency, which prevents global description of a system \cite{Hogg}. 

After discussing ensemble statistics, the program discusses time averaged statistics and why such statistics are \emph{sufficient} \cite{Bendat}. It is true even time averaged statistics prove time dependent; yet, these statistics are self-stationary, thus, at least achieving \underline{minimally} \emph{sufficient} status for all associated statistics. The condition of sufficiency ensures an adequate capture of stochastic behavior exhibited by both dosimeters and the system as a whole. Once a set of \emph{sufficient} statistics have been identified describing inherent variability of both the system and thermoluminescent material, then further abstraction is possible by investigating variability enjoined within a dosimetry system in general. Generalization of so crucial a topic as system variability enables formulating both a comprehensive understanding and subsequent means for controlling system\textendash wide error, including influences from readers, materials, methods, procedures... from confounding either standardization or calibration. 

Even after an investigation through sampled statistics, it is found necessary to resort to more powerful methods of analysis, namely, Fourier analysis and power spectrum analysis of autocorrelograms \cite{Bracewell,Bendat,Papoulis}. It is by these means that the true degradation of the devices is identified and is found to be no more than the normal fading process or, at least, no different from... Once it was realized the majority of the degradation witnessed in the reliability data was due to drift noise, the analysis took another turn, into the practices commonly found amongst radiation dosimetry, in order to identify the cause for consternation regarding these devices, also, for common nagging problems faced by radiation dosimetry institutions regarding calibration and continued accreditation. 

It is from this point of the program that a decidedly different turn in the investigative path was undertaken, where the discussion focuses more on explaining the nature of variability in complex systems, the concept of stability and the proper statistical calculus and logic to apply for an accurate and precise description of a radiation dosimetry system. Along the way, at various point during the discussion, common misperceptions are pointed and and corrected, finally, ending with a discussion of the \emph{internal calibration} method and why this calibration scheme is suitable for radiation dosimetry.   

\section{Dosimetry systems analysis}
A dosimetry system contains many components, resulting in a very complex system altogether. Regardless of the level of complexity, all operations result in administering a single scalar quantity, the so\textendash called 'reading'. In addition to representing the magnitude of absorbed dose, the reading simultaneously inheres all random processes enjoined within the system, including processes identified as deterministic. The deterministic portion of the system is, of course, the intended purpose of the system, specifically, estimating the amount (dose) of ionizing radiation absorbed by a medium of choice. 

Thermoluminescent dosimeters are a type of radiation dosimeter used to measure ionizing radiation exposure by measuring the amount of visible light emitted from some suitable thermoluminescent crystal embedded in the dosimeter. One type of thermoluminescent crystal in common practice is lithium fluoride crystals ground into a powder, then mixed with various dopant materials, which are then compressed under high pressure to form heterogeneous wafers, e.g. magnesium, copper and phosphorus are used to dope lithium fluoride crystals comprising TLD\textendash 700H (LiF:Mg,Cu,P) dosimeters \cite{Moscovith,Ha}. Tremendous heat is created during compression and initiates bulk diffusion of added dopant material into the thermoluminescent crystal particles; referred to as diffusion bonding, pressure bonding, thermo-compression welding or solid-state welding. Indiffusing dopant materials produces active trap sites throughout a chosen thermoluminescent crystal, thereby, generating the ability to absorb or 'record' an exposure to ionizing radiation. The 'memory' held by a thermoluminescent dosimeter is admitted by way of fluorescence, whose admission is caused by heating the material to elevated temperatures; thus, constituting a 'read' or measurement of the magnitude of ionizing radiation earlier exposed. 

The physical processes occurring within thermoluminescent crystals that enable 'recording' ionizing radiation exposures, also, retention of that recording (memory) over some length of time are complex, to say the least; summarily, as a result of interacting with ionizing radiation, electrons are 'excited' or ionized to higher energy states in the crystal's conduction band, where they are 'held in place' for some period of time, until such time requested, are released from their excited state by elevating the temperature of the crystal. The aspects governing either the ease electrons are excited or the density of such active trap sites range from the type and concentration of dopant materials introduced to the crystal, the configuration of the resulting conduction band and many more particulars; but, essentially, the exact nature of this quantum mechanical process is completely immaterial from the perspective of this program, for all that is required for discussion in this program is possession of some material able to record such exposures. 

The specific dosimeter entertained in this program, TLD\textendash 700H dosimeters, possess particularly desirous properties, including low fading characteristics, roughly 5\% loss \emph{per} annum; but, such characteristics can and do vary for differing materials. The retention or memory of a material is by no means a trivial issue, for if a material should happen to remit any memory of an earlier exposure too readily and without request, then both the utility of the material and faithfulness for any dose estimate generated would be greatly compromised. Questions regarding both the sensitivity and the 'faithfulness' of memory that a material exhibits affects the perception of some ionizing radiation source a dosimeter should chance be exposed to; also, obviously, if a material chosen for a dosimeter possess low sensitivity, then so also the measurement for the exposure, moreover, if a material's memory fades too rapidly over time, then equally does the perception of strength for that earlier exposure to ionizing radiation \underline{fade} over time. The issue of sensitivity is easily dispensed with by calibrating the dosimeter's response to some known standard, which is simply some traceable radiation source; nevertheless, the issue of fading is another matter altogether.

Standards are administered by various national and international bodies, but summarily enforce metrological traceability for radiation dosimetry systems. Besides sensitivity issues, questions regarding diminution by fading are addressed through arithmetic means, where suitable adjustment can be made by assuming the material's memory fades at some regular rate and knowing the duration of time from exposure to reading the dosimeter. If the period of time from exposure to determination be not known, then one is indelibly forced into approximating the potential loss through fading. Virially, all attempts at compensating for fading is approximate at best, for it is generally not known when an exposure occurred nor the duration of that exposure to some ionizing radiation source; moreover, the strength of the source is also unknown, hence, if the dosimeter has been exposed to multiple sources, over several disparate times, then the resultant dose absorbed by the dosimeter is a complex function of several impulses convolved in time with an exponentially decaying function for fading. In addition to fading issues, any determination of a dose should account for the naturally occurring background radiation, which does vary geographically. This last correction is realized by committing a set of dosimeters the task of doing absolutely nothing but absorbing the naturally occurring radiation at some locale; after determining the strength of that source, the contribution from the naturally occurring background radiation can be subtracted from a dose of greater interest. 

The act of calibrating a dosimeter is simply to scale whatever intrinsic sensitivity enjoyed by that dosimeter to a known standard, that is, to relate the response of the dosimeter to something \underline{known}. Once calibration is accomplished, then the strength of any radiation source, \emph{per chance} exposed, can be determined by evaluating that dosimeter. Herein lies the utility of these devices, especially for their portability; because, these devices are so portable, personnel working in and around radiation may carry them about their daily duties, wearing them on their person, to be relinquished for evaluation at some later point in time. It is by continual evaluation that a dose of record is established for various personnel, keeping record of the total dose a person may have been exposed through a calendar year. Various regulations, some local and others set by national regulatory bodies, require that the total personal dose exposed should not exceed some threshold. As an example of which, the International Commission on Radiological Protection (ICRP) has established numerous recommendations regarding occupational exposure limits, where limits may vary by type of biological tissue, type of worker and separate designations for minors and cases of pregnancy \cite{ICRP,NRC}. 

It is for these reasons and many more that the accuracy of dose be as high as possible; but, the true complexity of a radiation dosimetry system cannot be fully appreciated without viewing the system's behavior over time. Over time, various changes, some small and still others dramatic, accumulate within the system and cause the system to drift away from some intended or perceived calibration. A plethora of sources exist for system variability and contemplation constitutes a systems analysis proper. 

\subsection{Dosimeter response} The response of a typical dosimeter is a rather complex summation of several responses, where several independent responses are remitted from several channels found within a typical thermoluminescent dosimeter material. After activating fluorescence of a dosimeter material, photomultiplier tubes are used to read the amount of light emitted during the reading process. Fluorescence is activated by elevating the temperature of the dosimeter, typically around 100\textcelsius\ to 240\textcelsius, for some length of time, generating what is called a 'glow curve', whereupon, detectors read the flux of light emitted. What constitutes a 'response' from each channel is when a channel happens to alight as the temperature sweeps through ambient to some elevated temperature, then finally allowed to relax back to ambient temperature. The particular temperature cycle is arbitrary and, once decided upon, is rarely deviated from, also, such profiles have attained the moniker of the typical temperature profile (TTP). As a dosimeter sweeps through the TTP, each channel alights or activates, that is to say, the thermal agitation is large enough to cause electrons trapped within the crystal lattice structure to drop to some lower energy level, translating to an emission of a photon of equal energy as that represented by the drop in energy from the electron. There are several different energy levels, from weak to strong, and as the temperature increases, so the successive activation of each channel in the crystal; hence, the emission profile for thermoluminescent dosimeters are comprised of several overlapping Gaussian peaks, where there is a distribution of energies contained in each channel. 

For LiF:Mg,Cu,P thermoluminescent dosimeters, which are, for the purposes of this monograph only an exemplar dosimeter, there are roughly six channels. The first two channels are weakly held electrons and are usually ignored during the read process, these lower energy levels are highly susceptible to fading processes and therefore generally do not accurately map exposures. The next two channels, channels three and four, constitute the primary channels relied upon for typical dose estimates. Finally, channels five and six are extremely deep and therefore represent considerable energy levels, these deeper channels are usually ignored during most reads, for they require temperatures far in excess of the typical temperature profile used in industry, also, excessively high temperatures only further accelerate and degrade material, shorting their lifespan. 

Obviously, much discussion can be made over specific characteristics of the resultant profile that constitutes a read, the nature of the Gaussian peaks, their widths, how many peaks comprise the total resultant peak, the energy of received photons and how they correlate to the energies of the electrons released, one could attempt to correlate the energies back to the electrons and then to the ionizing particles that caused the initial activation of the electron, \&c. Such exercises are not in vain, in general; but, it is decidedly not necessary for the purposes of this monograph. 

The topic of dosimeter responses can be greatly simplified by observing that what constitutes a reading is simply a scalar quantity called the 'read', where a read is acquired by integrating over the total light emitted by a dosimeter during the TTP cycle. Thus, upon each reading of a dosimeter, the state of the entire system is at once projected onto that reading; furthermore, inseparably enjoined within each reading are all processes both deterministic and stochastic, that is to say, each reading constitutes a window through which an entire dosimetry system is viewed. Put another way, each dosimeter should be considered a separate and unique detector, not only in the sense of detecting exposure to ionizing radiation, but also in the sense of detecting all modulations of a dosimetry system; because, the reading is also not only a function of the exposure, but the particulars of the system used to read that dosimeter, such as the TTP, the sensitivity of the photomultiplier tubes, \&c. 

If all aspects of a dosimetry system should remain constant in time, then once a dosimeter has been calibrated, that act of calibration should be sufficient for all time. Unfortunately, many researchers have demonstrated and reported degradation in sensitivity for thermoluminescent materials, especially after repetitive use \cite{ShVoss}. Obviously, if dosimeters are an essential element upholding the entire structure that is dosimetry, then concerns over the reliability of these devices are critical to administrators and operators; albeit, concerns surrounding dosimeters alone are certainly not the whole of the problem being faced everyday by dosimetry systems. 

Entwined with any question regarding reliability are presently long-standing questions regarding the destruction and possible recovery of active trap sites. In fact, considerable research has been devoted to studying and characterizing thermoluminescent materials, the durability of the trap sites and whether or not spontaneous recovery of lost trap sites in not a possibility \cite{Bos}. Ultimately, the issues just raised are based on the fact that dosimeters experience elevated temperatures frequently and that thermal agitation and continued indiffusion of dopant materials could quench active trap sites or possibly cause new active trap sites to form. Temperatures achieved during compressional bonding cause activation of line defects formed along the outer boundary of each particle of lithium fluoride crystal contained within the resulting heterogeneous wafer; thus, the migration of dopant materials into the lithium fluoride crystals, along each boundary creates a p-n type junction, enabling activation of trap sites that hold the memory of ionized electrons caused during exposure to bombarding ionizing radiation particles. Additionally, elevated temperatures achieved each time a dosimeter is either read or annealed are in no way different from the compressional bonding process; thus, further indiffusing dopant materials and over time eventually bleeding out any sharp junction required along each boundary line defect responsible for activation. Eventually, the continued indiffusion will erase all sharp boundary defects and ultimately quench the very process relied upon for recording and measuring ionizing radiation exposures.

To answer some of these questions, a detailed reliability study is in order, which attempts to discern and characterize any performance decrements experienced by applying stress and strains on a device of interest. Such a study would consist of repetitively exposing dosimeters to some constant expected dose, thereby, any decrement in response would be deciphered as a 'loss' in sensitivity for the device. Sample records for ten dosimeters repetitively exposed to 1 milliseiverts (mSv), based on a standardized strontium (Sr-90) radioactive source, subsequently read, would generate a dosimetric time series. These devices admit four possible channels, which differ from channel spoken in the above text, the channel spoken of now refers to four chips each dosimeter is fitted with, containing thermoluminescent material of varying thickness and covered by differing materials to facilitate absorption of neutrons and other radiation particles. It is by suitable division, ratios are taken to produce dose estimates for various types of radiation, which are then used to facilitate mapping from dose to equivalent dose. Since the issue of equivalent dose is not the concern of this monograph, data from a single channel, channel 1 (chip 1), will more than suffice for purposes of analysis. To that end, reliability data for TLD\textendash 700H (LiF:Mg,Cu,P) dosimeters is reproduced in Figure (\ref{figure1}) \cite{ShVoss}.

\begin{figure}[t]
\centering
 \includegraphics{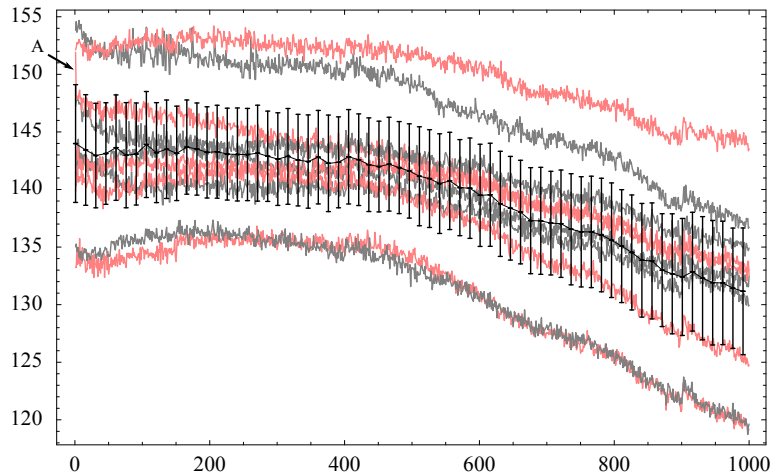}
\caption{\textbf{[Ensemble]} Dosimetric sample records show decremental performance as a function of reuse. The ensemble consists of the measured channel 1 response of ten TLD\textendash 700H dosimeters (alternating light grey and red colored lines) to repeated constant exposures (1 mSv). The ensemble mean (black line) and 95\% \emph{t-}distribution confidence interval (black vertical lines) has been superimposed at regular intervals. Abscissa marks the read number in the time series ($n=1,\ldots,1000$) and the ordinate the magnitude in nanocoulombs (nC).}
\label{figure1}
\end{figure}

Calibration routines were performed daily in an effort to minimize variances attributable to the reader, internal radiation source and other components of the dosimetry system; with the expressed purpose of trying to isolate the behavior of thermoluminescent materials apart from other system processes. Immediately following system calibration, fifty consecutive exposure-read cycles were completed to form one experimental block for each dosimeter. Each experimental block was repeated twenty times, spanning a period of approximately two weeks, culminating in a total of one thousand reads for each dosimeter tested on a single Harshaw reader. 

A concerted effort was made to tamp down any systematic variation during the reliability study; nevertheless, discontinuous jumps are detectable at certain points along the time series, specifically, read number 900 clearly shows a marked shift in response common to all dosimeters, i.e. more than likely a shift in the reader calibration occurred. In other cases, it becomes difficult to ascertain whether or not a reader shift has occurred, e.g. a shift may have occurred for read numbers 400 and 750. Although there are many potential sources for error, the most likely culprit responsible for common shifts would be the calibration for the reader on that day, where shifts in the environment, sensitivity for the photomultiplier tube (PMT) detector and other related components varied day to day. This naturally leads to statistically blocking experimental units in groups of fifty, segmented by reader calibrations; thus, conditions are assumed homogeneous within each experimental block.   

There is certainly reader drift experienced throughout the entire time-line; but, determining whether or not a cumulative drift in the calibration for the reader is responsible for any apparent pattern in performance for each dosimetry sample record is another matter all together. 

The fact that the responses from all ten dosimeters rise and decline in tandem certainly implies some underlying pattern may exist, common to all dosimeters; but, one should be reluctant to draw any conclusion from perceived 'patterns' or 'trends' between experimental blocks. The ensemble shown is rather ideal, similar blocks of replicate dosimetry sample records have shown dramatic shifts between experimental blocks, where shifts 30\% or greater have been observed \cite{ShVoss}. Statistical significance between blocks could easily be drawn with standard statistical tests, especially if a 30\% drift occurred due to variances in calibration; but, such results only emphasis variable influence from many potential confounding factors and are not specific to thermoluminescent material performance.

There are many such confounding factors that could be mentioned, not the least of which are the thermoluminescent dosimeters. Many researchers speculate that during the heat cycle, thermoluminescent materials could experience not only a decrease but an increase in sensitivity, due to reactivation of trap sites \cite{Bos,Horowitz}. If true, this property could partly explain the behavior exhibited by the time histories shown in Figure (\ref{figure1}), where the apparent sensitivity randomly oscillates from read to read and this for a single dosimeter. But this is by far not the only confounding factor, random variability exists in all components comprising a dosimetry system, e.g. PMT responses, humidity, performance of burners, the interval of integration for glow curves, \&c. The last example emphasizes variability is not relegated to physical components alone; but, also procedural, operational and other abstractions.

Naturally, some means is sought by which to characterize the data and describe thermoluminescent material performance, typically some statistical means is sought; but, there exists many possible configurations by which one could generate statistics and, thereby, generate a family of experimental spaces with which to describe and capture any salient feature perceived or otherwise within the data. The amount of data, just with ten dosimeters, is rather large, with one thousand data points per sample record, that amounts to some ten thousand data points. In addition, there exists many configurations, that is, many ways in which statistics may be drawn and combined, such as forming a set of all ten readings for each discrete time point along the series, called ensemble statistics, or by taking a length of an individual sample record and forming statistics from that set, generally referred to as time-average statistics, lastly, it is also possible to take combinations of either statistic and form further complications thereafter.  

Depending on the choice taken for how exactly statistics will be drawn, a resulting experimental space is formed by each choice and therein lies the issue concerning statistical calculus, for it is not enough to just form a statistic, either an average or variance; but, to form statistics that are \emph{sufficient} in their description. The condition of \emph{sufficiency} can be described in many ways, but for now, it refers to a statistic completely independent of all influences save that parameter of interest \cite{Papoulis,Bendat,Hogg}. 

\subsection{Ensemble statistics}
It is very common to na\"{i}vely draw statistics from a set of singular dosimeter responses without much thought being applied to what exactly any derived statistic truly represents. Implicit to any statistic drawn are a set of conditions unique to statistical theory and, even though the ability to form a statistic may exist, it does not necessarily follow all underlying principles are simultaneously satisfied. It is imperative when applying statistical theory to ensure all relevant conditions are satisfied, if meaningful statistics are desired. 

It is instructive to illustrate the significance of ensemble statistics applied to dosimeter responses and the absence of \emph{sufficiency} by way of allegory. Consider a set of mass\textendash spectrometers: each machine is unique and separate from one another, a certain amount of randomness exists concerning each machine's ability to reproduce an experiment and, finally, each device is uniquely deterministic. Now, focusing on one machine, a repeated trial of experiments will generate a set of values possessing the necessary condition of randomness to allow adequate statistical representation, in fact, in the face of randomness, no other adequate description is afforded but statistics. In contrast to the inherent stochastic nature of these machines, one may collect results given by a set of mass\textendash spectrometers, thereby, do a mixing of deterministic properties occur, subsequently, one may immediately generate a statistic; but, such a na\"{i}ve point estimate fails to adequately describe what is assumed in the mind. Statistics drawn over a deterministic set of values does not produce meaningful results, for consider the likely results, the average value will describe the mean determinism of the machines, but no one would accept that average value as the 'likely' outcome for any one of the machines. The mean should represent the expected value, but each machine enjoys a unique deterministic property, independent of the others. It can be said the point estimate would represent the 'average' over that particular set of mass\textendash spectrometers, but the utility of this statistic is very limited; moreover, in order to reproduce the expected outcome, all machines must be used, for the absence of any one of the machines would produce a new statistic, hence, all that has been created is a statistic that is dependent on the particular configuration of machines and not their underlying stochastic behavior. In like manner, repeated trials with one and the same dosimeter generates a time series representing the ability of the device to reproduce expected results; but, by forming an ensemble of readings from a set of disjoint dosimeters, do we form a mixing of deterministic properties, whose statistic may very well mislead.   

To facilitate discussion and mathematical manipulation, let each sample record be represented by the response function $R(t)$, whose independent variable $t$ is a measure of time, marked by the number of reuses a dosimeter has undergone; furthermore, we may identify each sample record by indexing, thus $R_k(t)$. An ensemble average is equivalent to forming a quotient subspace, that is, all sample points are being surjectively mapped onto one single value, the mean. The ensemble mean ($\mu_e$) and variance ($\sigma_e^2$) for some discrete time point ($t_0$) in the series are defined as such:
\begin{equation}
 \mu_e(t_0)=\frac{1}{k}\sum_{i=1}^{k}R_i(t_0);\ \sigma_e^2(t_0)=\frac{1}{k-1}\sum_{i=1}^{k}\bigg(R_i(t_0)-\mu_e(t_0)\bigg)^2, 
\end{equation}
where each summation is over $k$ individual dosimeters comprising an ensemble. 

Since the mean alone does not tell one the dispersion of the original sample space, the variance is calculated to accompany the mean value, this leads to what is referred to as an uncertain number or a number possessing uncertainty, written thus $(\mu\pm\sigma)$, where the standard deviation ($\sigma$) is both added or subtracted from the mean ($\mu$) to generate the spread of values contained in the original space. If statistics are drawn from a truly random set, then once both point estimates have been calculated, the mean and standard deviation, the original set of values may be completely dispensed with, in fact, with both point estimates in hand, one may generate a set of values to represent the original random set and no real significant difference would exist between both sets \textendash\ both the original set and the fabricated set. 

Consider a set of responses, $R_k(t)$, with $k$ equal to $\{1,2,3,\ldots,k\}$ independent dosimeters. By combining their responses in a set and drawing statistics, therefrom, we implicitly form a Cartesian product space to create the experimental space $\mathscr{E}$, \emph{viz}.:

\begin{equation}
 \mathscr{E}=P(R_1)\times P(R_2)\times P(R_3)\times\cdots\times P(R_k),
\end{equation}
where the resulting experimental space $\mathscr{E}$ is comprised of the probability $P$ of each event. 

Since statistical calculus and theory is based upon pure random processes, the resulting probability space can be thought of as $k$ equally probable outcomes; thus, assigning a probability $R_i/k$ to each reading. The mean or expected outcome is the sum of all probabilities, which can alternatively be thought of as simply dividing the sum of all readings by the cardinality of the set, i.e. $k$. This operation is akin to treating a collection of disjoint dosimeter readings as a $k$-sided equiprobable die, with probability:
\begin{equation}
 P(R_i)=\frac{R_i}{k},
\end{equation}
where there are $k$ records combined into a single set.

It is by these means that an \emph{abstract} dosimeter is created: by joining together a set of sensitivity measurements retrieved from disjoint dosimeters and then treating that set as if it described a random process, where each reading is implicitly considered equally likely; hence, the mean and dispersion drawn from the set describe a certain \emph{virtual} dosimeter, if you will, for it is a dosimeter not in anyone's possession, but is effectively created by unwittingly treating something that is essentially deterministic, the intrinsic property of sensitivity, as being random. It is certainly possible to consider the sensitivity each dosimeter may enjoy as pure happenstance, in fact, it is quite the case that each dosimeter will inhere some intrinsic sensitivity from manufacture, it can certainly be considered purely random in nature; nevertheless, once established, becomes intrinsic and unique to that dosimeter, i.e. deterministic in nature. 

As in the case of collecting various readings from several mass\textendash spectrometers, no one would argue that the weighted mean of all such outcomes is any indication for the likely reading for any single mass\textendash spectrometer. In like manner, dosimeter readings represent unique sensitivities for each dosimeter in question and the ensemble mean is no indication of the likely outcome for any specific dosimeter; in fact, the specific sensitivity of a given dosimeter is only confounded when placed in aggregate of several readings obtained from disjoint dosimeters. Yet, it is common practice for dosimetry operators to use such incorrect statistical calculus in describing the ''average'' performance for dosimeter readers, dosimeters or other dosimetry system components. 

Consider the ensemble of initial reads for all ten dosimeters, where the ensemble mean is calculated roughly as 144 nanocoulombs (nC) and ensemble variance of 51 nC. Obviously, it is implicitly assumed the sample population from which these statistics are drawn would be normally distributed; but, careful inspection of Figure (\ref{figure1}) should prove that not one dosimeter admits a reading of 144 nC at the outset of the time series nor does the 95\% confidence interval, based on a \emph{t}\textendash distribution, cover all the readings actually measured. The ensemble mean calculated does not represent an expected outcome, contrary, it represents only the combined arithmetic average of $k$ specific dosimeters. If the set of $k$ dosimeters is replaced with $k$ new dosimeters, the calculated ensemble mean would differ. In fact, there is absolutely no certainty that a new dosimeter would possess an intrinsic sensitivity anywhere close to the ensemble mean of 144 nC, nor is there any confidence its reading would fall within the interval $(144\pm 51)$ nC. To just be explicit, the ensemble mean just calculated is completely independent of any other dosimeter; moreover, describes a certain \emph{virtual} dosimeter, which could never be produced upon demand, but only exists as an aggregate of specific dosimeters. It is by these methods that many operators unwittingly fetter a dosimetry system to a particular set of dosimeters, thereby, to all vagrancies and whims of that set. 

Despite the failures of ensemble statistics, there is utility in this form of data reduction, mainly, in succinctly describing the overall behavior of an ensemble of such dosimeters and their respective readings, also, facilitating ease in comparison from different times, readers, \&c. With the data in hand, an ensemble loss in sensitivity is calculated to be approximately $(-9.24\%\pm 3.09\%)$; relative to the outset. For an overall percent loss, reported for sensitivity, the ensemble mean response of the last experimental trial, 1000th read, is compared to that of the initial ensemble mean response. This provides a succinct method of describing the range of values measured and what possible change in sensitivity can be experienced by dosimeters after one thousand repeated trials. In like manner, sensitivity losses were measured for the remaining channels, channels 2 through 4 were -23.77\%, -30.42\% and -8.90\%, relative to the outset \cite{ShVoss}. 

A rather significant loss in sensitivity is experienced by channels 2 \& 3, around 30\%; moreover, the differential loss experienced by each channel will also have a significant influence when calculating equivalent dose, for equivalent dose is calculated by suitable ratios between measured doses from each respective channel (or chip). If the relative loss from each channel is inequivalent, then it stands to reason calculated \emph{equivalent} doses would suffer the same.  

Setting aside, for the moment, the important issue raised by dramatic differences in sensitivity loss experienced by each channel, including questions regarding why one channel would degrade considerably more than another, we turn attention to the variation of responses exhibited, the ensemble variance appears to hold constant over time, indicated by the vertical black lines in Figure (\ref{figure1}), where the variance over ten responses is calculated for each discrete step in time. The relative standard error starts at around 4.71\%, decreases to roughly 3.88\% for read number 300 and then increases to 5.85\% at the end of the time series. These particular changes in the ensemble mean and variance could constitute a 'major change'; but, such conclusions are largely dependent on subjective constraints and accepted operational tolerances for a particular dosimetry system, nevertheless, an overview of the calculated results show a definite time dependence for both the ensemble mean and variance. 

The condition for \emph{sufficiency} fails in forming ensemble statistics for disjoint dosimeters, which can be proved by many means. The easiest means by which this fact may be proved is to notice that both the ensemble mean and variance is a function of time; therefore, the ensemble statistics prove time dependent and constitutes a multi\textendash parameter mapping. Additionally, such ensemble statistics prove dependent on the specific dosimeters, reader and other components of the system employed to generate those readings. The condition for \emph{sufficiency} in statistics is often overlooked, but it is quite crucial for generating statistics that adequately describe the true underlying stochastic nature of some dynamic system. If dependency on unintentional parameters exists for a statistic, it a clear indication that the experimental or probability space was improperly formed.  

Ultimately, the information sought from such a reliability study is whether or not thermoluminescent dosimeters are experiencing sensitivity loss and, if so, whether or not the process responsible for degradation is unique to each dosimeter. If it could be shown that all dosimeters experience a similar degradation rate, then it would be possible to claim the degradation process ergodic; thus, opening up the possibility for a mathematical model to adequately forecast and predict losses in sensitivity; much the same is done with regard to radioactive decay. The fact that losses in sensitivity are occurring can be shown with ensemble statistics, but that's about the gist of what can be accomplished with such simple statistics. Ensemble statistics prove inadequate in investigating many of the questions regarding the characteristics of thermoluminescent dosimeters.

\subsection{Sampled statistics}
It is deceptively easy to forget the factor of time and its influence upon a given system's behavior when considering statistics, but all experiments are performed over time, hence, time is invariably always a factor in any dynamic process. There are examples of processes where the underlying process is time invariant, for example, flipping a fair coin; because, the probability for either a head or tail remains constant throughout time, one may effectively ignore time as a factor. In general, though, most dynamic processes do change in time, thus, these changes can be monitored through recorded time histories. Because of \emph{causality} and time variance, samples taken at discrete times will generate, over time, a time series, where the inherent stability of the system under study is imprinted on each reading and may become evident over long enough vigil. Depending on the inherent stability for a system under study, a time series generated would exhibit some degree of predictability or the lack thereof.  
\begin{figure}[t]
\centering
 \includegraphics{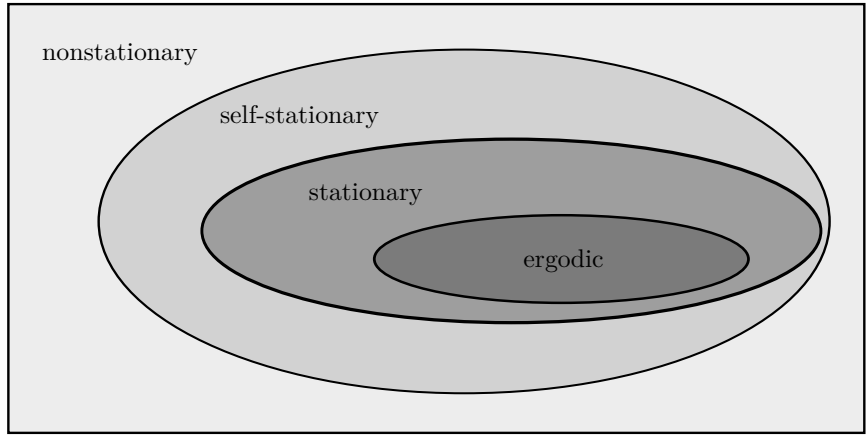}
\caption{\textbf{[Stability]} Graphical display of ascending degrees of instability, where ergodic is the most stable, then stationary (strongly stationary), then self\textendash stationary (weakly stationary), finally, the most unpredictable category, nonstationary, which generally admits no mathematical description.}
\label{stability}
\end{figure}

Figure (\ref{stability}) depicts various commonly defined degrees of stability as a descending series of subspaces, starting from the widest possible definition for instability, nonstationary, then descending to self\textendash stationary, stationary, then finally, ergodic processes, where ergodicity denotes a stochastic system exhibiting constant time averaged moments throughout time and over all ensembles. Despite appearances, all categories of stability represent a random process; hence, how is it that a sense of \emph{predictability} can be spoken of regarding what is essentially random and, by definition, devoid of any deterministic description? Deciding a process random or not delineates how the system may perform; but, once some random process has been identified, the next question to be asked is how stable is this random process, where a system may exhibit wildly changing properties over time, becoming more erratic and then quieting at a later time. 

The various definitions for stability are more a measure of how the random process may change or behave over time. In the case of ergodicity, such systems exhibit a high degree of predictability, because all time averaged moments exhibit constancy over time; thus, on average, one may predict with a high degree of confidence how the system will behave. Stationary processes exhibit a constant time averaged mean and autocorrelation, further, all higher time averaged moments could prove time dependent or not. Ascending to the next higher degree of uncertainty and instability for systems is the condition of self\textendash stationarity, which is a state where the system is not stationary \emph{in the strict sense}, but does exhibit some degree of stability at least in short intervals of time. Lastly, the most unpredictable state, nonstationary processes, usually admit no mathematical description; in other words, the system changes so erratically over time and does not appear to adhere to any sort of pattern, thus preventing any description.  Since nonstationary processes cannot be modeled, then it is equally not possible to forecast such processes; hence, nonstationary processes engender randomness in the widest most sense \cite{Papoulis,Bendat}.

Many authors further bisect the definition of stationary into two finer classifications, namely, \emph{weakly} and \emph{strongly} stationary. Strongly stationary refers to the condition of exhibiting a constant time averaged mean and autocorrelation in addition to possibly higher moments being constant; conversely, weakly stationary systems show a constant mean value, but the autocorrelation is a function of time. All time averaged moments prove constant in the case of ergodicity, therefore satisfying the condition for being \emph{strongly} stationary or stationary \emph{in the strict sense}; but, in addition, ergodic processes prove to possess the same qualities over an ensemble, conversely, stationary processes may not prove constant over an ensemble. All such general schema of categorizing differing degrees of structure or stability are created in an attempt to give a sense of how erratic a dynamic system is over time. As it were, if some degree of structure exists for a stochastic process, then there may exist a certain degree of predictability with which the system could be understood, enabling further control by modification or compensation. To put all of this into some kind of perspective: many stochastic processes exhibit structure and permit high predictability, for example, radioactive decay is essentially a stochastic process, but is highly structured, being classified a \emph{strongly} stationary process; thus, it is possible to predict with a high degree of confidence what strength of decay will be exhibited at some future time. 

Dynamic systems can change in time and, most commonly or naturally, the parameter of time is used to measure that change; but, time is not the only possible parameter to wit a dynamic system may chance be dependent. Monitoring a system and recording its behavior at regular intervals of time generates a time series, but several independent time histories can be generated by suitable modification. In the present case, each thermoluminescent dosimeter constitutes a unique view through which the entire dosimetry system is viewed, recorded and monitored; moreover, additional confounding factors are at one's disposal with which a dosimetry system's phase changes can be monitored, e.g. employing different readers, modifying the typical temperature profile, adjusting voltages for photomultiplier tubes, \&c. 

Time averages, also \emph{sampled statistics}, reveal much about a stochastic process; but, in addition one may consider such time averages in aggregate by forming ensemble statistics over a set of time averaged statistics, where such complexity in calculus reveals questions regarding the overall stability of a system. Much the same was done in the case of what was referred to as \emph{ensemble} statistics above. In general, such ensemble statistics are an aggregate of sampled statistics; but, in the case described above, the set from which ensemble statistics were drawn are comprised only from disjoint point values and not averages, \emph{per se}. Because there are several possible ways in which to combine statistics drawn from a time series, it is necessary to specify what operations are accomplished and in what order. 

To facilitate discussing sample statistics, both the mean and variance can be sampled from a sample record, where the period can equal the total length of the sample record, otherwise, for periods less than the record length, one may shift through the entire time series recorded to yield a plethora of statistics, \emph{viz}.:
\begin{equation}
 \mu_k(t)=\frac{1}{\mathrm{T}}\sum_{i=t}^{t+\mathrm{T}}R_k(i);\ \sigma_k^2(t)=\frac{1}{\mathrm{T}-1}\sum_{i=t}^{t+\mathrm{T}}\bigg(R_k(i)-\mu_k(t)\bigg)^2,
\end{equation}
where each statistic applies to dosimeter $k$ over period $\mathrm{T}$ anchored at time $t$.

To reiterate: not only can the period be shifted through time $t$, but further formulation is possible by taking ensemble statistics over the sampled statistics; hence, averaging over $k$ dosimeters; moreover, there is no reason that statistics taken at different times cannot be further complicated and compared. 

With the present set of datum, a sequence of running means may be taken and compared; thus, besides the sequence of means following the general trend of the respective dosimeter the sequence happens to be drawn from, one may surmise that the ''slow'' change in the average over time constitutes 'self\textendash stationary' process. The claim of self\textendash stationarity is largely subjective, also, it is somewhat redundant given that by inspection of Figure (\ref{figure1}), one can easily surmise the average consecutive change for reported doses is rather level, that is, looking at one single trace; but, the reported values obviously do decline over time. Therefore, it is undeniable the datum of doses represent a nonstationary process, indeed, and it is further impossible to claim that the process is common to all dosimeters. 

The claim each dosimeter degrades at some unique rate is possible by averaging the difference quotient over the entire series for each respective dosimeter, then comparing each calculated failure rate to one another to discern whether or not there is a statistical difference between each respective rate of decline in sensitivity. 

The difference quotient is based upon assuming the overall process is related to some exponential process, \emph{viz}.:
\begin{equation}
 R(t)\sim 1-e^{-\lambda t}; \Delta\circ R_k(t)=\frac{R_k(t+1)-R_k(t)}{R_k(t)}\equiv \lambda_k,
\end{equation}
where the difference quotient for dose reading $R_k(t)$ should reveal the failure rate $\lambda_k$ for the series investigated; moreover, the index $k$ represents the usual \textendash\ a particular dosimeter. 

From the ten sample records in hand, a sequence of failure rate estimates can be made for each of the ten dosimeters by averaging the set of 999 ($N=1000-1$) difference quotients for each dosimeter; thus, generating a list of ten failure rate estimates, \emph{viz}.:
$$\lambda_i\approx\{0.000107573, 0.000101579, \ldots, 0.000129023\}$$

These values can be tested for a statistical difference by way of either a one\textendash way ANOVA or a chi\textendash square test. In either case, care must be taken as to what exactly the respective means are tested against \cite{Ostle,Bendat}. 

In the case of an ANOVA test, each calculated failure rate is tested against the average failure rate, \emph{viz}.:
\begin{equation}
 F=\frac{N\displaystyle\sum_{i=1}^{c}(\lambda_i-\mu)^2/(c-1)}{\hat{\sigma}^2}\simeq 12.8307 > 2.74 = F_{(0.995)(8,10\times 999-2)},
\end{equation}
where the sum is over ten $c$ classes, the average variance ($\hat{\sigma}^2$), which is the same for the expected mean ($\mu$), is calculated by taking the average of the failure rates; because, for a Poisson process, the mean and variance are the same figures, i.e. $\mu=\lambda t,\sigma^2=\lambda t$. So, after multiplying each estimate for the failure rate by $N$ or $t=999$, the calculation for the F\textendash ratio test statistic is approximately 12.8307, which is larger than the chosen critical statistic: $F_{(1-\alpha/2)(v_1,v_2)}$. The critical statistics is based on a 99.5\% confident two\textendash sided tolerance (significance level $\alpha=0.01$), with degrees of freedom $v_1=8$, for we must use $c-p-1=8$ degrees of freedom ($p$ being the number of parameters calculated, for Poisson processes, $p=1$), then for $v_2$, the product $10\times 999-2$ is used for $c$ classes times that many samples used to generate the approximate failure rates, minus the appropriate reduction in degrees of freedom to compensate for uncertainty. Thus, the hypothesis, $\mathrm{H}_0\textbf{:}\mu_1=\mu_2=\cdots=\mu_c$, that all these failure rates represent one and the same failure rate must be rejected with 99.5\% confidence. 

Another way to test the hypothesis, $\mathrm{H}_0\textbf{:}\mu_1=\mu_2=\cdots=\mu_c$, is \emph{via} the chi\textendash square test for goodness of fit. In this case, we are testing each mean against the cumulative Poisson mean, $\lambda_\mathrm{t}=\lambda_1+\lambda_2+\cdots+\lambda_c$, because the chi\textendash square statistic is the linear sum of $c$ independent random variables, $\chi^2=\chi_1^2+\chi_2^2+\cdots+\chi_c^2$; thus, we form the statistic from the linear sum:
\begin{equation}
 \chi^2=\frac{\displaystyle\sum_{i=1}^c{(\lambda_i-\lambda_\mathrm{t})^2}}{\hat{\sigma}^2}\simeq 71.55 > 22.0 = \chi^2_{(0.995)(8)},
\end{equation}
remembering to multiply each failure rate by time ($t=999$), for it is imperative the statistic be raised from such small values. Once again, the hypothesis is rejected with 99.5\% confidence (two\textendash sided tolerance, $\alpha=0.01$, $c-p-1$ degrees of freedom) and each measured failure rate represents a statistically unique rate. 

By either means of testing, the hypothesis that all the failure rates represent one and the same failure rate must be rejected; thus, it seems that none of the dosimeters are experiencing the same rate of decay in sensitivity, which is curious, because it should be expected that whatever process is responsible for sensitivity loss, that it should be shared by all dosimeters. The reason for expecting a common cause for sensitivity loss, regardless of the dosimeter, is that the process responsible is surely occurring at a molecular level and even though each dosimeter does enjoy a unique bulk sensitivity, inherited from manufacture, that is irrelevant with regard to a regular process of decay; likewise, radioactive decay is constant for a particular radioactive material, but the strength of radiation emitted is proportional to the mass allocated to a particular sample. 

\begin{figure}[t]
\centering
 \includegraphics{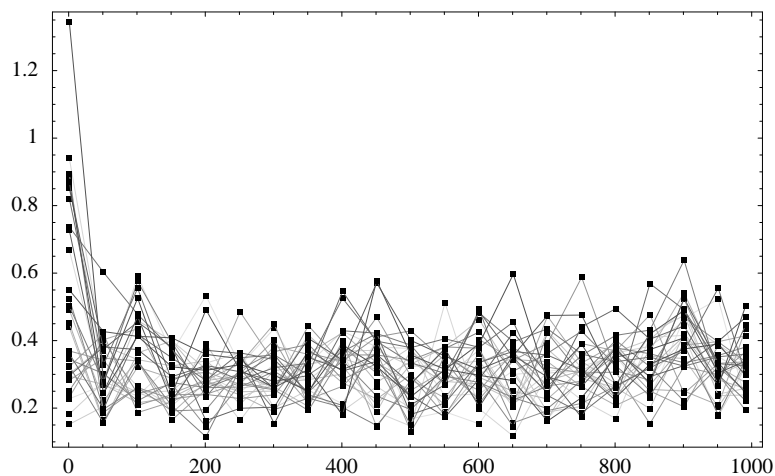}
\caption{\textbf{[Sampled variance]} Superimposed are sequences of sampled variances generated by shifting through all ten time series with a period of length eight samples. The abscissa marks the number of reads for a dosimeter ($n=1,\ldots,1000$), the ordinate is relative percent standard deviation.}
\label{Ensemble_percenterror_300}
\end{figure}

At this point in the program, the rate of degradation has been investigated through sampled statistics, also, the average mean over some period $\mathrm{T}$ provides some indication that the process is self\textendash stationary; yet, no sense of satisfaction is derived from either analyses, because experimental error obfuscates the reliability data, no identifiable common decay rate is forthcoming nor is much satisfaction derived from classifying the data as ''self\textendash stationary'', for this still means the processes affords no means of forecasting or prediction.  

Before leaving the discussion of sampled statistics, one other avenue of analysis is still left for investigation and that is an analysis of the variance. By inspecting Figure (\ref{Ensemble_percenterror_300}), it is possible to glean that the variance is relatively constant over the time series, regarding just one single sample record; but, there are some significant deviations from that rule, namely, the variance seems to drop from the outset, then settle down through the rest of the series. This initial relaxation of the variance, the worst of which is associated with the one dramatic drop in sensitivity experienced by one single dosimeter (arrowed letter A in Figure (\ref{figure1})) could be attributed to \emph{wearing\textendash in} of the dosimeter, maybe due to other consequences; but, it would seem that the sample records exhibit \emph{self\textendash stationarity} for the variance, which means specifically that the apparent stochastic behavior is somewhat time independent, also, dosimeter independent. If the average of all sampled variances for each dosimeter is taken the figure of 0.2\% relative standard error is retrieved; moreover, this figure is further shared by all dosimeters, thus, after taking the ensemble of the average variance for all ten dosimeters.

\section{Dosimeter characterization} Up to this point in the discussion, time\textendash average statistics have been used to attempt identifying whether or not there exists a regular decay process common to all dosimeters, which is the most important of the questions to be raised concerning the degradation in sensitivity. Analysis through time\textendash averaged statistics prove that each dosimeter does in fact experience decay over the time series; unfortunately, no common decay rate can be identified for the ensemble. This is a disconcerting outcome, for if sensitivity is being lost after each reuse, it is expected to be a random, ergodic process. In other words, the process responsible for sensitivity loss should be common to all, for that process should represent a molecular process and there is absolutely no reason to suspect any difference between all the dosimeters on a molecular level; albeit, differences are known to exists at a bulk level, that is to say, each dosimeter from an ensemble will exhibit a unique sensitivity, for upon formation, during manufacture, each dosimeter was enjoined with a unique number of active sites. This would be akin to having differing amounts of the same radioactive material, each aliquot would exhibit a radioactive strength proportional to the amount of mass, but all aliquots would share the same decay rate. 

Obviously, some form of analysis more powerful than just time\textendash averaged statistics is required to parse out any underlying common decay rate and, to that end, power spectral analysis techniques afford the best means by which analysis can be realized. Fourier techniques enable a far more in\textendash depth analysis; but, with the information at hand, we are going to have to pursue such analyses through a round about manner, specifically, \emph{via} the autocorrelogram. 

To begin, we form a time\textendash averaged mean by way of a running mean, then take the difference to reveal the underlying process responsible for any decay; finally, the autocorrelation of this sequence of values is taken to help identify any persistent pattern in the data. Each of these operations are succinctly described as thus:
\begin{equation}\label{comp}
 \Bigg\{\Delta_\mathrm{T'}\circ\Bigg( \frac{1}{\mathrm{T}}\prod\left(\frac{t}{\mathrm{T}}\right)\circ R_k(t)\Bigg)\Bigg\}\star\Bigg\{\Delta_\mathrm{T'}\circ\Bigg( \frac{1}{\mathrm{T}}\prod\left(\frac{t}{\mathrm{T}}\right)\circ R_k(t)\Bigg)\Bigg\},
\end{equation}
where the difference ($\Delta_\mathrm{T'}$) is taken of period $\mathrm{T'}$, the running mean using the rectangle function $\left(\prod\right)$, is taken with window width $\mathrm{T}$ and, finally, the autocorrelation ($\star$) is taken of the entire set of operations. 

One can analyze the autocorrelogram itself, but it is far easier to analyze the Fourier transform of the autocorrelogram, for many of the salient features we seek to understand are nicely separated and parsed out by the action and associated properties of Fourier transforms. For instance, the Fourier transform of an exponential process is an algebraic function, whose value at the origin is equal to the decay rate of the exponential process. Obviously, Fourier analysis can highlight any periodic processes; but, the main concern at present is to identify a regular, common decay rate shared by all dosimeters.  

Before delving into the power spectrum, it would be instructive to point out some of the expected features to aid in deciphering the subsequent power spectrum. Firstly, the Fourier transform of an autocorrelation is equal to the product of the transforms for each function involved in the autocorrelation. Taking the Fourier transform of the composite function represented in equation (\ref{comp}) yields the following power spectrum:
\begin{equation}
 \sin\left(2\pi\frac{\mathrm{T'}}{N}s\right)^2\ \mathrm{Sinc}\hspace{-2.5pt}\left(2\pi\frac{\mathrm{T}}{N}s\right)^2\overline{R}_k(s/N)\overline{R}_k(-s/N),
\end{equation}
where $s$ is the transform variable and the bar over the reading $\overline{R}_k(s)$ represents the discrete Fourier transform of the respective data. The sine function results from the transform of the difference operator, also, the Sinc function is the transform of the rectangular function and is well known, but for completeness is defined as such:
$$\mathrm{Sinc}(x)=\frac{\sin(x)}{x};$$
furthermore, multiplying the transform of a function by its reverse, $FF(-)$, ensures a positive definite power spectrum. Because the data is discrete, the length $N$ of the data is explicitly shown throughout all definitions. 

Concentrating on the transform of the readings, it is convenient to see this function is isolated within the transform codomain; hence, we may contemplate what sort of composite function comprises the readings without any interference from other operators. The theory of linear systems is well known and essentially models time\textendash dependent systems as a convolution of some systems function H with some random process $\mathcal{P}$ \cite{Papoulis,Bendat}. We can surmise as much:
\begin{equation}
 R(t)=\int{H(t-\tau)\mathcal{P}(\tau)d\tau}\subset \overline{H}\,\overline{\mathcal{P}}\equiv\overline{R}(s),
\end{equation}

The systems function H will represent all system components, including all equipment and random processes inherent in all instruments responsible for enabling a reading. To the random process $\mathcal{P}$, we will attribute whatever process is responsible for the degradation in sensitivity. To that end, there are a milieu of possible functions to wit we may represent the degradation in sensitivity, but it is common to suppose the process an exponential process, $\lambda e^{-\lambda t}$, which means the frequency in losses for active sites in the crystal will follow an exponential distribution; thus, the likelihood of a smaller loss of active sites is more frequent that events involving numerous losses. Overlooking the influence a running mean would have, the derivative of the autocorrelation of the continuous random process $\mathcal{P}$ yields the following:
\begin{equation}
  \frac{d}{dt}\mathcal{P}\star\mathcal{P}= \frac{\lambda^2}{2}e^{-2\lambda t}\subset \frac{\lambda^2}{2}\frac{\lambda}{\lambda^2+\pi^2 s},
\end{equation}
which transforms $\subset$ to the algebraic term shown to the right. It is this property of exponential processes that enable easily picking off the magnitude of the decay rate, namely, by the magnitude of the central ordinate, \emph{viz}.:
\begin{equation}
 \sim \lim_{s\rightarrow 0} \frac{\lambda^2}{2} \frac{\lambda}{\lambda^2+\pi^2 s^2}\longrightarrow \lambda/2
\end{equation}

Besides this feature, there is expected some contribution from random noise in the electronics, white noise, and other sinusoidal features, such as those attributable to the power supply for the electronics. Essentially, all such features are to be attributed to the systems function H and are interesting in themselves, but are not essential to this analysis. 

Lastly, since the autocorrelation is a positive definite function and an even function, the Fourier transform for autocorrelation function $f(x)\star f(x)$ has the following relationship to its image:
\begin{equation}
 G(s)G(-s)=4\int_{0}^{\infty}{f(x)\star f(x)\cos(2\pi xs)dx}.
\end{equation}

Figure (\ref{power}) displays the power spectrum afforded by one such dosimeter. The purpose of the running mean is to enable some means of normalizing the power spectrum, thus the units for the power spectrum are square nanocoulombs per reuse squared, divided by the total area ($\mathrm{nC}^2\,\mathrm{reuse}^{-2}\,\mathrm{Area}^{-1}$), where the square root was taken for the ordinate in Figure (\ref{power}). A running mean of period ten was found to normalize the total area of the power spectrum such that by multiplying by the first reading, the expected loss is recovered, i.e. $\sqrt{0.5}/2\times f(0)\approx 17\, \mathrm{nC}$. Normalizing the power spectrum facilitates interpretation of the amplitude for any given feature in the spectrum.

The power spectrum is rich in information, but some of the major features are the noise floor of roughly 0.35\% of the total power, which translates to about 0.53 nC for the noise floor. there is clearly visible a periodic influence of the power supply, 60 Hz, and there are a host of other harmonics visible within the power spectrum. The influence of the various harmonics rise and fall in amplitude, depending upon which sample record is used to generate the power spectrum. Regardless of any apparent harmonics imparted to the data from the recurrent and periodic manner all dosimeters were tested to generate each respective time series, these are immaterial to our main question concerning the decay in sensitivity for these devices. 
\begin{figure}[t]
\centering
 \includegraphics{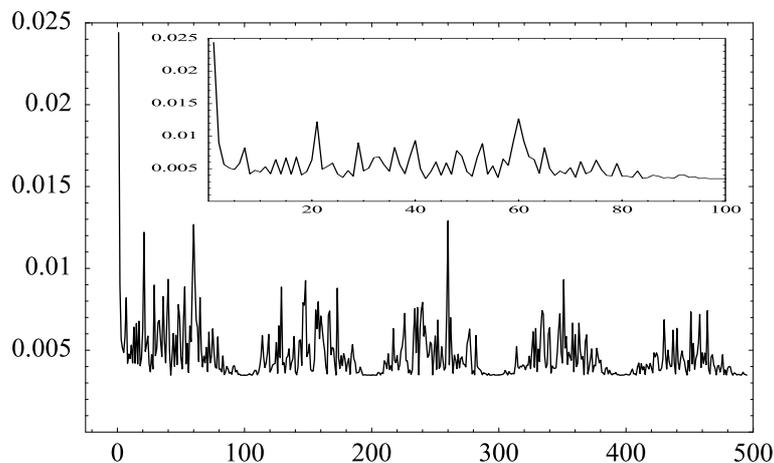}
\caption{\textbf{[Power spectrum]} Power spectrum for one dosimeter. The ordinate is in normalized units and the abscissa is in units of $\mathrm{nC}^2\,\mathrm{reuse}^{-2}\,\mathrm{Area}^{-1}$.}
\label{power}
\end{figure}

The magnitude of decay rate is found at the origin of the power spectrum, also, the magnitude is retrieved before any manipulation of the ordinate. Concentrating on the decay constant indicated by the power spectrum, the following list is generated for the same ten dosimeters displayed in Figure (\ref{figure1}):
$$\lambda_i=2\times\{0.00119126, 0.000851295, 0.000606545, \ldots, 0.000249867, 0.0012649\}$$

Once again, the question of whether or not these values constitute a statistically consistent number, both the ANOVA and chi\textendash square test are similarly performed on this set of values. The result from the ANOVA test is an F\textendash ratio test statistic of 0.1836 and a $\chi^2$ test statistic of 0.616, where both test statistics fail to meet critical magnitude; thus, the null hypothesis must be accepted and the measured decay rates for all ten dosimeters tested, as indicated by the power spectrum, are statistically equivalent at 99.5\% confidence. 

Since the estimated decay rate, averaged to be 0.0030\underline{3} for all ten dosimeters, which corresponds with roughly a 0.30\% drop in sensitivity due to some regular Poisson process of degradation, this means the majority of the loss in sensitivity witnessed in the time series is due mainly to drift noise and not to a loss in the number of active sites within the crystal. 

In fact, considering the typical fading percent quoted for thermoluminescent dosimeters, 5\% annually, an estimate of the expected loss due to normal fading over the period of time the reliability data was recorded, 21 day period, can be calculated to be roughly $\approx 0.29\%$, which is not statistically different from the measured decay rate retrieved \emph{via} the power spectrum analysis ($0.30\underline{3}\%$).

\subsection{Drift noise} It turns out, after much analysis, that the loss in sensitivity measured by repeated reuse of thermoluminescent dosimeters is largely due to drift noise and not to any accelerated loss in fading or destruction of active sites within the crystal. How could so much error accumulate \emph{via} noise propagation? Every effort was made to 'calibrate' the readers before each round of 50 reads, that is, before each statistical block; thus, obviously these efforts were in vain, for the system's sensitivity dropped some 10\% for channel 1 over the entire length of the study, about 21 days. 

The reason for the accumulation is due mainly to the normal operations of the machine used to read dosimeters, where the detector, the photomultiplier tubes, drift over time. There is a tendency for the PMTs to drift downward toward less sensitivity and this is reasonable, for normal operations would dampen the efficiency of these devices, also, these devices are known to be highly susceptible to the power supply, which was verified by the power spectrum revealing the presence of 60 Hz noise \cite{Hamamatsu}. The various harmonics revealed by the power spectrum are simply apparent patterns in the data as the dosimeters were cycled through and picked up whatever periodic noise they happen to find. 

So, what accounts for the ineffectiveness of ''calibration'' efforts? Firstly, there is a common misconception, that by applying Element Correction Factors (ECC), one can somehow 'reduce the variance' from a set of disjoint dosimeters. This misconception stems from the concept that by injectively mapping disparate values from a set of disjoint dosimeters to all equal the average value of that set, that this operation, somehow removes the inherent variance from that set. By inspecting Figure (\ref{norm_ensemble}), it is obvious that such efforts to normalize the sensitivities of a set of dosimeters will not remove the stochastic nature of drift noise from forcing each dosimeter to randomly wander apart from one another. The apparent normalizing effect of applying ECCs is only apparent, for all that is necessary is to use another reader to cause further disruption.
\begin{figure}[t]
\centering
 \includegraphics{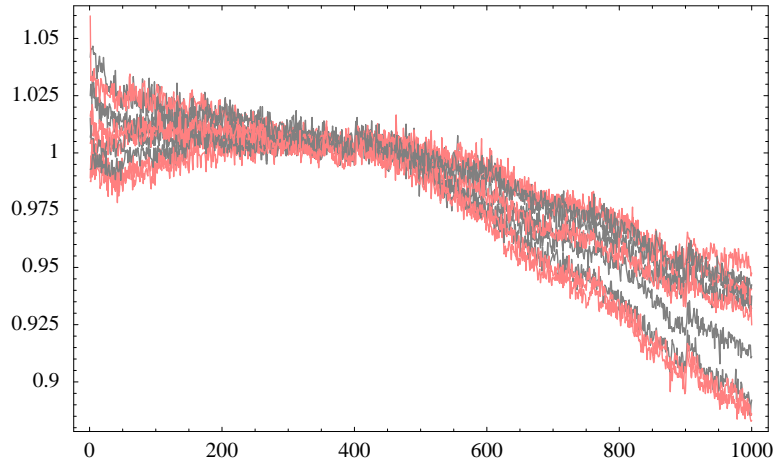}
\caption{\textbf{[Nonstationarity]} To stress nonstationarity, the ensemble of dosimetry sample records are individually calibrated, where each sample record has been normalized to read number 400. The abscissa marks the number of reads for a dosimeter ($n=1,\ldots,1000$), the ordinate has relative units.}
\label{norm_ensemble}
\end{figure}

As the system accumulates bias, any time dosimeters are reissued a new Element Correction Factor, all that is accomplished is the bias would be normalized or folded into the new ECC issued; thus, as the system drifts in time, the tendency would be to reissue new ECCs when system tolerances failed, but this form of calibration in only apparent and not correct. By inspection, see Figure (\ref{figure1}), it is apparent all calibration efforts failed, the system marched right down unheeded, until about read number 900, where some effort was made to correct for the drift; but, even the effort to correct for reader drift was marginal, at best \textendash\ obviously, such efforts are not adequate and the limit of detection afforded by a set of calibration cards not nearly sensitive enough. 

One may counter the argument against ECCs by saying thus: that by issuing new ECCs, the system would be returned to calibration. But, if the limit of detection is at best some 9\% relative standard error, then there is no guarantee at all that the bias will be fully removed. In time, systematized error would accumulate and the tendency would be to under\textendash report doses, given the tendency for a system to drift downward toward being less sensitive. 

In addition to incorrectly thinking ECCs will reduce and control the variance of a set of dosimeters, is the further incorrect concept that by ignoring proper error propagation rules that the system is somehow performing better than it does. A great example of this error is to be found in the use of Reader Correction Factors (RCF), which are generated by measuring the response from a set of 'normalized' dosimeters and then averaging all together, next, using the average to further divide future readings in an attempt to 'correct for' inaccuracies in the system. To show the error in such statistical calculus is a lengthy proposition; but, summarily, if one follows the rules for proper error propagation, the apparent reduction in variance is only just that \textendash apparent; the error simply has not been properly accounted for. A great example of misinterpreting variance measures is that the variance of the reciprocal of a set is equal to the variance of the original set, see \ref{def3} in section \ref{sequences}, which has been appended at the end of this monograph.  

The upshot is the following: operators often think their system is performing better than it truly is, when proper statistical calculus is not applied. This fact is evidenced by Figure (\ref{figure1}), where each trace wanders downward and no apparent correction is made for drift noise until around read number 900. The data in hand can be used to model the influence a set of calibration cards would have on error detection, the ten dosimeters have a relative standard deviation of about 5\%, which means that no deviation less than about $1.96\times 5\%\simeq 9.73\%$ can accurately be detected; thus, we do not see any attempts to correct for reader drift until read number 900, where an ensemble relative percent error of 8.27\% is measured, with 11.0\% maximum percent error for one dosimeter. Remembering that the point estimate used to detect error is the ensemble mean; thus, there can be single variants exceeding the average. Even when a correction in calibration is attempted, the magnitude of the correction is only slight, indeed, for the shift shown in the sample records at read number 900 is only slightly raised. 

As will be discussed in section \ref{calibrationsection}, relative calibration schemes tend to hide or dampen additive drift, hence, it is advisable never attempt to use a relative calibration method in an additive sense. The calibration method generally employed for a dosimetry system is a relative method, not without good reason, for relative methods are the most precise and accurate method to be employed; although, a relative calibration scheme should never be projected in time, additive errors will accumulate over time and force irrelevant any earlier calibration. Summarily: the system will randomly wander in time, but the absolute shift experienced will generally not be visible within a relative framework. 

\subsection{The experimental space}
There is freedom in how one might define an experimental space comprised of dosimetry readings, where by some arithmetic operation, either sums, products or quotients of elements, one may form a corresponding experimental space. How the system is perceived is implicit to the specific arithmetic treatment employed; thus, to what components determinism is attributed and to what processes would be considered random are completely engendered by how the experimental space is defined. Even though freedom exists in how the experimental space may be treated and thus formed, certain consequences are associated with each perception by the axiom of choice. 

In calculating an instantaneous ensemble mean, each reading from each dosimeter is treated a random variable, where the system is considered holistically a deterministic device and each instance of a read, irrespective of the dosimeter employed, a random experimentation of that system. Thus, the ensemble mean reading would represent the expected 'reading' from the system, given a random set of dosimeters, and the variance generated would provide an indicator of the uncertainty in that 'determination'. By repeated experimentation, a Cartesian product space is formed describing the experimental space $\mathscr{E}$ from a set of disjoint dosimeters \cite{Papoulis}. The inherent problem associated with this choice should be obvious by now, but by mixing disjoint determinism, the resulting statistics do not reflect the intended end.

Point estimates should only be applied to purely random processes and the second choice of repetitive experimentation of a single dosimeter is more in keeping with this principle. By repeated trials of a single dosimeter is it possible to resolve the random property for one device to reproduce an experiment. The magnitude of the read, the intrinsic sensitivity enjoyed by a dosimeter, can always be mapped to reflect some standard.  

\subsubsection{Experimentation without replacement}\label{expwithrep}
A model for determining an unknown dose by repeated experimentation without replacement is to say that a dosimeter reading $\textbf{r}$ estimates the absorbed dose $\textbf{D}$ and each determination has the addition of some variability $\textbf{v}$, where $\textbf{v}$ depends on the inherent error within the system, \emph{viz}.: 
\begin{equation}
 \textbf{r}=\textbf{D}+\textbf{v};
\end{equation}
where, what is meant by ''without replacement'' is that each experiment is performed with a new dosimeter. 

One reading represents one experimental outcome, to gain a better understanding of the variability inherent for a determination, and thereby gain a better estimate of the dose, the experiment is repeated by randomly choosing (without replacement) from a set of independent dosimeters to give repeated estimates $\textbf{r}(\zeta_i)$ for the unknown dose $\textbf{D}$, where each experimental outcome $\zeta_i$ is indexed by \textit{i} for each dosimeter; thus, after performing $n$ experiments, a Cartesian product space is formed resolving the experimental space $\mathscr{E}$. 

There are in addition many confounding influences affecting the outcome of each experiment, some possible confounding factors have been discussed earlier; but, many potential sources for error and influence are unknown, in fact, the total number of confounding factors affecting the system is truly unknown. Let all such possible confounding factors be contained in the parameter set $\mathscr{M}$, then each experimental outcome is conditional to the parameter set $\mathscr{M}$. 

Since each dosimeter is an independent experimentation, the expected reading is equivalent to the arithmetic mean, \emph{viz}.:
\begin{equation}
 \textbf{E}\{\textbf{D}\}\simeq\frac{\textbf{r}(\zeta_1|\mathscr{M})+\textbf{r}(\zeta_2|\mathscr{M})+\cdots +\textbf{r}(\zeta_n|\mathscr{M})}{n}
\end{equation}\label{instantensemble}

By applying Tchebycheff's inequality, the range of possible readings representing a dose can be explored \cite{Papoulis,Hogg}. Thus, the probability (P) each conditional determination, $\textbf{r}\hspace{-1.5pt}(\zeta_i|\mathscr{M})$, for the unknown dose $\textbf{D}$ falls within some range of variability ($\sigma^2$) is bounded by some $\epsilon$-neighborhood, \emph{viz}.: 
\begin{equation}
 \mathrm{P}\Big\{\Big|\textbf{r}(\zeta_i|\mathscr{M})-\textbf{D}\Big|<\epsilon\Big\}>1-\frac{\sigma^2}{\epsilon^2}
\end{equation}

An estimate on the upper bound for the total expected error for the system, parameter $\epsilon$, can be found by considering the range of possible readings ($\textbf{r}$) thermoluminescent dosimeters could produce. The sensitivity of thermoluminescent materials must range from zero to some upper limit, where some upper limit is set by physical limitations intrinsic to the material. In fact, the dynamic range attributed to lithium fluoride crystals is somewhere between zero to one hundred kilorems; thus, a dose of zero should return a null result. 

The variance ($\sigma^2$) for the measurements is the difference between the mean squared sum and the square of the arithmetic mean \cite{Papoulis}. Both expectations are comprised of means and, using theorem \ref{lawmean}, are therefore bound between those terms in the series with the smallest and largest values. It can be surmised, see Postulate \ref{stdevproof}, the maximum uncertainty in determining the unknown dose is guaranteed to be less than the difference between the largest and smallest magnitude of all readings defining the experimental space $\mathscr{E}$, \emph{viz}.:
\begin{equation}\label{sigma}
 \sigma<\max_{\textbf{r}\in\mathscr{E}}\textbf{r}-\min_{\textbf{r}\in\mathscr{E}}\textbf{r}
\end{equation}

If the standard deviation $\sigma$ for an experimental space $\mathscr{E}$ is strictly less than the extreme difference between all experimental readings, then setting the uncertainty parameter $\epsilon$ equal to that value, thus $\sigma<\epsilon$, demands the probability that the absolute difference between any experimental observation ($\textbf{r}$) and the unknown dose $\textbf{D}$ be less than that parameter $\epsilon$, specifically $|\textbf{r}-\textbf{D}|<\epsilon$, the resulting probability is strictly greater than zero. If a better understanding for the uncertainty be desired, then multiplying the parameter $\epsilon$ by the term ($1-\alpha$) would increase one's confidence that the unknown dose falls somewhere between $\textbf{D}-(1-\alpha)\epsilon$ and $\textbf{D}+(1-\alpha)\epsilon$, where $\alpha$ is a measure of confidence sought and it can be based on a known probability distribution, e.g. \textit{t}\textendash distribution.

Because of Tchebycheff's inequality, a confidence interval can be built on the probability the expectation for the unknown dose $\textbf{D}$ falls within some interval, where the following confidence interval is based on a \textit{t}\textendash distribution, \emph{viz}.:
\begin{equation}
 \mathrm{P}\left\{\mathbf{E}\hspace{-2pt}\left\{\textbf{D}\right\}-t_{(1-\frac{\alpha}{2},n-1)}\frac{\sigma}{\sqrt{n}}\leq\textbf{D}<\mathbf{E}\hspace{-2pt}\left\{\textbf{D}\right\}+t_{(1-\frac{\alpha}{2},n-1)}\frac{\sigma}{\sqrt{n}}\right\}=(1-\alpha)
\end{equation}

For a set of disjoint dosimeters, the measured ensemble variance, under the logic described above, would at minimum always include the variability of those dosimeters, with the additional variability introduced by elements in the parameter set $\mathscr{M}$. Put another way, the ensemble variance would include the uncertainty inherent to a system, plus, the random sensitivity exhibited by a set of disjoint dosimeters.

Since all readings are positive valued results, the standard deviation would always be some positive number greater than zero; additionally, because of physical limitations imposed on thermoluminescent materials, there is some maximum reading any dosimeter could return. Using the approximation defined for the standard deviation, equation (\ref{sigma}), it becomes obvious that no matter how many samples are taken, the standard deviation is some number bounded from above. Since each experimental trial uses a unique dosimeter (without replacement), the unbiased estimator for the standard deviation $\sigma/\sqrt{n}$ would approach zero in the limit of infinite trials, \emph{viz}.: 
\begin{equation}\label{wrongness}
 \lim_{n\rightarrow\infty}{\frac{\sigma}{\sqrt{n}}}\leq\lim_{n\rightarrow\infty}{\frac{\max\textbf{a}-\min\textbf{a}}{\sqrt{n}}}\rightarrow 0
\end{equation}

As the unbiased estimator should decrease, the resulting confidence interval would decrease in kind, eventually, the interval would collapse onto the unknown dose $\textbf{D}$; thus, if a countably infinite number of experimentations were performed, the accumulated knowledge would provide \emph{absolute} certainty in determining the unknown dose $\textbf{D}$. The certainty is in the ability to accurately retrieve the correct dose, but only by using the entire set of dosimeters employed.

By tracing through the statistical logic that is inherent in the particular treatment chosen for the experimental space $\mathscr{E}$, it is apparent that the variability in dosimeter sensitivity is what is truly being measured by repeated experimentation without replacement. The statistical calculus just described may or may not describe the system in the way operators originally intended, for it is implied that the ensemble statistic is a measure of the uncertainty in determining a dose; but, the system may perform much better and be far more consistent than the performance exhibited by a set of disjoint dosimeters. In fact, there is no physical nor logical reason to expect identical behavior from a set of disjoint dosimeters; moreover, the sensitivity engendered by a particular dosimeter is simply just that: a completely random occurrence, unique to that dosimeter. 

In general, it can be claimed that the expected dose, $\textbf{E}\{\textbf{D}\}$, will be some positive number, $\textbf{D}_0$, for the sensitivity of dosimeters range from zero to some positive upper limit, in fact, using the law of means, theorem \ref{arithmetic}, the expectation value will be dominated by the largest reading returned from a set of disjoint dosimeters. Concentrating on some experimental maximum reading, that reading is not only dependent on all possible confounding factors inherent to a dosimetry system; but is additionally and most importantly, dependent on time, as evidenced by experimental data displayed in Figure (\ref{figure1}). 

Now 'time' is being measured in the sense of the number of recurring heat cycles a particular dosimeter has undergone; this is not including the process of fading, which is the loss of an absorbed dose over time by natural relaxation processes occurring incessantly within the thermoluminescent material; thus, a dosimeter has, in a sense, some 'shelf\textendash life' associated with it before all memory of an earlier exposure to ionizing radiation is lost in time. Thus, the time dependence spoken of is multi\textendash faceted and affects not only the memory of the dosimeter, but additionally affects the ability of the dosimeter to absorb a dose during exposure to ionizing radiation. Yet, despite initial concerns, the dosimeters are only experiencing the normal fading processes and no evidence exists supporting the concept of either accelerated destruction of active sites nor recombination of once destroyed active sites.  

It is obvious then that not only the memory of thermoluminescent materials decay in time, but also the ability of the system to provide an accurate measurement. This has important consequences for radiation dosimetry systems, for it is imperative to be able to eliminate as many confounding factors as possible. In the case of memory, this would possibly force operators to think of an additive calibration system to correct for possible fading. But, calibrating a dosimeter before an exposure tells one nothing of the intensity of exposure later experienced; thus, it is not possible to calibrate for fading, unless a known dose is delivered to a dosimeter, but then what would be the use of such a dosimetry system. 

What is important to understand is that each dosimeter is a unique representation of the system as a whole, moreover, each dosimeter enjoys a unique sensitivity acquired at manufacture; thus, each dosimeter, when joined with the system, constitutes a unique system, in and in itself. Whatever particular sensitivity enjoyed, once established for a particular dosimeter, the unique sensitivity enjoyed should be considered intrinsic to that dosimeter. The ability of a dosimeter to reproduce a particular measurement is, in fact, quite good, as evidenced by the data analyzed in this monograph; moreover, reproducibility is that attribute to wit randomness should be applied and not to a mixing of disjoint sensitivities from a set of disjoint dosimeters.

\subsubsection{Experimentation with replacement}\label{expwithoutrep} 
Repeated experimentation can be with or without replacement and, after discussing experimentation without replacement, it is now in order to explore the statistical calculus associated with replicate dose determinations using a single dosimeter. Once again, a model for determining an unknown dose $\textbf{D}$ is to say to each experimental reading ($\textbf{r}$) is added some variability $\textbf{v}$ inherent within the dosimetry system, \emph{viz}.:
\begin{equation}
 \textbf{r}=\textbf{D}+\textbf{v}
\end{equation}

Similar to earlier discussions, by repeated experimentation, a better understanding of the variability $\textbf{v}$ inherent within the system is developed with each additional experiment. Also, each determination of the unknown dose is dependent on the very same parameter set $\mathscr{M}$, discussed earlier; thus, each experimental reading is conditional to some combination of confounding factors, i.e. $\textbf{r}(\zeta_i|\mathscr{M})$. What is different in this case is an unknown dose $\textbf{D}$ is determined by repeated experimentation with one single dosimeter, instead of using a set of disjoint dosimeters. This means the variability $\textbf{v}$ would represent the variability inherent in a single dosimeter when replicating an unknown dose; in other words, the confounding factor of disjoint dosimeters has been effectively removed from the parameter space $\mathscr{M}$. A new parameter space $\mathscr{M}'$ is defined as subset to the original parameter space, \emph{viz}.:
\begin{equation}
 \mathscr{M}'=\mathscr{M}(\textbf{d},\ldots);
\end{equation}
furthermore, the ellipsis is used to indicate that there are still additional confounding factors present within the system, both defined and unknown. 

Despite using the same model used to model experimentation without replacement, the model for experimentation with replacement has a radically different conclusion as to how the experimental space is resolved. Firstly, the expectation value is now comprised of the arithmetic mean of repeated experiments involving just one dosimeter, \emph{viz}.: 
\begin{equation}
 \textbf{E}\{\textbf{D}\}\simeq\frac{\textbf{r}(\zeta_1|\mathscr{M}')+\textbf{r}(\zeta_2|\mathscr{M}')+\cdots +\textbf{r}(\zeta_n|\mathscr{M}')}{n}
\end{equation}

The expectation value is now a measure of the ability for one single dosimeter to repeat an experiment. The variability is now more a measure of the precision with which one may determine that unknown dose. In the limit of infinite trials, the variability would approach a normal distribution; moreover, this distribution would truly be a random space, unlike in the former case of experimentation with replacement. In other words, each dosimeter would trace out a normal distribution, centered on that sensitivity uniquely intrinsic, hence, a collection of dosimeters would be multimodal.

The resulting measurement provides an uncertain number ($\mu\pm\sigma$) representing the ability to reproduce the experiment, where the expected value $\mu$ can be related to whatever standard desired and the standard deviation $\sigma$ is a measure of the precision for that determination. A set of disjoint dosimeters would provide a set of such estimates and further complication could be made of that set by averaging all such results, but the rules for error propagation must also be adhered to, where the variance of each determination add and the expectation value of the set is the arithmetic mean of all determinations $\{\mu\}_i$, where, of course, each mean has been calibrated to some standard. 

\section{Variability}\label{variability}
Before going any further, it would be instructive to explore system error from a global perspective, that is, a system's dependence on various confounding factors involved in a complex dosimetry system; moreover, it is by a wider form of analysis some semblance of a 'systems analysis' can be achieved. A systems analysis can be achieved in spite of lacking much of the requisite knowledge about a particular dosimetry system, for that matter, lacking specific knowledge in general; nevertheless, the concepts to be discussed shortly, also thus far, constitute what is essential to any dosimetry system, actually, essential to any analytic laboratory. Up to this point in the discussion, thermoluminescent dosimeters have been characterized, not in the sense of being completely characterized; yet, properties most essential for performance have been characterized, furthermore, this knowledge opens the door to performance optimization. Yet, this opportunity applies beyond just thermoluminescent dosimeters alone, for it is by realizing that thermoluminescent dosimeters are part of the entire system, enjoin all such properties and give insight to the entire system.

We begin by enumerating some of the confounding factors involved in a complex dosimetry system: there are computers, readers ($r_i$), internal radiation sources ($s_i$), burners ($b_i$), dosimeters ($\textbf{d}_i$), detectors ($d_i$), \&c. Because there may be multiple components available, we index ($i$) each element; furthermore, each element is independent of any other element. The condition of independence means any two like elements may appear similar, but may perform radically different from one another. Differences may exist for components physically tied together; for example, burners are comprised of four Bunsen type burners all fed by a single gas source and, even though there is a single feed, each component of a burner can perform independently of all the others. It has been observed that one reader had the occasion of one flame burning considerably hotter than the other three; even though, all four burners were fed by a single gas source. We have already spoken abundantly about the fact that dosimeters, despite being similar to one another, can perform quite differently from one another. Such is the case for all components of a dosimetry system, each element is independent, specifically, they are stochastically independent from one another; because, in addition to being independent, each vary in time independently. 

Besides variation amongst physical components, there are many components comprising a complex dosimetry system that are more abstract in nature. Abstract components can be procedural in nature. It is often that a particular locale will have recurring atmospheric conditions, maybe humid in the morning, and such regular variations can statistically block activities done regularly at such times. For example, performing all calibration routines in the morning would statistically block all calibrations to that time of day; thus, if calibrations were performed throughout the day, then analysis would reveal some trend in the performance of the system that would reflect that fact. Throughout the year, seasonal changes can also creep into the record of a system and not only of a weather nature, but rotations in technicians, batches of dosimeters and so on. 

Another abstraction is represented by the so\textendash called ''typical temperature profile'' (TTP), which refers to some usual pattern used for the temperature cycle applied to each dosimeter during the reading process. The TTP can change from site to site, even day to day, if so desired. The particular TTP procedure is decided on by operators and administrators, it consist of several temperature plateaus, whereby, the dosimeter is raised in temperature and held at that temperature for a period of time. The purpose of these various plateaus are to either clear the dosimeter of unwanted memory or to produce a signal for eventual analysis. Usually the first temperature plateau, \emph{circa} 140\,\textcelsius, is designed to burn off loosely held electrons, the second, \emph{circa} 240 to 260\,\textcelsius, is usually meant for a reading, where the photomultiplier tubes are recording during this period of time. The last and highest temperature plateau, \emph{circa} 320\,\textcelsius\ or greater, is to 'clear out' the memory or anneal the dosimeter, so it is ready for future use. Not only can the specific temperatures used for each plateau change, but also the ramp rate to each plateau, the duration of each period of time the dosimeter is held at a particular temperature, and the list goes on. Ultimately, a reading is produced by integrating a 'glow curve', where the bounds of integration are subjective as well.

In time, each element would vary independently of all other components and this process is referred to in general as stochastic or as a stochastic process, where the word stochastic both denotes and is synonymous with randomness, but also connotes variations in time, to emphasis the temporal nature of the random process. There are ascending degrees of randomness or unpredictability assigned to stochastic systems. The first and most predictable stochastic process is referred to as ergodic, then comes strongly stationary, weakly stationary, self\textendash stationary and, finally, nonstationary, where the last term denotes systems that admit no means of mathematically describing them. The first classification, ergodic, is the most desirable of all, for if a system exhibits ergodic behavior, then it admits a mathematical description; even though, the system contains random processes within it. The difficulties in understanding and assigning mathematical descriptions increase with each step away from ergodic systems, where at last, there is no general method for describing a nonstationary stochastic process \cite{Bendat,Papoulis}. 

An example of an ergodic process is radioactive decay; but, it would appear fading could also be appended to such processes, at least to the degree of understanding afforded by analysis in this monograph. Analysis has proved the decay process identified for dosimeters to be a regular process, common to all dosimeters; additionally, it would also appear this process of sensitivity decay is no different from what is generally referred to as fading. 

We have listed some examples of both physical and abstract components, emphasized both the independence of each element and the stochastic nature responsible for variations within a complex radiation dosimetry system; but, there are in addition to known components, and we haven't nearly scratched the surface for known components to a dosimetry system, there are additional confounding factors yet identified or, in some cases, possibly not knowable or characterizable. Electronic processes afford the best examples of random processes occurring on a time period far too fast to control; moreover, these processes are purely random in nature and do have a great influence, as evidenced through analysis of the power spectrum for reuse sample records. To make sense of all these confounding factors, we declare a parameter space $\mathscr{M}$, containing all such confounding factors, including known and unknown factors. 

The parameter space $\mathscr{M}$ contains some, as of yet, unidentified number of elements, where each element represents a process the entire system is dependent upon in some fashion, that is to say, a manifold stochastic system, thus the choice for letter $\mathscr{M}$. There are many ways to generate the parameter space $\mathscr{M}$, for example, one might choose to identify each component and form a union of such elements, \emph{viz}.:
\begin{equation}
 \bigcup_{m,n,k,\ldots}r_m,s_n,b_k,\ldots\subseteq\mathscr{M},
\end{equation}
where either a countably finite or countably infinite set of elements are indicated by ellipsis. 

The admission of unknowns in our parameter set is in no way a deficiency, which will be shown in the section covering calibration schemes; also, neither is the possibility for a countably infinite number of confounding factors a deficiency, contrary, it is a cold recognition of the true complexity for a radiation dosimetry system. 

Obviously, it is not possible to fully characterize such a system! Consider characterizing one reader: to fully characterize that reader one must characterize the behavior for annual and daily fluctuations in weather, then various gas pressures for the burners, different typical temperature profiles applied to the dosimeters, each dosimeter used on that reader must also be characterized and still more must be considered. Even if one were successful in characterizing a single reader, this must be done for each technician that may use that particular reader. Lastly, what happens when the reader is serviced \textendash\ the process of characterization begins all over again; thus, for all practical purposes, this represents a countably infinite set! 

What of processes that are electronic or quantum in nature? The photomultiplier tubes operate based on the concept of an avalanche or cascade process of electrons accumulating to magnify the signal, obviously these processes are probabilistic in nature and do not admit a mathematical description resulting in absolute predictability; although, it is certainly possible to predict within some degree of error what the outcome might be. 

What of the quantum mechanical processes responsible for the capture of ionizing radiation particles, the holding or memory of such capture and eventual release of that memory in the form of a photon. Additionally, under extreme temperatures, atoms and chemical species both migrate and decompose; thus, changing the chemical make up of the crystal, the crystalline structure, thereby, changing the sensitivity of the material to capturing ionizing radiation particles, holding that memory over time and, finally, releasing that memory faithfully upon request. Thus, adding these 
concerns to the already overwhelming list of confounding factors only reinforces that the parameter set $\mathscr{M}$ has a countably infinite number of elements. 

There are additional complexities possible for the parameter space $\mathscr{M}$, the potential for interdependency amongst components also exists. We have mentioned how a reading is conditional to the particular state of the system; but, conditionality is not relegated to just the reading itself, but could also include components exhibiting interdependency. Consider the heat from the burners affecting the sources, the mechanical motion of various parts or the air near by, causing advection of air near the photomultiplier tubes. It's quite possible that components change their performance as other components do their work throughout the day. If this type of folding of the parameter space do occur, then it is more than likely the parameter space is being folded in on itself, creating a nonlinear relationship amongst all the various elements comprising the system. Ideally, nonlinearity should be avoided, mainly, because the extreme complexity in modeling and characterization, where nonlinear systems are often refractory systems, being obstinate and unmanageable in nature and not easily admitting any reliable mathematical description. 

Besides physical differences between elements, the manner in which each may operate or to what exact physical process each may owe its function to, there is also a temporal difference between each element. Each component has some  temporal rate of change associated with it, some of these changes occur over periods of months, years; but, in some cases, these changes can occur in minutes, seconds or fractions of a second. For some characteristic period of time ($\tau$), physical changes within an element can cause variation in performance and the typical duration for that period is referred to as the \emph{relaxation time}. An ascending sequence $\boldsymbol\tau$ of relaxation times may be assembled comprised of fast processes, like electronic transitions ($\tau_e$), to very slow processes that can occur over days ($\tau_d$), weeks ($\tau_w$) or even years ($\tau_y$), where longer periods of time do not just include weather or seasonal changes, but also include wear and tear of elements, breaking elements down and eventually rendering an element unfunctional.
\begin{equation}
 \boldsymbol\tau=\{\tau_e,\tau_d,\ldots,\tau_y\},
\end{equation}
where the smallest period in sequence $\boldsymbol\tau$, electronic variations ($\tau_e$), is greater than zero. 

To be sure, for nonstationary stochastic systems, the ability to accurately relate a dynamic system to an earlier system state diminishes rapidly as the separation in time increases, for with increasing time, both the number of sources for variability increase as well as the accumulated error to the system. The temporal rate at which a system should stray from some reference point is at once a direct measure of the system's stability and to what degree of trust one can place in that system. A wildly fluctuating system would pose an impossible task for determination, rendering any such system useless and undermine any trust for measurements obtained. Even though all complex systems suffer from calibration drifts, error propagation and random fluctuations, it is certainly obvious that such fluctuations must be stable relative to a reasonable time scale, specifically, a human time scale; because, physical interaction with the machine, loading and unloading samples, \&c, must be accomplished within reasonable 
tempos and the system cannot stray so fast as to defeat such efforts. Once a calibration of the system has been realized, the system is held accountable for a period of time; albeit, over time, small changes will accumulate and disrupt that earlier calibration, thus, the system must be periodically re\textendash calibrated. 

If the limit of time is taken to zero, the system would successively pass through each relaxation time, characteristic to each element, eventually, the interval of the period would become so vanishingly small, the period would approach the relaxation time for electronic processes and other quantum mechanical processes. As the limit progresses, each successive interval of time will become shorter than some given relaxation time, once this has occurred, all variations attributable to any element or processes characteristic of that relaxation time are negligible, for there simply isn't sufficient enough time to allow those processes to cause significant changes to the system. In like manner, all relaxation times greater than the interval of time entertained during the progression of the limit would likewise be disregarded. Obviously, if a human time frame be considered, then variations attributable to electronic processes would be an unavoidable source of variability within the system; but, for a given set of experiments, it 
is certainly possible to remain within a period of time equal to one day \textendash\ the typical time for a read is on the order of seconds. This would immediately eliminate concerns for changes that may occur over weeks, months or years; yet, these time frames are still of concern to operators and administrators of dosimetry systems, for the material and components within such systems do degrade over time. 

The main reason for this exercise, taking the limit of time to zero, is to emphasis how nonstationary stochastic systems are inseparable from time and how this parameter manipulates the stability of a dosimetry system. Now, since electronic and molecular vibration processes occur at or around a femtosecond time scale, also, thermal agitations occur typically between picosecond to nanosecond time frame, it is certainly understandable why these sources of variation are unavoidable; but, it is imperative to understand what these variations mean specifically for dosimetry systems. These changes are random in nature, therefore are not predictable nor able to be controlled; thus, the system is constantly becoming a new system, where the accumulation of small variations causes the system to stray along some arbitrary trajectory in time. If the bounds of that variation be finite and unchanging in time, the system admits predictability, in an overall sense, as long as the magnitude for all determinations exceed this 
floor variability. 

What is being spoken of is the stability of a system and how time is a crucial factor in that determination; hence, we can ask by what approach can a system's identity be found: an identity mapping is specifically a mapping that leaves the manifold $\mathscr{M}$ unchanged and there is only one approach, i.e. in the limit of time to zero, \emph{viz}.: 
\begin{equation}\label{identity}
 \lim_{t\rightarrow 0}\mathscr{M}\Big(r_i(t),\textbf{d}_i(t),\ldots;t\Big)\rightarrow\mathscr{M},
\end{equation}
where the identity map for parameter space $\mathscr{M}$ is succinctly written thus, $\mathrm{Id.}:\mathscr{M}\mapsto\mathscr{M}$. A few example elements contained in parameter space $\mathscr{M}$ have been explicitly shown, but the ellipsis represents a countably infinite set of such factors, also, it is explicitly shown how each element is too a function of time, yet ultimately so the same for the entire parameter space $\mathscr{M}$.

As can be seen, the system is never quite the same from moment to moment. Imagine now a dosimetry experiment: a thermoluminescent dosimeter being placed in its holder and a particular typical temperature profile generated to produce a glow curve. A glow curve is a graphic representation of the emitted light intensity, which increases with increasing phosphor temperature. As the temperature increases, the degree of thermal energy increases, increasing the likelihood for a forbidden electronic transition to occur within the thermoluminescent crystal, that is to say, the likelihood for the crystal to phosphoresce. The energy of the absorbed ionizing particle is inferred by the temperature at which the phosphorescence occurred, also, the intensity of the exposure is mapped by the intensity of phosphorescence emitted during the experiment. 

As the crystal phosphoresces, the photons given off are intercepted by a detector, a photomultiplier tube, which measures the intensity of phosphorescence produced by the thermoluminescent crystal as the temperature is slowly increased over the typical temperature profile. The intensity measurement is wrought with electronic noise, including shot noise and avalanche noise. All electronics are at some temperature and thermal agitation of electrons causes random fluctuations to occur, the overall signal is contaminated by all such random fluctuations. The detector, photomultiplier tube, generates a signal based on Compton's work function for an electron, where a photon possessing a certain energy can eject an electron from the surface of a metal. The photomultiplier tube amplifies the signal by cascading electrons over a series of high voltage plates; furthermore, as any amplifier would do, any inherent electronic noise would equally be amplified, i.e. the nomenclature of avalanche noise. The electronic signal 
is then integrated and averaged over some predetermined time interval, which maps to a corresponding temperature interval, thus producing a scalar quantity \textendash\ the magnitude or intensity for that temperature interval, i.e. the \emph{read}. 

Embedded software continuously averages signal throughput generated during the typical temperature profile; thus, producing a series of scalar approximations for the intensity of phosphorescence throughout the entire read cycle. Each approximation is then stacked up, side\textendash by\textendash side, to produce a glow curve, which displays the intensity of phosphorescence as a function of energy, where the energy in inferred by what temperature the crystal was being held during the time interval for each incremental interval. 

\begin{figure}[t]
\centering
 \includegraphics{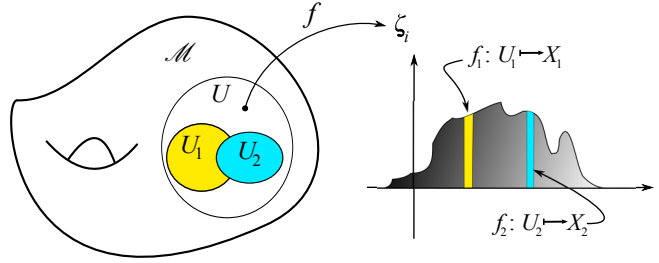}
\caption{\textbf{[Subspace topology]} Diagram showing mapping of manifold $\mathscr{M}$ to an experimental space $\zeta_i$.}
\label{Subspace_topology}
\end{figure}

Each incremental signal is represented by an incremental mapping $f_i$, where each function maps the subspace $U_i$ to produce an incremental approximation $X_i$, i.e. the intensity of phosphorescence. The experimental unit, $X_i$, consists of some algorithm that includes integration and averaging over some incremental time period to produce the scalar approximation. Taking a set of experimental units and stacking them side\textendash by\textendash side is equivalent to forming a Cartesian product space, thus producing a discrete representation of the phosphorescence as a function of energy. 

The entire process of mapping a subspace  by incremental functions $f_i$ to produce a Cartesian product space $\zeta_i$ is depicted in figure (\ref{Subspace_topology}). The subspaces $\textit{U}_i$ are a time dependent topology formed by intersecting $k$ basis $\mathscr{B}_i$ sets, where each basis set represents a particular configuration for an element, including the time dependent behaviour for that component of the system. Essentially, the manifold $\mathscr{M}$ is a dynamic parameter space, consisting of several elements, all of which are independently changing in time; thus, the particular configuration of elements, this reader and that source, form a specific subspace $\textit{U}_i$, which exists only momentarily. The procedure of raising the temperature of a dosimeter and incrementally averaging the signal to produce a scalar approximation, $f_i:\text{U}_i\mapsto X_i$, called the intensity of the signal, is repeated over the entire typical temperature profile; thus, ultimately, the whole glow curve consists 
of many such incremental mappings and all incremental mappings form a resultant functional $f$, \emph{viz}.:
\begin{align}
 &f=f_1\times f_2\times\cdots\times f_k,\\
 &\zeta_i(t)=X_1\times X_2\times\cdots\times X_k,
\end{align}
where each experimental unit $\zeta_i$ consists of the Cartesian product of $k$ incremental mappings $f_i$, each incremental mapping is an approximation of the perceived intensity of phosphoresce over a small increment of time. One can speak of the experimental unit $\zeta_i$ as being formed by either the Cartesian product space or simply the mapping of a subspace $\textit{U}$ by the composite functional $f$, specifically written thus $f:\textit{U}\mapsto\zeta_i$; either way is equivalent. The former construction of the experimental unit is exhaustive in the composite mappings forming the resultant experimental unit; albeit, such detail may inevitably be unnecessary, for ultimately, what is of interest is the integration of the glow curve to generate an overall approximation called the 'read'. 

The composite mapping of the subspace $\textit{U}$ represents a unique and independent subtopology, contingent on the manifold space $\mathscr{M}$. An assortment of elements do enable an experiment to occur, each element is time dependent and unique; thus, each individual subspace topology $\textit{U}_i$ is a unique representation of the manifold space $\mathscr{M}$. The reason for dividing up the total subspace $\textit{U}$, mapped to form a single experimental unit $\zeta_i$, into its constituent parts is to overemphasis the nature of that experimental space. Consider reading a dosimeter, the procedure would scan through the typical temperature profile, generating a sequence of approximations for the light intensity in incremental steps. Now, consider repeating the experiment, everything exactly the same; obviously, the same experimental conditions cannot be recreated exactly as before, various sources of variability will create a unique set of subspaces $\textit{U}_i$ being mapped to form each incremental 
approximation, thereby, integrating the entire product space of incremental experimental units would produce a unique determination for the dose each and every time the experiment is performed.

The elements or components and equipment used in an experiment, their identity and relationship to one another, are exceedingly important, for each configuration imaginable would generate a unique experimental space being mapped; thus, we can define the experimental space $\textit{U}$ as being comprised of $k$ elements, where each element represents a basis $\mathscr{B}_i$ set forming the particular experimental space mapped, \emph{viz}.: 
\begin{align}
 \mathscr{B}_1\cap\mathscr{B}_1\cap\cdots\cap\mathscr{B}_k&\subseteq\textit{U}_i,\\
 \bigcup_{i}{\textit{U}_i}&\subseteq\textit{U}\subset\mathscr{M},
\end{align}
where any union of subspaces $\textit{U}_i$ are contained in parameter space $\mathscr{M}$. 

By varying the degree of intersection, the number basis elements involved and the identity of the basis sets involved in the intersection, the resulting subspaces are accordingly manipulated, providing an infinite variety of possible outcomes. Obviously, time is the key factor being thought of, each element would alter in time, yielding a set of unique experimental subspaces; but, this is not the only source for distortion or adjustments during an experiment, replacing a particular component would modify the resulting experimental outcome. 

Lastly, The definite integral of the experimental subspace $\zeta_i$ produces a particular approximation for the dose, i.e. the reading, typically given in units of nanocoulombs. The interval of integration is completely arbitrary and will follow what is common practice for a particular dosimetry system, where administrators have decided on what the interval will be. 

The composite functional $f$ will be, for the most part, comprised of consistent elements, for it is advisable to keep the system the same, as much as that might be possible, for each experiment. For simplicity, consider a series of experiments and assume most of the components remain consistent throughout, hence, the same dosimeter, reader, burners, etc. are employed for each experiment. Since many of the basis sets composing the experimental space being mapped remain the same, we could define a functional $f_c$ that reflects the consistency of components being employed,  \emph{viz}.:
\begin{equation}\label{reading}
 \textbf{r}(\zeta_i)=\int_{\tau_r}{f_c\zeta_i(t)dt}=f_c\int_{\tau_r}{\zeta_i(t)dt},
\end{equation}
where the constant functional $f_c$ can be brought out from underneath the integral. 

The reason for defining a constant functional is to emphasis that the variation experienced within a reading time is negligible for all components whose characteristic relaxation time is much larger than the typical period of time to perform an experiment. In other words, we are not really concerned about the yearly or daily variations for an experiment that typically take seconds to complete; thus, for the most part, we are assuming only fast processes are responsible for the variation seen in a dose reading.

Every attempt to determine a dose is upset by uncontrollable variabilities inherent within the dosimetry system, hence, we are ultimately seeking to formulate where and at what point these errors creep into each dose determination; moreover, we seek to estimate the magnitude of these errors. To that end, an estimate for the variability can be made employing the same approximation for the standard deviation used earlier throughout this monograph, i.e. Theorem \ref{def1}; thus, providing the following definition for the standard deviation, after substituting the equation (\ref{reading}) for the definition of a reading, \emph{viz}.:
\begin{equation}
 \sigma<\max\textbf{r}-\min\textbf{r}=\max\left\{f_c\int{\zeta_i(t)dt}\right\}-\min\left\{f_c\int{\zeta_i(t)dt}\right\}.
\end{equation}

The variability, if functional $f_c$ is maintained constant, is on the order of the difference between the largest and smallest readings in a set of experiments; furthermore, this variability will mostly be comprised of electronic noise, since we are assuming that the experimental time period is small and all the elements used in the experiment remain constant throughout. Let's call the residual uncontrollable portion of the variability, $\eta_e$, and set it equal to the variability measured for repeated experiments, \emph{viz}.:
\begin{equation}
 \sigma<\eta_e
\end{equation}

Obviously, any major changes made to the configuration of elements, that is, components used in an experiment would be reflected in a corresponding shift for the constant functional, hence, an index $i$ can be used to identify such changes, i.e. $f_{ci}$. This additional index to the constant functional may seem redundant at first, but its intent is to highlight the potential shift in magnitude for a reading, which is wholly dependent on the particular series of basis sets employed for experimentation. Thus, if a set of dosimeters are employed, then to each reading would a unique functional be associated for each experiment, reflecting the differences in sensitivities unique to each dosimeter; furthermore, the variability now being measured would be the difference in sensitivity experienced over that set of disjoint dosimeters, i.e. the difference between the most sensitive and least sensitive dosimeter would constitute the measured variability. 

The mixing of elements is exactly what is being done when forming ensemble statistics over a set of disjoint dosimeters and represents experimentation without replacement. One could further confound the issue by changing the reader for each experiment; hence, the variability measured would be reflective of that fact, that is, the variation experienced over different readers. Ideally, it is most desirable to eliminate as many confounding factors as possible when performing an experiment to determine a dose and changing components midstream would not be very efficacious. 

By extension, it can also be said that relating different exposures over time should be avoided as well, for each component and ultimately the entire system is time dependent; moreover, there is no way to assure measurements made at different times, large periods of time are being considered here, can be accurately related to each other. Consider one reader, how could one guarantee such a system would be identical over periods of months or years? 

\subsection{Stability}\label{stabilitysect} The degree of stability that can be assigned to a complex dynamic system is closely related to variability and to the degree of variability inherent within that system; but, stability is more focused on the time dependent nature of a system and whether or not in time the system exhibits any regular pattern of behaviour, despite random variabilities. If over time a regular pattern emerges from observing a stochastic system or process, then despite any inherent variability, the system would be categorized to some level of stationarity. 

In the parlance of random data analysis and systems analysis, stability refers solely to what degree of stationarity can admissibly be attributed to a complex dynamic system. It is a simple matter to assign some degree of stationarity for a system, typically, just two statistics are used in assessing stationarity: the mean and variance. The underlying purpose of statistics and statistical modeling is to find an efficient description for a complex dynamic system; moreover, if a set of statistics successfully capture all the pertinent information contained in a system, then the statistics are classified as \emph{sufficient} statistics. Hence, the spirit of statistical analysis is to reduce or boil down a complex set of data to a small set of efficient descriptors and if the data allows such a reduction, then the statistics are described as \emph{sufficient} and the enabling data set possesses some degree of stationarity. 

Common practice assumes that the components of a system will remain stable for some reasonable period of time and if no indication is given to the contrary, the system is generally considered ``calibrated''; yet, random fluctuations persist. Since most systems form a reductive experimental space, by virtue of projecting most of the variables in a system to some scalar quantity, it becomes a matter of dispute as to how best to treat the experimental space, such that the best and most efficient description of the system is attained. Generally speaking, statistics generated from \emph{nonstationary} stochastic systems are undesirable, they are subject to time, multiple enumerations exist, they are often subjective and can swiftly become unmanageable \cite{Bendat}. The intent of statistical analysis is to seek out and identify statistics not subject to any variation, save that portion of the system that represents a zero mean random process; thus, providing the \underline{best} statistic for describing a random variable or space \cite{Hogg}.

Certain actions can be made in an effort to reduce or eliminate the number of dependent elements affecting the system for a given period of time. The data in Figure (\ref{figure1}) is representative of this type of effort to dampen variable influences due to different internal radiation sources or readers. Unfortunately, no matter how restrictive one might be, the entire set $\mathscr{M}$ could never be eliminated in its entirety; there will always exist uncontrollable factors in a complex system. It becomes more important to ask what is the minimum variation of this set $\mathscr{M}$, that is, does there exist a subset that minimizes the variation? It is the ability of a dosimeter to reproduce an experiment that is of concern and not the intrinsic sensitivity it may enjoy; because, whatever its sensitivity, this may be mapped to any standard desired, but it is the truly random quality, the variance, that is of concern, for this attribute of the device determines the precision for each dose determination. 

To be explicit, the mean sensitivity for each dosimeter is completely immaterial, it is the variance that is of concern; moreover, the mean sensitivity is time dependent, as well as dependent to other parameters, such as the reader, PMT voltage settings, \&c. Since the mean is essentially nonstationary, then it is to no avail spending time ''characterizing'' that portion of the device. It turns out the mean sensitivity is rather stable, at least from the perspective that the active sites within the crystal are very durable; but, it is the drift noise of the PMT detectors that force the mean response of these dosimeters about. Remembering that the combination of a dosimeter with the system constitutes a unique system and the potential sensitivity will change accordingly; hence, the question is more how best to control and eliminate the confounding influence from the PMT detectors and other electronics. 

\subsection{Metrological traceability}\label{metrological_sect}
Metrological traceability in radiation dosimetry amounts to relating some internal radiation source, that is, some local, on\textendash site radiation source, to a reference standard provided by any of the various institutions responsible for accreditation. Now, despite all efforts to the contrary, not one of these institutions can provide a reference dose possessing infinite precision; in other words, all accredited doses possess some degree of uncertainty, where typical values for relative percent errors are 1\% and 3\%. Of course, the degree of certainty possessed by a standard is proportional to financial cost for that standard; thus, all dosimetry systems must decide on tolerances for standards, based on some system wide analysis, which would invariably include budgetary constraints. 

The method for calibrating internal radiation sources consists of reading a set of dosimeters that were exposed to some arbitrary constant dose by an accredited agency, then each dosimeter is read on a local dosimetry system and then the dosimeters are again exposed with a consistent dose using an internal radiation source, where calibration of that internal radiation source is sought. By comparing the readings acquired from dosimeters exposed by an the accredited dose to readings from the internal radiation source, is it possible to gain knowledge of the relative strength of the internal radiation source; relative to the accredited radiation source, \emph{viz}.:
\begin{equation}
 \mathrm{D}_\mathrm{std}\frac{R_{\mathrm{int}}}{R_{\mathrm{std}}}=\mathrm{D}_{\mathrm{int}},
\end{equation}
where multiplying the relative response of the internal ($R_{\mathrm{int}}$) source to that of the external standard ($R_{\mathrm{std}}$) by the known strength of the external source ($\mathrm{D}_\mathrm{std}$), thereby, providing an estimate of the strength for the internal radioactive source ($\mathrm{D}_\mathrm{int}$). 

Of course, the relative strength is known only to some degree of certainty, where it is not possible to exceed the precision inherent to the particular radiation dosimetry system. The system error is not the only source of uncertainty, for even the accredited standards come with uncertainty; thus, there exists a floor for the certainty, beyond which no dosimetry system can exceed; no matter how precise a system may be. 

The issue of precision and accuracy arise when discussing the particular statistical calculus employed to describe the determined ratio between the accredited source and the internal radiation source. As was discussed above in sections \ref{expwithrep} and section \ref{expwithoutrep}, there are two main ways one may treat the experimental space, namely, as either a Cartesian product of quotients, indicative of treating each experimentation as stochastically independent, or as a Cartesian product of stochastically dependent dosimeters, thus experimentation with replacement. Hidden within this issue are commonly held misconceptions regarding statistics for a set of readings and its reciprocal set of readings. Corollary \ref{def3} discusses in some detail the inequality between statistics generated for a set or its reciprocal, see Addenda \ref{addenda} at the end of the monograph; but, given the pervasiveness of the misconception and the importance for accuracy and precision in radiation dosimetry, the point will 
be driven home, once again, for clarity sake.

Assume $n$ dosimeters are sent after being exposed to some consistent external standard ($\mathrm{D}_\mathrm{std}$), each being read ($R_{\mathrm{std},n}$), the same dosimeters would be exposed by an internal source ($\mathrm{D}_\mathrm{int}$), then read ($R_{\mathrm{int},n}$), \emph{viz}.:
\begin{equation}
 \textbf{a}=\left\{\frac{R_{\mathrm{int},1}}{R_{\mathrm{std},1}},\frac{R_{\mathrm{int},2}}{R_{\mathrm{std},2}},\ldots,\frac{R_{\mathrm{int},n}}{R_{\mathrm{std},n}}\right\}.
\end{equation}

It is common to take an average right away, the ensemble mean is taken to provide the expected value for relative strength of the internal radiation source to that of an accredited source; in addition, the variance of set $\textbf{a}$ is taken as a measure of uncertainty in that determination. By averaging the ratios, each ratio is treated an independent experimentation for the relative strength, regardless of the dosimeter employed; thus, the ensemble mean represents the expected ratio as determined by a set of disjoint dosimeters and the variance is primarily a measure of the dispersion across a set of dosimeters, i.e. the variability in their respective sensitivities. 

Even though this form of statistical calculus and reasoning is improper, let's carry forward, to drive home the point. Consider the standard deviation of set $\textbf{a}$, the unbiased estimator for the standard deviation is equal to $\sigma/\sqrt{n}$. If we employ the upper bound, Theorem \ref{stdevproof}, to the standard deviation, then it is obvious the standard deviation is bound by the maximum and minimum of all readings from the dosimeters. As a result, it can be seen that the standard deviation is bound by the range of readings, the magnitude of these readings, and this range is fixed. In other words, no matter how many dosimeters are employed for experimentation, the accumulated readings will all fall between some upper and lower magnitude, hence, the range is some constant. In the limit of infinite experimentations, the certainty in whatever expectation value has been calculated becomes absolute, \emph{viz}.:
\begin{equation}\label{wrongcalcvulus}
 \lim_{n\rightarrow\infty}\frac{\sigma}{\sqrt{n}}<\lim_{n\rightarrow\infty}\frac{\max\textbf{a}-\min\textbf{a}}{\sqrt{n}}\rightarrow 0,
\end{equation}
where this is another form of the Central Limit Theorem \cite{Papoulis}.

The above statistical logic implies that we may know the relative dose strength to an infinite degree of certainty, in fact, we may determine this below the inherent error of the accredited dose! It is quite impossible to exceed the uncertainty inherent within the accredited dose, in fact, this inherent error represents the floor precision, below which no system could ever achieve greater precision. 

What then is the proper statistical calculus to employ? What is the true expectation value and what is the true precision of the measurement? 

First, consider the accredited dose, this number possesses some degree of error, signify the absolute standard deviation by $\mathrm{s}_\mathrm{std}$; thus, each accredited dose is truly a number possessing error, $\left(\mathrm{D}_\mathrm{std}\pm\mathrm{s}_\mathrm{std}\right)$, where common practice has the expected value with its associated error grouped within parentheses. 

Now, consider $n$ dosimeters received by a dosimetry center for purposes of calibrating their internal radiation source. The standard dose strengths $\mathrm{D}_{\mathrm{std},i}$ could all be the same expected value, but there is no reason that an accredited agency is committed to supplying only constant doses, in fact, accreditation can include several unequal doses sent to a radiation dosimetry system as a means of checking metrological traceability. 

So, with no loss in generality, consider now each dosimeter, exposed by some standard dose, being read on a local system, and generating a sequence of dose readings, i.e. $(R_{\mathrm{std},i}\pm\mathrm{s}_{\mathrm{std},i})$. These readings would then divide subsequent readings from the same dosimeters, only, this time the dose is delivered by the internal radiation source, \emph{viz}.:
\begin{equation}
 \textbf{a}=\left\{\frac{\left(R_{\mathrm{int},1}\pm\mathrm{s}_{\mathrm{int},1}\right)}{\left(R_{\mathrm{std},1}\pm\mathrm{s}_{\mathrm{std},1}\right)},\frac{\left(R_{\mathrm{int},2}\pm\mathrm{s}_{\mathrm{int},2}\right)}{\left(R_{\mathrm{std},2}\pm\mathrm{s}_{\mathrm{std},2}\right)},\ldots,\frac{\left(R_{\mathrm{int},n}\pm\mathrm{s}_{\mathrm{int},n}\right)}{\left(R_{\mathrm{std},n}\pm\mathrm{s}_{\mathrm{std},n}\right)}\right\},
\end{equation}
where some uncertainty $\mathrm{s}_i$ has been applied to each reading.

Again, The numerator represents the reading for each dose delivered by the internal source, the denominator represents the reading for each standard dose, thus each quotient represents the relative strength of the internal source as compared with a standard source. 

Consider the first element of set $\textbf{a}$, the rules of error propagation necessitate that in order to determine the absolute error for the division, the magnitude of the quotient must be multiplied over the square root of the sum of the squared relative standard deviations for both denominator and numerator, \emph{viz}.:
\begin{equation}
 \frac{\left(R_{\mathrm{int},1}\pm\mathrm{s}_{\mathrm{int},1}\right)}{\left(R_{\mathrm{std},1}\pm\mathrm{s}_{\mathrm{std},1}\right)}=\frac{R_{\mathrm{int},1}}{R_{\mathrm{std},1}}\pm\frac{R_{\mathrm{int},1}}{R_{\mathrm{std},1}}\times\sqrt{\left(\frac{\mathrm{s}_{\mathrm{int},1}}{R_{\mathrm{int},1}}\right)^2+\left(\frac{\mathrm{s}_{\mathrm{std},1}}{R_{\mathrm{std},1}}\right)^2}.
\end{equation}

The above operation is repeated for each ratio, thus providing a set of quotients representing the relative strength of the internal radiation source to that of the accredited source; let $\textbf{Q}$ represent that quotient, \emph{viz}.:
\begin{equation}
 \textbf{a}=\left\{\left(\textbf{Q}_1\pm\textbf{Q}_1\eta_{r1}\right),\left(\textbf{Q}_2\pm\textbf{Q}_2\eta_{r2}\right),\ldots,\left(\textbf{Q}_n\pm\textbf{Q}_n\eta_{rn}\right)\right\},
\end{equation}
where the quotient $\textbf{Q}_i$ represents the magnitude of the division or ratio, Greek letter eta ($\eta_{ri}$) represents the relative standard deviation for each division and each such case of division is indexed by \textit{i}.

Each element of set $\textbf{a}$ is an approximation for the relationship between the internal source to that of an external accredited source; thus, to form an expectation value over the set of approximations, we form the Cartesian product space to represent the experimental space. Since each determination is independent, the expectation value is calculated by taking the arithmetic mean of the magnitudes; but, the rules of error propagation necessitate summing the absolute variances, \emph{viz}.:
\begin{subequations}
\begin{align}\displaystyle
 \widehat{\textbf{Q}}=&\frac{1}{n}\sum_{i=1}^n\textbf{Q}_i,\ \textbf{Q}_i=\frac{R_{\mathrm{int},i}}{R_{\mathrm{std},i}},\\
 \epsilon^2=&\sum_{i=1}^n\textbf{Q}_i^2\eta_{ri}^2,\ \eta_{ri}^2=\left(\frac{\mathrm{s}_{\mathrm{int},i}}{R_{\mathrm{int},i}}\right)^2+\left(\frac{\mathrm{s}_{\mathrm{std},i}}{R_{\mathrm{std},i}}\right)^2,
 \end{align}
\end{subequations}
where the sum of all relative errors $\eta_{ri}$ involved in the operation are multiplied by each quotient $\textbf{Q}_i$ and yields the absolute variance ($\epsilon^2$) in the calculated expectation $\widehat{\textbf{Q}}$.  

Summing all uncertainties for each approximation provides an estimate of the system\textendash wide error ($\epsilon$). In other words, the uncertainty in determining a dose has propagated through each arithmetic operation and when all such errors are summed, that sum provides an estimate of the inherent variability of the system as a whole. 

The error for the accredited source is cited by the institution providing the standard, but what of the readings from the internal source? Anytime a single sample has been taken, the question arises to what uncertainty to attribute to that sample. There are several possible methods to apply to this problem, one of which is to take the square root of the magnitude of the reading and call this the standard deviation. A better method is to base the uncertainty on some known estimate for the population variance, where this estimate is obtained through prior experimentations, another dosimetry system that cites its inherent error or by repeated experimentation to determine the error inherent within your system. Short of applying any of the previous methods mentioned, there exists yet another method to estimate the uncertainty in a single sample, that is to base the uncertainty on a known probability distribution, e.g. \textit{t}\textendash distribution. 

Unfortunately there does not exist an estimate for zero degree of freedom, no matter the distribution; but, one may take the value for one degree of freedom and bound the estimate from both above and below, that is, the error is no less than that uncertainty expressed by one degree of freedom. For the case of 95\% confidence level, based on the \textit{t}\textendash distribution, with a two\textendash sided tolerance limit, the standard statistic is equal to 12.706; hence, this estimate would be set equal to the standard deviation. The confidence interval now expresses the probability range for the expectation value, \emph{viz}.:  
\begin{equation}
 \mathrm{P}\left\{\mathbf{E}\hspace{-2pt}\left\{\textbf{D}\right\}-t_{(97.5\%,0)}12.706\geq\textbf{D}\geq\mathbf{E}\hspace{-2pt}\left\{\textbf{D}\right\}+t_{(97.5\%,0)}12.706\right\}=0.95,
\end{equation}
where evaluation of the interval would result in the square of 12.706; because, the same value must be given to the \textit{t}\textendash statistic, otherwise, there exist no value for zero degree of freedom. After squaring the test statistic, the range for the confidence interval would be plus or minus approximately 161.44 and since no value exists for zero degree of freedom, the range expressed is only approximate; consequently, the bounds are reversed to reflect that fact, where the true dose could either be less than the lower limit or greater than the upper limit. This is obviously a poor representation for an approximation, especially, if precision is sought. Of course, it is necessary to repeat experimentation to reveal a better estimate for the population variance; but, without any other information, the best estimate possible would be to use the approximation just described.  

To get a better handle on what could be the error of a reading, let's assume the maximum percent error for experimentation $\{\mathrm{s}_r\}_i$ is some constant value; regardless of whether or not the dose was delivered by an external or internal radiation source. By dividing the error associated with a reading by the square of the expectation $\widehat{\textbf{Q}}$, the system\textendash wide error $\epsilon$ is cast in terms of relative percent error. Since summing over a constant $n$ times is equivalent to multiplying that constant by $n$, the following estimate for the system\textendash wide error is obtained:
\begin{equation}\label{systemwideerror}\displaystyle
 \epsilon_r^2=\sum_{i=1}^n\{\eta_r\}_i^2 \leq 2 n\,\left(\max_{\{\mathrm{s}_r\}_i\in\textbf{a}}\{\mathrm{s}_r\}_i\right)^2,
\end{equation}
where subscript $r$ signifies the statistic is relative to the average quotient $\sim\,\widehat{\textbf{Q}}$. It can be noticed that the individual magnitudes ($\textbf{Q}_i$) involved in the sum of variances were ignored in the derivation; but, an extension of Minkowski's inequality, theorem \ref{minkowski} enables reducing the sum of squares, then division by $n$ yields the average quotient $\widehat{\textbf{Q}}$. Multiplication by 2 arises because there are two readings, one reading of the externally delivered dose and one reading derived from the internal source. After taking the maximum variability observed for a set of dosimeters, the last transformation in the above formula is realized, which bounds the system\textendash wide error from above. 

By inspection of equation (\ref{systemwideerror}), the upper bound is seen to grow proportionate to the number of dosimeters employed in determining the strength of the internal radiation source relative to the accredited source; thus, the probability curve for each error term associated with each reading is added end\textendash to\textendash end, resulting in an ever widening interval of uncertainty. If an unbiased estimation of the system\textendash wide error be desired, then division by the number of samples is required; thus, let $\sigma$ represent the unknown standard deviation for the population, then an unbiased estimator would be the following $\sim$
\begin{equation}\label{unbiasedestimator}
 \frac{\sigma_r}{\sqrt{n}}\simeq\frac{\epsilon_r}{\sqrt{n}},
\end{equation}
where the estimator $\epsilon$ takes the place of the unknown population statistic. 

To better understand how the unbiased estimator ($\sigma/\sqrt{n}$) behaves, we replace the system\textendash wide error term $\epsilon$ in equation (\ref{unbiasedestimator}) by the approximation obtained in equation (\ref{systemwideerror}), \emph{viz}.: 
\begin{equation}\label{syserrmin}
 \frac{\sigma_r}{\sqrt{n}}\simeq\frac{\epsilon_r}{\sqrt{n}}\leq\frac{\sqrt{n}\sqrt{2 \left(\max\{\mathrm{s}_r\}_i\right)^2}}{\sqrt{n}}\leq \sqrt{2} \left(\max\{\mathrm{s}_r\}_i\right).
\end{equation}

Repeated experimentation involving thermoluminescent dosimeters, displayed in Figure (\ref{figure1}), reveals that the time averaged relative standard deviation is roughly 0.2\%, using a sample length of eight elements and then averaged over the ensemble; this figure was further verified \emph{via} Fourier analysis. Another estimate for the error is provided by the maximum percent deviation of $1.2\underline{0}\%$ that was observed over the entire ensemble; this figure would provide the maximum expected deviation, at least, evidenced by experimentation. These estimates provide a better understanding of the true precision inherent within the system.

The unbiased estimator ($\sigma/\sqrt{n}$) is bounded from above by the percent error associated with the maximum deviation observed for any dosimeter. If the error term were zero, that is, if the dosimeter variability be equal to zero, then we have a perfect system with absolutely no error in determining a dose; obviously, this is an unrealizable ideal. In its stead, there are a host of other approximations that could take the place of the maximum deflection, where instead of placing 1.2\% relative percent error, there are other approximations for the system\textendash wide error, such as 0.2\% relative percent error. 

The biased point estimate is the average deviation from expectation and it is a simple matter of multiplying the unbiased point estimate by the square root of the number of trials to generate the biased estimate. For example, using the approximation in equation (\ref{syserrmin}) and 1000 trials, the potential bias in the system could be as much as  $\sqrt{2\times1000}\times 0.2\underline{0}\%\simeq 8.9\underline{4}\%$. Obviously, the maximum figure of 1.2\% results in far too large an estimate for the bias, roughly 53\% relative standard error; but, remember that this figure corresponds to a one time event by one dosimeter, i.e. an anomaly. For the reliability data in hand, the average bias for the ensemble was measured to be 9.24\% down and a maximum bias of around 12\% down, which is still within one standard deviation from the above estimate, i.e. $8.9\underline{4}\%\times z_{0.975}=17.\underline{5}\%$, where $z_{0.975}=1.96$. Hence, the significance of a biased point estimate is essentially whether or not the error accumulates in time. As can be seen, the error or bias accumulated in the reliability data, Figure (\ref{figure1}), corresponds with the noise floor of the system; but, channels 2 \& 3 drifted as much as 24\% and 30\% down, therefore, the average deflection for these two channels are higher than for channel 1. 

A much better estimate for the expected bias can be had by calculating the chi\textendash square confidence interval for the variance, which yields \{8.5\underline{7}\%, 9.3\underline{5}\%\} for the relative standard error, assuming 1000 trials; hence, for channels 2 \& 3, there must be some additional error associated with the PMT detectors, for the average error can be estimated to be somewhere around 0.5\% and 0.7\% for channels 2 \& 3, respectfully. This is more probably indicative of the fact that each PMT detector enjoys a unique noise floor value, depending upon inherent efficiencies; thus, it is not surprising to see evidence for different average errors associated with different TLD channels, for the average relative error is expected to vary over differing readers, detectors, \&c. The maximum relative error chosen for much of the numerical calculations, 1.2\%, may not necessarily represent the maximum possible error, that is, the system\textendash wide maximum error may prove higher if more dosimeters and readers were tested; nevertheless, for the experimental evidence in hand, the assumption of 1.2\% maximum relative percent error seems rather good, at least for the purposes of this monograph. 

\begin{table}[t]
\caption{Upper bound estimates for relative percent error ($\nu=\sigma_r/\sqrt{n}$). Population estimates are based on a 97.5\% two\textendash confidence, significance level ($\alpha=0.05$) and sample number ($n$). Two columns of estimates are provided, one based on the chi\textendash square distribution and the other on a \textit{t}\textendash distribution; one column est. the reading error alone, the other two include a hypothetical 3\% relative standard error for the traceable standard population. A maximum of $1.2\underline{0}\%$ relative standard error for the reading error is assumed throughout.}
\begin{center}
\begin{tabular}{rrrrr}\hline
\rule[-1ex]{0pt}{3.5ex}
samples ($n$) & $\max \nu$* & $\max \nu$** & $\max\nu_\mathrm{tot}$* & $\max\nu_\mathrm{tot}$** \\
\hline
\vspace{-5.5pt} \\
2  & $13\underline{5}\,\%$  &  $21.\underline{6}\,\%$ &  $27\underline{5}\,\%$ &  $43.\underline{8}\,\%$ \\
4  & $6.2\underline{3}\,\%$  &  $5.4\underline{0}\,\%$ &  $12.\underline{8}\,\%$ &  $11.\underline{0}\,\%$ \\
5  & $4.8\underline{8}\,\%$  &  $4.7\underline{1}\,\%$ &  $9.9\underline{1}\,\%$ &  $9.5\underline{7}\,\%$\\
8  & $3.4\underline{5}\,\%$  &  $4.0\underline{1}\,\%$ &  $7.0\underline{1}\,\%$ &  $8.1\underline{5}\,\%$\\
10 & $3.1\underline{0}\,\%$  &  $3.8\underline{4}\,\%$ &  $6.2\underline{9}\,\%$ &  $7.8\underline{0}\,\%$\\
15 & $2.6\underline{8}\,\%$  &  $3.6\underline{4}\,\%$ &  $5.4\underline{4}\,\%$ &  $7.4\underline{0}\,\%$\\
21 & $2.4\underline{5}\,\%$  &  $3.5\underline{4}\,\%$ &  $4.9\underline{8}\,\%$ &  $7.1\underline{9}\,\%$\\
\hline
\end{tabular}\label{tableupperest}
\par\medskip\footnotesize
* based on $\chi^2$ distribution; ** based on \textit{t}\textendash distribution
\end{center}
\end{table}

If the contribution of the known error ($\mathrm{s}_\mathrm{std}^2$) associated with the accredited standard also be added to the approximation, then imagine multiplying each ratio in set \textbf{a} by some uncertain dose strength, i.e. $(\mathrm{D}_\mathrm{std}\pm\sigma_\mathrm{std})$, which would amount to adding the relative standard variance for the accredited dose along with the variances for each reading, \emph{viz}.:
\begin{equation}\label{systemerrorest}
 \frac{\epsilon_r^2}{n}\leq \mathrm{s}_{r,{\mathrm{std},i}}^2+2\left(\max\{\mathrm{s}_r\}_i\right)^2,
\end{equation}
where subscript \textit{r} denotes relative to the mean, also, the quotient would now be equal to the following $\sim$
\begin{equation}\label{newquotient}
 \textbf{Q}_i=\mathrm{D}_{\mathrm{std},i}\frac{R_{\mathrm{int},i}}{R_{\mathrm{std},i}}.
\end{equation}

Notice the minimum possible error, if the system were perfect [\emph{i.e. no reading error}], the error would be that attributed solely to the external source ($\mathrm{s}_{r,\mathrm{std}}$); hence, this formula correctly reflects the fact that one may never know the external source to a greater degree of precision than that stated for the accredited standard dose. This is another proof or check to make sure that all uncertainties associated with each reading of a dosimeter and the error associated with the external accredited standard dose have been correctly propagated throughout all calculations. Thus, by inspection, it can also be noticed that each reading of a dosimeter will only add to the stated error for the accredited standard.

A series of approximations can be made as to what the expected standard deviation would be for the total system\textendash wide error, see Table (\ref{tableupperest}), where either total system\textendash wide error is considered ($\nu_{tot}$), which includes both the error in reading, plus, that of the accredited dose, finally, another set of approximations are made assuming only a 1.2\% maximum error for a reading ($\nu$). Estimates are based on \textit{n} dosimeters involved in standardizing an internal radiation source. The upper bound can be estimated by basing the estimate on either the chi\textendash square distribution, to attain an estimate ($\max\nu_{\mathrm{tot}}$*) for how large the relative standard deviation might be for a series of readings, or it may be based upon the \textit{t}\textendash distribution to attain an estimate ($\max\nu_{\mathrm{tot}}$**) for how large the relative standard deviation might be from expectation. It should be kept in mind that an estimate of 1.2\% relative standard error is rather large and the noise floor is closer to 0.2\%; but, using the maximum value provides a definite strict upper bound on either the maximum relative standard error or the maximum relative deviation from expectation.

The maximum unbiased estimate, based on \textit{n} dosimeters, for determining the true strength ($\mathrm{D}_\mathrm{int}$) of an internal radiation source is calculated by using the following formula: 
\begin{equation}
 \mu-\mathrm{D}_\mathrm{int}<t_{(1-\alpha/2,n-1)}\frac{\epsilon_r}{\sqrt{n}}\leq t_{(1-\alpha/2,n-1)}\,\sqrt{\mathrm{s}_{r,\mathrm{std}}^2+2\left(\max_{\mathrm{s}_i\in\textbf{a}}\{\mathrm{s}_r\}_i\right)^2},
\end{equation}
which is based on some significance level $\alpha$.

For ten dosimeters, the maximum relative percent error is listed as 7.8\underline{0}\% in Table (\ref{tableupperest}); furthermore, this is based on the estimate for the maximum error for each read of 1.2\%, which is admittedly very large and is closer to 0.2\%, at least for channel 1, as evidenced by the reliability data. Yet, since 3\% error is assumed for the accredited standard dose, the reduction in error is not very large if the reading error is reduced, i.e. $\max\nu_\mathrm{tot}$** is roughly equal to 6.8\underline{2}\% for ten dosimeters and 0.2\% error assumed for each reading. Obviously, these values will contract much more if the maximum error for the accredited standard dose is only 1\%. 

Basing the confidence interval on the \textit{t}\textendash distribution assumes the sample population is normally distributed, $\mathscr{N}(0,\sigma)$, plus, it also assumes the variance for each experimental block (dosimeter) are equal, i.e. ($\mathrm{s}_1^2=\mathrm{s}_2^2=\cdots=\mathrm{s}_n^2$). It has already been established that the standard deviation for each thermoluminescent dosimeter is a nonstationary random variable. For many reasons, the assumptions for the \textit{t}\textendash distribution may not be satisfied, including the equality of the variances across dosimeters; therefore, another method of estimating the possible upper bound for error is to calculate the maximum possible standard deviation based on the chi\textendash square distribution, \emph{viz}.:  
\begin{equation}
 \frac{\sigma_r^2}{n}<\frac{(n-1)\,\epsilon_r^2}{n\,\chi_{(\alpha/2,n-1)}^2}<\frac{(n-1)}{\chi_{(\alpha/2,n-1)}^2}\left(\mathrm{s}_{r,\mathrm{std}}^2+2\left(\max_{\mathrm{s}_i\in\textbf{a}}\{\mathrm{s}_r\}_i\right)^2\right),
\end{equation}
which does not assume equality of variances across a set of independent groups.

The chi\textendash square distribution is a better estimate as compared with estimates based on the \textit{t}\textendash distribution. The chi\textendash square distribution is the sum of \textit{n} linearly independent, normally distributed statistics and, in the present case, there is no reason to assume the variance would be equal across \textit{n} dosimeters employed to calibrate an internal radiation source. The \textit{t}\textendash distribution provides an estimate of the mean for small n-samples, as the sample size increases the \textit{t}\textendash distribution approaches a normal distribution; but, this statistic assumes equality in the population variance. The chi\textendash square estimate provides a $100(1-\alpha)$ percent confidence that the true population variance $\sigma^2$ will be less than the upper bound estimate, which is based on the sample variances derived from repeated trials.

The decision of how many samples or individual dosimeters to include in calibrating an internal radiation source is entirely dependent on system tolerances and operational constraints, also, the desired level of precision sought; but, Table (\ref{tableupperest}) does provide an estimate of the maximum relative percent error of calibrating an internal radiation source, based on a few sample sizes, 1.2\% maximum error for a reading and 3\% error for the accredited external source. The reason a specific number of dosimeters are entertained is simply for the reason that it is far easier to send and/or receive a set amount of dosimeters from an accreditation agency, then perform calibration routines; otherwise, dosimeters would need to be sent and received multiple times, which means that individual estimates will span over some length of time and, considering principles covered in sections \ref{variability} \& \ref{stabilitysect}, it is not advisable to pool together stochastic data in order to draw statistics, i.e. time can confound various point estimates.

\section{Calibration scheme}\label{calibrationsection}
It is the expressed purpose of any calibration scheme to successfully account for all variations within a given system; otherwise, all efforts to attain any determination are in vain. Equally important, whatever calibration technique chosen, that it be applied correctly and not to unwittingly confound the very same calibration method by breaking any underlying principles essential for the success of that calibration technique.

There are two paths by which a system can be calibrated, the first is to explicitly account for all confounding factors, hence, producing a series of models, mathematical, procedural or otherwise that successfully predict component behaviour, thereby, providing a way to correct for any undue influence from either random fluctuations, systematic fluctuations and other confounding factors. This first type of calibration scheme mentioned is an \emph{additive standard} calibration scheme and its principle of use rest upon measuring the response of your system to aliquot additions of some standard analyte, \emph{ergo}, the name applied to this method. Another calibration method, which happens to be the complement to the first method described, is the \emph{internal standard} calibration method and its principle of action rest upon the relative response of the system with respect to a known standard analyte. The benefit of a relative calibration method is that the entire system need not be fully characterized to implement this type of calibration scheme; moreover, there are situations where this form of calibration is superior to any other, especially, in cases where the standard analyte changes in an unpredictable and random fashion, for it may be impossible, at least tedious, to account for all such changes in time. The power of relative techniques lay in the fact that by division do many common factors cancel, factors that otherwise would confound the relation of analyte to analysis \cite{Christian}. 

The \emph{standard addition} method forms a chart for the experimental space by sequential additions of a standard; thus, producing a linear mapping for the system. Because a constant is distributive over addition of numbers, the \emph{standard addition} method cannot discern multiplicative errors within the system. Imagine the slope of the calibration curve remaining the same after multiplying by a constant, even though the entire curve is shifted either up or down after multiplication, if the standard shifts accordingly, it is possible to 'verify' the assumed calibration and yet be in error. On the other hand, the \emph{internal standard} calibration method cannot detect additive errors, because common additions mutually cancel one another in division; thus, over time, this calibration method can drift from a perceived standard and, if not checked periodically by additive methods, the cumulative error building within the system can go unnoticed \cite{Christian}.

As will be momentarily shown, the benefit of implementing a relative calibration scheme in dosimetry systems cannot be overstated, in fact, given the nature of all the components involved within a dosimetry system, the \emph{internal standard} calibration method is the only real choice available to administrators. The reason for the superiority of a relative calibration method is that many of the key components within a dosimetry system are changing in time; moreover, these changes are additive errors, which would be mutually canceled by division. 

If an \emph{additive standard} calibration method were insisted, the operators of such a system would spend the majority of their time characterizing the time dependent changes within the system and this would mean the majority of the time would not be spent on the intended purpose of a dosimetry system, i.e. determining unknown doses. But, in addition to the tedious checking and rechecking of the system and its components, there are many confounding factors yet realized! In other words, implementing an additive method for calibration would force undue hardship on both the operators and the administrators, who would both be forced to wrangle with yet unexplained changes occurring within the system, worse still, many changes may never be identified and explained, given the random nature of many nuisance factors. 

To show the superiority of the \emph{internal standard} calibration method in the case of determining ionizing radiation doses, it is necessary to explicitly show how this method of calibration eliminates the influence of most confounding factors in one fell swoop. This affect is realized by simply forming the ratio between two dosimetric readings, where one of those readings are derived from a known dose delivered by a traceable standard. It is by relating a dose of unknown strength to a traceable standard dose that metrological traceability is maintained; moreover, it is by taking the quotient of each unique response that many of the confounding factors, some known and many unknown, are immediately eliminated or canceled in one arithmetic operation. Essentially, the standard is changing unpredictably, for the response of the dosimeter and the system is changing and random in time. 

In order to demonstrate the principle, consider forming the quotient between two different dosimetric readings, also, let the first reading ($R_\mathrm{f}$) be a dose of unknown strength, where the title of ''field'' dosimeter or ''field'' dose will be used to refer to a dosimeter with an unknown dose. The second reading ($R_\mathrm{int}$) should then be a reading of that very same dosimeter dosed by a known radioactive source. The quotient $\textbf{Q}$, thus formed, would relate the strength of the unknown dose to the response the dosimeter admits for a dose of known strength, \emph{viz}.:
\begin{align}
 \textbf{Q}=\frac{R_\mathrm{f}}{R_\mathrm{int}}=\frac{f_1\Big(\textbf{r}(\zeta_1)\pm\eta_1\Big)}{f_{1+\tau}\Big(\textbf{r}(\zeta_{1+\tau})\pm\eta_{1+\tau}\Big)}&=\\\nonumber
 \frac{f_1\,\textbf{r}(\zeta_1)}{f_{1+\tau}\,\textbf{r}(\zeta_{1+\tau})}&\pm\frac{f_1\,\textbf{r}(\zeta_1)}{f_{1+\tau}\,\textbf{r}(\zeta_{1+\tau})}\sqrt{\left(\frac{\eta_1}{\textbf{r}(\zeta_1)}\right)^2+\left(\frac{\eta_{1+\tau}}{\textbf{r}(\zeta_{1+\tau})}\right)^2},
\end{align}
where $\eta_i$ represents the error caused by those processes faster than the time delay $\tau$ between each reading. For all processes that are slower than the time delay $\tau$, let their influence $f_i$ be considered a constant. 

Each reading can be expanded using the principles laid out earlier when discussing the issue of variability, see equation (\ref{reading}), specifically, each reading can be defined as some uncertain determination for the dose, multiplied by some constant factor $f_i$, where the constant factor describes the magnitude attributable to gross changes within the system. The uncertainty contained within the determination is measured by an arbitrary amount of variability, where the magnitude of variability is conditional to the state of the parameter space $\mathscr{M}$; remembering that the parameter space contains all elements comprising a dosimetry system. As indicated by the subscript, the difference in time when the second reading is taken is separated by some finite time period $\tau$, where time in this case is being measured in absolute chronological units and not the number of potential reuses. Now, depending on the lag time $\tau$ between each reading, many potential changes to the system may have transpired; yet, as the time lag between consecutive readings decreased, given the identity principle, see equation (\ref{identity}), the identity of each reading would approach one another.

Summarily, with the identity principle in hand, the argument for employing a relative calibration technique becomes evident by simply stating that all common confounding factors are canceled in division, especially if the time delay is small. In other words, for small delays in time between each successive reading, all processes whose relaxation time are greater than the time delay would influence the system minimally, also, these factors would be held in common, that is, they are additive, especially, if as many elements as possible are maintained constant during the two readings, i.e. maintaining the same reader, detectors, typical temperature profile, \&c. Unfortunately, not all processes may be eliminated by this means, such as electronic processes; nevertheless, the contribution of these factors to confound would also be kept at a minimum. As was noted earlier, section \ref{metrological_sect}, the noise floor can accumulate over time, resulting in bias; yet, the error between any two consecutive reads was held at a minimum, i.e. the time delay between readings was held at a minimum. 

Now, consider the time delay between each reading to be large, $\tau\gg\mathrm{T}$, where T is some period of time considered excessive, then the error term $\eta_{1+\tau}$ will include influences from many potential factors, quite possibly altering the system dramatically, hence, this would be reflected in the system constant $f_{1+\tau}$; thus, the division now is not as effective as we would hoped for, in fact, given the general trend for the system to track toward being less sensitive as time goes on, then one may surmise the latter reading to be generally less in magnitude. This would bias the ratio to be higher than it should be, forcing any estimate for the dose to overshoot the actual true dose. Equally, if an element is changed, say the reader or the dosimeter used, then the constant factor would change, quite possibly, it could change dramatically; thus, the ratio would more be a measure of the this change rather than the relation between field and standard doses. 

In contrast, if the time delay is kept at a minimum, $\tau\ll\mathrm{T}$, then at most, the error term is affected by electronic processes, which are beyond any reasonable control and are random processes. What is the magnitude of their influence? From the data in hand, it would be at most 1.2\%; but, that larger error value is more the exception than the rule; thus, a value more typical of the average error in the system is 0.2\% relative percent error.

Without any loss in generality, let's assume that a field dose has been read and it is desired to know the absolute strength of this dose. Then, it is a matter of taking the ratio of the field response and internal dose response, then to multiply this ratio by some figure which relates the internal source to an accredited external source. In other words, to translate or map the ratio of field responses, multiplication by the calibrated strength of the internal source should suffice, \emph{viz}.:
\begin{equation}
 \mathrm{D}_\mathrm{f}=\left(\Bigg\langle \mathrm{D}_\mathrm{std}\frac{R_\mathrm{int}}{R_\mathrm{std}}\Bigg\rangle\pm\hat{\mathrm{s}}_\mathrm{std}\right)\times\left(\frac{R_{\mathrm{f}}}{R_\mathrm{int}}\pm\eta_\mathrm{read}\right), 
\end{equation}
where the brackets $\langle\rangle$ indicate the average value, which was determined upon calibrating the internal radiation source, thereby, relating the internal radiation source to some external traceable standard. Since the internal radiation source is now calibrated, this source may be used to standardize the unknown dose; thus, after reading the magnitude of the unknown dose, it is necessary to relate the response of the dosimeter to a known standard, hence, after dosing with the internal source, the dosimeter is read once more. 

In some sense, it is truly immaterial if the source to be standardized is an internal radiation source or an unknown source from a field dosimeter \textendash\ the two operations are equivalent, for what is essentially accomplished is to relate the relative strength of two sources, one known, by way of their relative responses; the benefit of using a relative scheme is to cancel any mutual additive error. So, the process is equivalent in both cases, but in each step, we step one step further from the external source; thus, it should be remembered that with each reading, to the final tally, is added some reading error. 

In multiplying the field response ratio by the average calibration value, it is best to propagate the error through the original set \textbf{a}, before averaging, then formulate the final quantity that will transform all magnitudes to an estimate for the field dosimeter. So, the quotient \textbf{Q} will be further modified by the multiplication of all terms in equation (\ref{newquotient}) by the field response ratio; but, this is of no real consequence, for after taking the average of the set, the field response ratio is essentially a common factor throughout the operation and can therefore be brought out from underneath the formulation for the average, thus, yielding the following:
\begin{equation}
 \textbf{E}\left\{\textbf{D}_\mathrm{f}\right\}=\Bigg\langle \mathrm{D}_\mathrm{std}\frac{R_\mathrm{int}}{R_\mathrm{std}}\Bigg\rangle  \frac{R_{\mathrm{f}}}{R_\mathrm{int}},
\end{equation}
which, after the average is taken, results in simply the original calibration average multiplying the new field response ratio. In other words, it is not necessary to recalculate the average calibration factor every time a new field dosimeter is analyzed, hence, ease exists in calculating the absolute strength for a field dosimeter, where the average calibration quotient can be reused again and again, at least, until a new calibration of the internal source is affected. 

The reason we ever contemplate unfurling the original operations accomplished to form the calibration quotient is to properly propagate the additional error introduced by the two additional readings required of the field dosimeter. So, each element of set \textbf{a} would now involve the error associated with four readings and the uncertainty associated with the standard used to calibrate the internal source. After careful consideration, the following should be obvious: 
\begin{equation}
 \epsilon_r(\mathrm{f})=\sqrt{\hat{\mathrm{s}}_\mathrm{std}^2+\eta_\mathrm{read}^2 }\leq\sqrt{n\left(\mathrm{s}_\mathrm{std}^2+4\left(\max\{\mathrm{s}_\mathrm{read}\}\right)^2\right)};
\end{equation}
yet, what may not be so obvious is the actual meaning of the number $n$. 

Originally, $n$ counted the number of dosimeters used to calibrate the internal source, but now we have the contribution of an additional dosimeter, the field dosimeter. Set \textbf{a} constitutes the number of trials attempted to estimate the strength of the internal source and all readings were accomplished in order to facilitate this operation, hence, the number of trials equaled the number of dosimeters used to estimate this relation. Now that a field dosimeter is involved, it is tempting to think that the number of trials have increased, but they haven't. Consider one element of set \textbf{a}, this element now comprises the multiplication of the standard dose strength, division by the response of the system to that standard dose, then multiplied by the response of that dosimeter dosed by the internal source, finally, with the addition of the field response ratio, an additional multiplication by the response of the unknown dose strength held on the field dosimeter, then division by the response of that very same field dosimeter to a standard dose delivered by the internal source. All of these events do not constitute repeated trials, in fact, they all represent singular events and the only statistic one could associate with this one element would be the sum of all relative errors associated with each number multiplied. 

What may not be immediately apparent is that some uncertainty is associated with each number, in fact, one may think each reading to be exact! That is exactly what is unwittingly done when a set of dosimeter responses are averaged together and then the variance of that set is considered to be the uncertainty inherent for that determination. unfortunately, each number, a reading, a dose strength for the standard, all of these numbers have some inherent uncertainty associated with them. It is by repeated trials that one may reveal the inherent uncertainty for any of these values considered and that is what was accomplished earlier in this monograph. Much of the work accomplished was truly to ascertain what was the error associated with a reading. Now that some approximation of the reading error is in hand, then one may properly consider the total error in multiplying and dividing a set of numbers to map an unknown dose strength to some determination, based on the relative strength of a known standard dose. So, in the end, the number of trials remains the same, i.e. $n$ is equal to the number of dosimeters employed to standardize the internal radiation source. This is what constitutes a true repeated experimentation, for several attempts are made to reproduce the relation between external source to internal source. 

After summing the relative error associated with both the internal standard and the two subsequent readings of the field dosimeter, the final tally shows, quite reasonably, that each reading adds to the total uncertainty of the final calculation; hence, the farther removed from the external standard, the greater the number of total readings separating the external standard from the final evaluation. What is of greater concern is the possibility of systematic bias accumulating over multiple derivations, thus it is advisable to keep the degree of separation at a minimum. 

Ultimately, the determination of the strength for an unknown dose is really a range of values, modeled by the probability statement so familiar to statistics, where the true dose is some value ranging between two extreme values, whose extremes are controlled by a critical statistic and the sample variance, \emph{viz}.:

\begin{table}[t]
\caption{Upper bound estimates for relative percent error ($\nu=\sigma_r/\sqrt{n}$). Population estimates based on a 97.5\% two\textendash confidence, significance level ($\alpha=0.05$) and sample number ($n$). Two sets of estimates are provided, one based on the chi\textendash square distribution and the other on a \textit{t}\textendash distribution; all based on hypothetical 3\% relative standard error for the traceable standard population and a maximum of $1.2\underline{0}\%$ relative standard error for precision in a reading.}
\begin{center}
\begin{tabular}{rrr}\hline
\rule[-1ex]{0pt}{3.5ex}
samples ($n$) & $\max \nu$* & $\max \nu$**  \\
\hline
\vspace{-5.5pt} \\
2  & $12\underline{3}\,\%$  &  $51.\underline{1}\,\%$ \\
4  & $14.\underline{3}\,\%$  &  $12.\underline{5}\,\%$ \\
5  & $11.\underline{0}\,\%$  &  $10.\underline{9}\,\%$ \\
8  & $7.8\underline{2}\,\%$  &  $9.2\underline{0}\,\%$ \\
10 & $7.0\underline{1}\,\%$  &  $8.7\underline{7}\,\%$ \\
15 & $6.0\underline{6}\,\%$  &  $8.2\underline{9}\,\%$ \\
21 & $5.5\underline{5}\,\%$  &  $8.0\underline{5}\,\%$ \\
\hline
\end{tabular}\label{totalsystemerror}
\par\medskip\footnotesize
* based on $\chi^2$ distribution; ** based on \textit{t}\textendash distribution
\end{center}
\end{table}

\begin{equation}
 \textbf{E}\left\{\textbf{D}_\mathrm{f}\right\}-t_{(1-\alpha/2,n-1)}\frac{\epsilon_r(\mathrm{f})}{\sqrt{n}}\leq\textbf{D}_\mathrm{f}<\ \textbf{E}\left\{\textbf{D}_\mathrm{f}\right\}+t_{(1-\alpha/2,n-1)}\frac{\epsilon_r(\mathrm{f})}{\sqrt{n}},
\end{equation}
where the true field dose strength $\textbf{D}_\mathrm{f}$ is bounded from both above and below by the test statistic $\textbf{E}\left\{\textbf{D}_\mathrm{f}\right\}$ and appropriate subtraction or addition of the unbiased error $\epsilon_r(\mathrm{f})/\sqrt{n}$, which appropriately compensates for uncertainties associated with a limited number of trials in determining the test statistic. 

In like manner, another table may be generated listing the upper bounds in uncertainty for the final determination of an unknown dose strength, which is once again based on the maximum error of 1.2\% relative standard error for each reading. Remembering that this statistic is excessively large; nevertheless, because it is so large, it makes the statements in Table (\ref{totalsystemerror}) stronger than what would otherwise be possible.

\subsection{Common errors in dosimetry}
Throughout this monograph, various errors in statistical calculus and calibration have been frequently pointed out, these misconceptions are found to be pervasive in dosimetry systems and medical physics. The realization of these errors in common practice and the desire to address them formulate the ethos of this monograph, for it was a set of nagging calibration issues frequently plaguing dosimetry systems that initiated the investigation into the reliability of dosimeters. This lead to a series of realizations and most of these were of an elementary nature; but, despite the vulgarity of much of this monograph, the principles outlaid are crucial to proper calibration and error propagation. 

This section will cursorily cover a series of misconceptions widely in practice within the ionizing radiation dosimetry community and are stated without much proof; although, given all that has been covered in this monograph, much of the following should be quite obvious.

It is common to experience calibration issues in radiation dosimetry and, to enable ease in correcting and control, it is often practiced to implement a series of constants to correct for variations experienced throughout operation. Case in point, the Element Correction Factor (ECC) and the Reader Correction Factor (RCF) are in common usage, where the purpose of the ECC is to 'correct' for inconsistencies in a set of disjoint dosimeters to return a single value, finally, the RCF is intended to somehow 'correct' for reader drift \cite{Savva}. Without too much ado, the RCF is typically found to be roughly unity, if it falls too short or afar from unity, then a series of 'calibration' routines are initiated to once again 'correct' for any apparent loss in accuracy. The typical relation for a dose determination is stated as such:
\begin{equation}
 \mathrm{D}_\mathrm{f}=\frac{\mathrm{ECC}\times R_i}{\mathrm{RCF}}\sim\langle \mathrm{D}_\mathrm{std}\rangle\frac{R_{i,t>0}}{R_{j,\mathrm{std}}}:\big\downarrow,
\end{equation}
where the relation is simplified, assuming the RCF is roughly unity, resulting in the final relation, which proves to be disastrous.

Given the RCF is typically around unity, it may be ignored, the ECC is generated by dividing the average dose relation by the particular sensitivity for that dosimeter, hence, the resulting relation shows that the standard dose strength is multiplied by the latest reading, divided by a reading determined some time ago. Generally speaking, the ratio formed is quite obviously vulnerable to any systematic bias, where the general tendency is for the system to trend downward to less sensitivity; hence, the general trend is to under-report the dose strength for the field dosimeter, indicated by the down arrow ($\downarrow$). Common practice is to first 'calibrate' a field dosimeter before relinquishing to the field, upon return, the dosimeter is placed in the reader and read. The time delay between this initial 'calibration' might be up to three years; hence, the jeopardy involved in this practice cannot be understated, but given that the data in hand has showed the system to reduce in sensitivity by as much as 10\% bias in a little as 21 days, well, enough said...

Another common error is the practice to take a set of values and generate statistics, as if these statistics supplant proper error propagation; in other words, it is often overlooked, or simply not realized, that to each value is associated an uncertainty, i.e. each reading has some error associated with it. Statistics formed from a set of values may describe that set, but that does not mean all errors are accounted for, e.g. take a set of readings, would the variance of that set reflect the inherent error in an external accredited standard? The stated error for a standard, say 3\%, would not be evident from simply taking a series of experiments on your local system, for the values you generate contain the error inherent in your system to accurately and precisely determine a dose. In other words, the error associated with an accredited standard must be added to the measured error from your system and there is no other way to back calculate this error.  

Finally, there are more problems and errors that could be discussed, but one more common error will be pointed out in this monograph; the common error to think that statistics for a set is equivalent to statistics for the reciprocal of that set. This is a thorny issue, but Theorem \ref{def3} goes into greater detail; nevertheless, the upshot is the common practice of taking the average and variance of the reciprocal of a set and considering it to be equivalent to the statistics generated from the original set. 

Case in point, the so-called RCF, which is generated by taking the average of a set that is formed from dividing the individual reading of a set of dosimeters by the expected standard dosage they all have apparently been normalized to, at some earlier point in time, \emph{viz}.:
\begin{equation}
 \mathrm{RCF}=\Bigg\langle\frac{\mathrm{ECC}\times R_i}{\langle\mathrm{D}_\mathrm{std}\rangle}\Bigg\rangle.
\end{equation}

Notwithstanding the so-called ECC in the formula contains the very same expected dose, $\langle\mathrm{D}_\mathrm{std}\rangle$; but, the division by this set does not properly account for the inherent errors and it is not surprising, if one understands the underlying mathematical principles, that the resulting RCF always seems to average to roughly unity, moreover, the apparent variance of the RCF always appears small.

\section{Conclusion}
The question of the reliability of dosimeters under repeated use initiated the investigation that has culminated in this monograph. Initially, it was believed that the cause for any apparent imprecision and bias within a dosimetry system was to be attributed to the nagging problem of the dosimeters. Essentially, the downward trend in sensitivity observed for a set of dosimeters was believed to be due to some degradation process inherent within the thermoluminescent crystals that comprise the dosimeters. After some investigating, it immediately became apparent that the problem lie not in the dosimeters, but in the manner in which many dosimetry institutions calibrate and maintain calibration of their system. 

The primary concern, and possibly the greatest, is the unwitting projection of a relative calibration scheme over time. The practice of 'calibrating' a dosimeter and then relinquishing to the field may seem perfectly ''logical'' at first; but, after contemplating the nature of a dosimetry system, the nonstationary nature, it is quite obvious that this logic is absolutely wrong. What is effectively achieved by this practice is to unwittingly bias the doses reported, towards a lower dose than is true, also, this practice introduces a litany of nagging 'calibration' problems that seem to never be resolved; even though, various correcting constants are introduced, like the Element Correction Factor and Reader Correction Factor. In the end, such practices promote wasteful inefficiencies within a dosimetry system, including the premature discarding of perfectly good dosimeters. 

Even with frequent replacement of dosimeters, a misapplied \emph{internal standard} calibration scheme would experience, in time, an ever increasing difficulty to maintain calibration. Many dosimetry systems define calibration by the ensemble mean response from a set of disjoint dosimeters and this practice unwittingly constrains the system to that particular set of dosimeters, a set of dosimeters that will admittedly experience unique losses in sensitivity, randomly wander in time and eventually lose any memory of an earlier calibration. What is worse is that this 'special' set of dosimeters cannot account for the random bias accumulated within the system and for two important reasons: the first is that the variance of that set limits the detection for error, plus, the chosen calibration scheme, a relative scheme, cannot adequately account for additive drift in a system.

Radiation dosimetry systems are complex dynamic systems, where each component wanders randomly and independently in time, which categorizes the system as a nonstationary stochastic system. Ideally, the entire system would be fully characterized, thus enabling effective prediction of all future states of the system, thereby, enabling accurate, precise dose determinations, the intended purpose of the system as whole; but, this proposition often represents an intractable task, for there may exist numerable parts to the system, each component changing unpredictably in time, replacement of parts... the process of characterization never ceases. It is advisable to employ relative measures in any such system that may contain numerable confounding factors, this practice is very effective in extracting cardinal information from a dynamic system and at the same time avoiding the tedious and often impossible task of knowing everything about the system, that is, knowing the system \emph{absolutely}. 

Because dosimeters are intended to be dispensed to personnel, whom are in turn exposed to ionizing radiation over some period of time and potentially many different geographic local\'{e}s, that it is the expressed desire of dosimetry system administrators to somehow be able to relate a dosimetry system across time in order to determine the likely dose personnel may have experienced. The reading of a dosimeter is only the beginning of that process of ultimately calculating the equivalent dose absorbed by a human. There are a host of mappings to generate an equivalent dose, starting with the reading of a dosimeter, the calculations proceed to subtract out the likely exposure to background radiation, then further mappings are made to generate the absorbed dose of particular types of radiation, where particular biological effects are attributed to either alpha or beta particles, etc. The logic applied to dosimetry systems is that a dosimeter must be first 'calibrated' before the addition of any absorbed dose could ever be determined; moreover, it is necessary to relate the dosimeter initially to the system before relinquishing to the field and the eventual retrieval. The flaw in the first proposition is that 'calibrating' any dosimeter would have any bearing on a later reading, for it is only required to anneal the dosimeter to erase any memory. All that is required to relate or 'calibrate' a dose estimate for a dosimeter is to determine the relative response of that dosimeter to some traceable standard, which necessitates the employment of an \emph{internal standard} calibration scheme; moreover, this act of calibrating to a standard by relative means can be accomplished at any point in time, in fact, it is most efficient and accurate to perform the calibration at the particular point in time of interest and to not expect calibrations to apply across all time. The latter proposition, that it is necessary to relate the dosimetry system across time, implies that a dosimetry system determines a dose by comparing results across different time points and this aspiration is the origin of how both the desire for and perceived need for an \emph{additive standard} method creeps into the whole calibration process generally applied to many dosimetry systems of today; furthermore, the flaw in this logic is completely obvious, especially, once it is reminded that all doses are made traceable by relative standardization methods. 

The latter point is crucial to both understand why it is unnecessary to impose any additive calibration methods to a dosimetry system, in addition, to understand the origin of the fundamental fallacy that a relative metric system could ever suffice for an additive method. It was never the intent of administrators to blunder so obtusely; their intent was to control the dosimetry system over time, to prevent what they knew was at rock bottom a nonstationary system from veering off to uncharted territory, hence, an additive standard method was perceived necessary to account for the system over time, to reign in the system from time to time with periodic calibrations and account for all the various components that comprise a complex radiation dosimetry system. Despite intentions, administrators unwittingly forced what is truly a relative calibration method to behave or function as an additive standard method. The unconventional and improper use of a relative standard method comes with great peril, where, despite appearances, the system is truly under a relative standard and additive errors will both accumulate in time and go completely unnoticed; ultimately, the system can shift dramatically from a perceived point of reference, yet appear stable within a relative perspective. 

Much time could be spent on discussing the ramifications of unwittingly mixing an \emph{additive standard} calibration scheme with its complementary method, an \emph{internal standard} calibration scheme, including the potential for error analysis as a consequence; albeit, time is better served if focused on the proper method to employ for calibration, discussing the manner of implementation and the supporting statistical logic, including the requisite statistical calculus that would accompany the proposed procedure for calibrating a radiation dosimetry system. 

Although, considering the importance for accuracy and precision in radiation dosimetry, due attention has been given to existing algorithms determining dose in dosimetry systems, where certain key errors in statistical logic and/or statistical calculus was highlighted. Yet, by simply pointing out that all radioactive materials employed for standardization are completely independent of the entire system, that no matter how the system as whole may fluctuate, the radioactive materials are well characterized and follow well\textendash known, regular patterns of decay; thus, it is immediately obvious that no such effort need be made to relate a dosimetry system across two time points, for what is truly constant in time is the behavior of the radioactive material, plus, it is always possible to relate any unknown dose to a traceable standard by employing relative measures at any point in time, i.e. at whim. In addition, once it is realized that any given dosimeter need not be tied to any particular dosimetry system nor calibration point, then the possibility of liberating all dosimeters within a dosimetry system becomes obvious, where a dose determination can be made at any locale within the system without any \emph{a priori} knowledge of the history of that particular dosimeter, its sensitivity losses or anything else. 

The issue of \emph{sufficiency} in statistics is as important as any other aspect of a dosimetry system, if not the most important operation, for by virtue of the particular statistical calculus and statistical logic employed does a particular description of the dosimetry system arise, delineating the experimental space and defining the accuracy and precision of each measurement attained, also, by careful propagation of system error can administrators realize the full potential for the system under their charge. Sufficiency in statistics enables formulating a description of a dynamic system not subject to arbitrary change or procedural whims; but, provides a solid foundation upon which administrators determine dose for ionizing radiation exposure with confidence in knowing the inherent limitations in both the accuracy and precision for each determination made.  

\textbf{Acknowledgments:} The author would like to acknowledge Shannon P. Voss, USN for providing the reliability data covering TLD\textendash 700H dosimeter reuse, which also initiated the study that culminated in this monograph.

\newpage
\section{Addenda}\label{addenda} Conscientiousness requires ample disclosure regarding supportive material, especially, concerning a concept or proposition deemed essential; moreover, it is not uncommon for disparate scientific disciplines to develop language or terminology familiar exclusively to mutual colleagues, hence, a list of mathematical theorems are outlined below for the sake of thoroughness. 

Alongside several conventional mathematical definitions and theorems being listed, there are listed several informal theorems derived to facilitate, support and establish various proofs and claims made throughout the monograph.

\noindent\hrulefill\hspace{0.2cm} \floweroneleft\floweroneright \hspace{0.2cm} \hrulefill

\subsection{Formal theorems} Listed below are a few well-known or generally accepted conventional mathematical definitions and mathematical theorems. Even though many of the mathematical definitions listed could be considered elementary in nature, they were deemed essential either for purposes of clarity during discussions or for their importance to various proofs and theorems, some theorems of which are informal in nature.

\vspace{5pt}\noindent\large\bfseries{Sets/sequences.}\label{sequences}\normalfont\normalsize\ What follows is a limited discussion concerning several definitive classifications and properties for sets or sequences which will be frequently relied upon throughout many mathematical theorems and the monograph itself. 

\begin{definition}[\textbf{Sequences and sets}] A collection or grouping of numbers is referred to as a set and each individual number contained within a set is referred to as an element. A set may or may not have any inherent structure; but, the term sequence is applied to a set where the elements are arranged in a particular order.

We signify by bold lettering either a set or sequence and use regular lettering to identify elements, e.g.:
\begin{equation}
 \textbf{a}=\left\{a_1,a_2,\ldots,a_n\right\}, \left\{a|a\in\mathbb{R},\, n|n\in\mathfrak{N}\subset\mathbb{N}\setminus\emptyset,\, a_{n-1}\leq a_{n}\ \forall{n}\in\mathfrak{N}\right\},
\end{equation}
where the above reads: set $\textbf{a}$ contains n elements $a_n$, each element is some real ($\mathbb{R}$) number, the index n is a counting number ($\mathfrak{N}$) and, finally, the inequality imposes a strictly monotonically increasing structure to set $\textbf{a}$, thus, the set could be referred to as a sequence.  

An element with the largest magnitude in a set is referred to as either the greatest element or greatest member of the set, symbolized by $\max\textbf{a}$, which reads: the maximum of set $\textbf{a}$; similarly, the least element is equivalent to the minimum of a set, which is symbolized by $\min\textbf{a}$.

\end{definition}

\begin{definition}[\textbf{Reciprocal set}] Consider set $\textbf{a}$, taking the reciprocal of all elements yields the reciprocal set $\textbf{a}^{-1}$, i.e.
\begin{equation}
 \textbf{a}^{-1}=\left\{a_1^{-1},a_2^{-1},\ldots,a_n^{-1}\right\}=\left\{\frac{1}{a_1},\frac{1}{a_2},\ldots,\frac{1}{a_n}\right\}
\end{equation}

It is obvious no member of set $\textbf{a}$ can equal zero, for the reciprocal of the zero element does not exist.
\end{definition}

\begin{definition}[\textbf{Symmetry}] If the reciprocal of set $\textbf{a}$ equals the original set $\textbf{a}$, then set $\textbf{a}$ is referred to as a symmetric set. Specifically, a symmetric set $\textbf{a}$ has the property that the reciprocal of any element equals another element within that same set, i.e.
\begin{equation}
 \textbf{a}=\textbf{a}^{-1},\ \left\{a_n^{-1}=a_k\Bigg|\{a_n,a_k\}\in\textbf{a}\right\}
\end{equation}
\end{definition}

\begin{definition}[\textbf{Monotonicity}] Monotonicity refers to the ordered structure of a sequence and is defined monotonically increasing (decreasing), depending on whether each element is greater than (less than) or equal to the next element in the sequence; furthermore, if each element is only greater than (less than) the next element, the descriptor strictly is added.
\begin{subequations}
\begin{align}\label{monotone1}
 a_n\leq a_{n+1}&,\ \forall{n}\in\mathfrak{N}\\\label{monotone2}
 a_n< a_{n+1}&,\ \forall{n}\in\mathfrak{N}
\end{align}
\end{subequations}

Specifically, equation (\ref{monotone1}) is strictly monotonically increasing and equation (\ref{monotone2}) is simply monotonically increasing; the opposite cases for decreasing sequences is obvious. 

Most of the discussion within the monograph concerns dosimeter readings and a collection of such elements constitutes a set of readings, but we generally assign the property of monotonicity in order to emphasis that the first and last element in the sequence is either the least or greatest member of that sequence.
\end{definition}

\begin{definition}[\textbf{Bounds}] The property of being bounded from above (below) refers to whether or not there is a number that is greater than (less than) or equal to the greatest (least) element in a sequence. Let M be a real number, if the greatest element in sequence $\textbf{a}$, written as $\max\textbf{a}$, is less than or equal to M, the sequence is said to be bounded from above; the opposite case is obvious.
\begin{equation}
 \max\textbf{a}\leq M
\end{equation}

The bound can be either strict or not. In the case of a sequence being bounded from both above and below, that sequence is said to be bounded \cite{Inequalities}.
\end{definition}
\begin{theorem}[\textbf{Bounds for arithmetic means}]\label{lawmean} Let $\textbf{a}$ be a strictly monotonically increasing sequence with n elements: 
\begin{equation}
 \textbf{a}=\left\{a_1,a_2,\ldots,a_n\right\}
\end{equation}

The arithmetic mean $\mathfrak{A}$ of sequence $\textbf{a}$ applies addition to all elements in that sequence. Because a monotonic sequence contains two elements, one of which is the greatest and the other the least of all elements in the sequence, the arithmetic mean is strictly bounded by those two extreme valued elements, viz.:
 \begin{equation}
  \min\textbf{a}\leq \mathfrak{A}\{\textbf{a}\}\leq \max\textbf{a}
 \end{equation}

 This is true except for when all elements are equal to one another, in that case, the mean is identical to all elements \cite{Inequalities}.
\end{theorem}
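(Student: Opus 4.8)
The plan is to derive the two-sided bound directly from the order structure of $\mathbf{a}$, with no appeal to anything deeper than finite summation and division by a positive integer. First I would observe that monotonicity pins down the extreme members: since $a_k \le a_{k+1}$ for all admissible $k$, the least element is $a_1$ and the greatest is $a_n$, that is, $\min\mathbf{a}=a_1$ and $\max\mathbf{a}=a_n$. Hence it suffices to prove $a_1 \le \mathfrak{A}\{\mathbf{a}\} \le a_n$, where $\mathfrak{A}\{\mathbf{a}\}=\frac{1}{n}\sum_{i=1}^{n} a_i$.

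The core step is the elementwise sandwich $a_1 \le a_i \le a_n$, valid for every $i\in\{1,\ldots,n\}$ by transitivity of $\le$ along the monotone chain. Summing this triple inequality over $i$ gives $n\,a_1 \le \sum_{i=1}^{n} a_i \le n\,a_n$, and dividing through by $n>0$ (legitimate because $\mathfrak{N}\subset\mathbb{N}\setminus\emptyset$, so $n\ge 1$) yields $a_1 \le \frac{1}{n}\sum_{i=1}^{n} a_i \le a_n$, which is the assertion.

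For the equality clause I would trace when the left inequality $n\,a_1 \le \sum_{i=1}^{n} a_i$ becomes an equality: this happens precisely when $a_i=a_1$ for every $i$, i.e. when the whole list is constant; the same reasoning applied to the right inequality gives the symmetric conclusion, so both bounds are attained only in the degenerate constant case. For $n\ge 2$ this forces the ``strictly increasing'' hypothesis to collapse, which is exactly why the statement carves it out as an exception; when $\mathbf{a}$ is genuinely strictly increasing with $n\ge 2$ one obtains strict inequalities on both sides.

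I do not expect a real obstacle here; the only care needed is bookkeeping the hypotheses — using monotonicity to identify $\min\mathbf{a}$ and $\max\mathbf{a}$ with $a_1$ and $a_n$, and noting $n\ge 1$ so that division is allowed and the $n=1$ case (the mean equals the single element) is consistent with the stated exception. If one wanted the theorem phrased for an arbitrary unordered nonempty finite set, the identical argument goes through verbatim with $a_1$ replaced by $\min\mathbf{a}$ and $a_n$ by $\max\mathbf{a}$ throughout, since those quantities exist for any such set.
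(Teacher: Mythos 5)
Your proof is correct and is the standard elementary argument: sandwich each $a_i$ between $\min\textbf{a}$ and $\max\textbf{a}$, sum over $i$, and divide by $n>0$. The paper itself supplies no proof of this theorem, merely citing \cite{Inequalities}, so there is nothing to compare against; your treatment of the equality case (both bounds attained only for a constant list, which is excluded by strict monotonicity when $n\geq 2$) is a welcome clarification of the theorem's slightly loose wording, which says ``strictly bounded'' in prose while displaying non-strict inequalities.
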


\begin{theorem}[\textbf{An inequality for arithmetic means}]\label{arithmetic} In general, the arithmetic mean $\mathfrak{A}$ of set $\textbf{a}$ is generally not equal to the arithmetic mean of the reciprocal set \cite{Inequalities}, viz.:
\begin{equation}
 1\leq\mathfrak{A}(\textbf{a})\mathfrak{A}\left(\frac{1}{\textbf{a}}\right).
\end{equation}
\end{theorem}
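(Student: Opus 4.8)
The plan is to reduce the claimed inequality to an elementary statement about the expansion of the product $\left(\sum_{i=1}^n a_i\right)\left(\sum_{j=1}^n a_j^{-1}\right)$. First I would write out both arithmetic means explicitly, $\mathfrak{A}(\textbf{a}) = \frac{1}{n}\sum_{i=1}^n a_i$ and $\mathfrak{A}(1/\textbf{a}) = \frac{1}{n}\sum_{j=1}^n a_j^{-1}$, so that the quantity to be bounded below becomes $\frac{1}{n^2}\left(\sum_{i} a_i\right)\left(\sum_{j} a_j^{-1}\right)$. Note at the outset that every $a_i$ is strictly positive (each is a dosimeter reading, and in any case the reciprocal set $\textbf{a}^{-1}$ is only defined when no element vanishes), so all manipulations below are legitimate.

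Next I would expand the double sum as $\sum_{i,j} a_i/a_j$, separate out the $n$ diagonal terms $a_i/a_i = 1$, and pair each off-diagonal pair $(i,j)$ with $(j,i)$ to obtain $n + \sum_{i<j}\left(a_i/a_j + a_j/a_i\right)$. The key step is then the one-variable bound $x + x^{-1} \geq 2$ valid for every $x > 0$, which is immediate from $(\sqrt{x} - 1/\sqrt{x})^2 \geq 0$; applying it to each of the $\binom{n}{2}$ pairs gives $\left(\sum_i a_i\right)\left(\sum_j a_j^{-1}\right) \geq n + 2\binom{n}{2} = n^2$. Dividing through by $n^2$ yields $\mathfrak{A}(\textbf{a})\,\mathfrak{A}(1/\textbf{a}) \geq 1$, which is the assertion, with equality precisely when all $a_i$ are equal, consistent with the remark accompanying Theorem \ref{lawmean}.

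An alternative and even shorter route is to apply the Cauchy--Schwarz inequality directly to the vectors $(\sqrt{a_1},\dots,\sqrt{a_n})$ and $(1/\sqrt{a_1},\dots,1/\sqrt{a_n})$: their inner product is exactly $n$, so $n^2 \leq \left(\sum_i a_i\right)\left(\sum_j a_j^{-1}\right)$ follows at once, and one concludes as before. Either way there is no substantive obstacle; the only point demanding care is the positivity hypothesis, which is what makes the reciprocal set well defined and makes $x + x^{-1} \geq 2$ available. The statement is really the arithmetic-mean/harmonic-mean inequality in disguise, since $1/\mathfrak{A}(1/\textbf{a})$ is the harmonic mean of $\textbf{a}$ and is dominated by $\mathfrak{A}(\textbf{a})$.
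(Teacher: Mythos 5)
Your proof is correct, but it follows a genuinely different route from the paper's. The paper proves Theorem \ref{arithmetic} by invoking Tchebycheff's sum inequality for two oppositely monotone sequences: it orders $\textbf{a}$ increasingly, observes that $\textbf{a}^{-1}$ is then decreasing, and applies $\frac{1}{n}\sum_k a_k b_k\leq\bigl(\frac{1}{n}\sum_k a_k\bigr)\bigl(\frac{1}{n}\sum_k b_k\bigr)$ with $b_k=1/a_k$, so that the left side collapses to $1$. You instead expand $\bigl(\sum_i a_i\bigr)\bigl(\sum_j a_j^{-1}\bigr)$ into the double sum $\sum_{i,j}a_i/a_j$, peel off the $n$ diagonal terms, and bound each symmetrized off-diagonal pair by the elementary inequality $x+x^{-1}\geq 2$ for $x>0$, obtaining the lower bound $n+2\binom{n}{2}=n^2$; your Cauchy--Schwarz variant with the vectors $(\sqrt{a_i})$ and $(1/\sqrt{a_i})$ reaches the same bound in one line. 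Both of your arguments are self-contained and need no citation to a named sum inequality, and they make clear that strict positivity of the elements is the only hypothesis actually required --- whereas the paper's proof additionally assumes the least member exceeds unity, a condition that is superfluous for this theorem (it matters only for the later corollaries). You also correctly identify the equality case (all elements equal) and the interpretation as the arithmetic-mean/harmonic-mean inequality, neither of which the paper makes explicit. The paper's route buys a slightly shorter derivation for a reader who already has Tchebycheff's inequality in hand; yours buys elementarity and a sharper statement of the hypotheses.
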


\begin{proof} Consider two sequences, where set $\textbf{a}$ in monotonically increasing and set $\textbf{b}$ is monotonically decreasing, thus
\begin{subequations}
 \begin{align}
  \textbf{a}=&\left\{a_1\leq a_2\leq a_3\leq\ldots\leq a_n\right\},\\
  \textbf{b}=&\left\{b_1\geq b_2\geq b_3\geq\ldots\geq b_n\right\},
 \end{align}

\end{subequations}
and the least member of set $\textbf{a}$ is greater than unity.

Tchebycheff's inequality for two monotonically opposing sequences states \cite{Inequalities}:
\begin{equation}
 \frac{1}{n}\sum_{k=1}^n{a_kb_k}\leq\left(\frac{1}{n}\sum_{k=1}^n{a_k}\right)\left(\frac{1}{n}\sum_{k=1}^n{b_k}\right),
\end{equation}

Now, if set $\textbf{b}$ is the reciprocal of set $\textbf{a}$, that is, $b_k=1/a_k$, then
\begin{equation}
 1=\frac{1}{n}\sum_{k=1}^n{a_k\frac{1}{a_k}}\leq\left(\frac{1}{n}\sum_{k=1}^n{a_k}\right)\left(\frac{1}{n}\sum_{k=1}^n{\frac{1}{a_k}}\right)=\mathfrak{A}\hspace{-1pt}\left(\textbf{a}\right)\mathfrak{A}\hspace{-1pt}\left(\textbf{a}^{-1}\right).
\end{equation}

Done.\end{proof}

\begin{theorem}[\textbf{Minkowski's inequality}]\label{minkowski} The following inequality exists for two real numbers a and b:
 \begin{equation}
  \left(a^2+b^2\right)^{1/2}\leq a+b,\ \{(a,b)\in\mathbb{R}\}
 \end{equation}
This relation becomes an equality if either a or b equals zero \cite{Inequalities}.
\end{theorem}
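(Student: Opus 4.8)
The plan is to reduce the inequality to an elementary nonnegativity statement by squaring both sides. First I would observe that the relation, as it is actually invoked in this monograph, concerns nonnegative quantities (the $a$ and $b$ arise as standard deviations in the error-propagation computations of Section~\ref{metrological_sect}), so both $(a^2+b^2)^{1/2}$ and $a+b$ are nonnegative; hence squaring is order-preserving and the claimed inequality is equivalent to $a^2 + b^2 \leq (a+b)^2$. Expanding the right-hand side gives $a^2 + 2ab + b^2$, so after cancelling $a^2 + b^2$ the inequality collapses to $0 \leq 2ab$, which holds whenever $a$ and $b$ are both nonnegative. That is the whole argument; no heavier machinery — not even the Cauchy--Schwarz or triangle-inequality apparatus that the name ``Minkowski'' suggests — is needed for this two-variable case.

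Next I would read the equality case directly off this reduction. Equality in $(a^2+b^2)^{1/2} = a+b$ forces $0 = 2ab$, i.e. $ab = 0$, which occurs precisely when $a = 0$ or $b = 0$; conversely, if (say) $b = 0$ then $(a^2)^{1/2} = a = a + b$, so the condition is both necessary and sufficient. This yields exactly the stated ``becomes an equality if either $a$ or $b$ equals zero.''

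The one point that will need care — and the step I expect to be the main obstacle — is the domain. Written verbatim for all real pairs the inequality is false: taking $a = b = -1$ gives left side $\sqrt{2} > 0 > -2 = a+b$. I would therefore either restrict explicitly to $a, b \geq 0$, which is the only situation used anywhere in the monograph, or state and prove the genuinely general form $(a^2+b^2)^{1/2} \leq \lvert a\rvert + \lvert b\rvert$ (obtained by running the same squaring argument on $\lvert a\rvert$ and $\lvert b\rvert$), which specializes to the quoted inequality whenever $a, b \geq 0$. Apart from fixing this hypothesis, the proof is a single line of algebra.
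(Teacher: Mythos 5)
Your argument is correct and complete, and it is worth noting that the monograph itself supplies no proof of this theorem at all --- it is listed among the ``formal theorems'' and simply cited to the reference \cite{Inequalities}, so there is no in-paper argument to compare against. Your squaring reduction to $0\leq 2ab$ is the standard elementary route for the two-variable case, and your treatment of the equality case (equality iff $ab=0$, i.e.\ iff $a=0$ or $b=0$) actually sharpens the paper's one-directional claim into an if-and-only-if. Most importantly, you have caught a genuine defect in the statement as printed: with the hypothesis $\{(a,b)\in\mathbb{R}\}$ taken literally the inequality is false (your counterexample $a=b=-1$ gives $\sqrt{2}\not\leq -2$), and the correct hypothesis is $a,b\geq 0$, or equivalently one should state $\left(a^2+b^2\right)^{1/2}\leq \lvert a\rvert+\lvert b\rvert$. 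This restriction is harmless for the monograph's purposes, since every invocation of the theorem (in Theorem \ref{stdevproof} and in the error-propagation bounds of section \ref{metrological_sect}) applies it to standard deviations and moduli of readings, which are nonnegative; but the hypothesis should be amended in the statement, and your proof, with that amendment, is the one the paper ought to contain.
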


\begin{center}
\vspace{-0pt}\hrulefill\hspace{0.2cm} \floweroneleft\floweroneright \hspace{0.2cm} \hrulefill
\end{center}

\subsection{Informal theroems.} Below are a couple of informal theorems derived specifically for the purposes of this monograph. The first theorem declares that the standard deviation of a set is bounded from above by the least member subtracted from the greatest member from that set. This result is interesting in itself, but is primarily used to prove a set of corollary results regarding statistics from both a set and its reciprocal. 

Keep in mind, all statistics discussed, within the confines of this monograph, are focused on dosimeter readings, such readings are generally positive valued results and greater than zero. There are many choices as to how to treat these readings and thus form some Cartesian product space, i.e. the experimental space. Contention arises in assuming that the experimental space described by generating statistics from a set of dosimeter readings is the very same experimental space formed by dividing individually each dosimeter reading, then generating statistics from the resultant reciprocal set. 
\begin{theorem}[\textbf{Upper bound on standard deviation}]\label{stdevproof} We claim the standard deviation is strictly bounded from above by the difference between the greatest and least elements in a sequence, viz.:
\begin{equation}\label{sigmabound}
 \sigma<\max \textbf{a}-\min \textbf{a}
\end{equation}
\end{theorem}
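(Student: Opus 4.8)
The plan is to bound each individual deviation $|a_i-\mathfrak{A}(\textbf{a})|$ of an element from the arithmetic mean by the range $R=\max\textbf{a}-\min\textbf{a}$, and then conclude that the standard deviation, being a kind of average of these deviations, cannot reach $R$ either. The first step is to pin down where the mean sits: by Theorem~\ref{lawmean}, $\min\textbf{a}\le\mathfrak{A}(\textbf{a})\le\max\textbf{a}$, and the equality clause of that theorem makes the inclusion \emph{strict} as soon as the elements are not all equal --- if $\mathfrak{A}(\textbf{a})$ equalled $\min\textbf{a}$ then $\sum_{i}(a_i-\min\textbf{a})=0$ with every summand nonnegative would force $a_i=\min\textbf{a}$ for all $i$, and symmetrically at the other end. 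Hence $\mathfrak{A}(\textbf{a})\in(\min\textbf{a},\max\textbf{a})$, and so for every $i$, $|a_i-\mathfrak{A}(\textbf{a})|\le\max\{\mathfrak{A}(\textbf{a})-\min\textbf{a},\ \max\textbf{a}-\mathfrak{A}(\textbf{a})\}<R$, because each of the two quantities in the maximum is itself strictly below $R$.

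Squaring and averaging then closes the population-variance case at once: $\sigma^{2}=\frac{1}{n}\sum_{i}(a_i-\mathfrak{A}(\textbf{a}))^{2}<\frac{1}{n}\cdot nR^{2}=R^{2}$, so $\sigma<R$. For the sample-variance normalisation $1/(n-1)$ --- and to recover the sharper constant that the crude applications in the monograph never actually need --- I would instead start from the elementary observation that $a_i\in[\min\textbf{a},\max\textbf{a}]$ gives $(a_i-\min\textbf{a})(\max\textbf{a}-a_i)\ge0$; summing over $i$ and rearranging yields $\sum_i a_i^{2}\le(\max\textbf{a}+\min\textbf{a})\sum_i a_i-n\min\textbf{a}\max\textbf{a}$, so that the identity $\sigma^{2}=\mathfrak{A}(\textbf{a}^{2})-\mathfrak{A}(\textbf{a})^{2}$ quoted in the main text gives $\sigma^{2}\le(\max\textbf{a}-\mathfrak{A}(\textbf{a}))(\mathfrak{A}(\textbf{a})-\min\textbf{a})$. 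Since $4xy\le(x+y)^{2}$ with $x+y=R$, the right-hand side is at most $(R/2)^{2}$, so $\sigma\le R/2<R$; and this conclusion is insensitive to the normalisation, because replacing $1/n$ by $1/(n-1)$ only multiplies it by $n/(n-1)\le 2$ for $n\ge2$, still leaving $\sigma^{2}\le R^{2}/2<R^{2}$.

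The one genuine subtlety, and the place the argument can break, is the degenerate case: if all elements coincide then $R=0$ and $\sigma=0$, so the \emph{strict} inequality of equation~(\ref{sigmabound}) fails. This is precisely why the monotonicity convention inherited from the definitions of this section matters --- it excludes the constant sequence and guarantees $R>0$, which is also what makes the mean strictly interior in the first step; I would therefore state the non-constancy (equivalently $R>0$) hypothesis explicitly at the outset. Beyond that, everything is routine bookkeeping: locating the mean via Theorem~\ref{lawmean}, the per-element deviation estimate, and a single use of $4xy\le(x+y)^{2}$, which one may skip altogether if only the weaker stated bound is wanted, since an average of quantities each strictly below $R^{2}$ is itself strictly below $R^{2}$.
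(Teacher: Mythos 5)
Your proof is correct, but it is a genuinely different argument from the one in the monograph. The paper starts from the identity $\sigma^2=\textbf{E}\{\textbf{a}^2\}-(\textbf{E}\{\textbf{a}\})^2$, bounds the two expectations separately via Theorem \ref{lawmean} to get $(\min\textbf{a})^2+\sigma^2<(\max\textbf{a})^2$, and then invokes Minkowski's inequality to pass to $\min\textbf{a}+\sigma<((\min\textbf{a})^2+\sigma^2)^{1/2}$; but Theorem \ref{minkowski} asserts $(a^2+b^2)^{1/2}\le a+b$, i.e.\ the \emph{opposite} direction, and the intermediate bound $\sigma^2<(\max\textbf{a})^2-(\min\textbf{a})^2=(\max\textbf{a}-\min\textbf{a})(\max\textbf{a}+\min\textbf{a})$ is in general far weaker than $(\max\textbf{a}-\min\textbf{a})^2$ (take $\textbf{a}=\{100,101\}$), so the paper's chain does not actually close. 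You instead bound each deviation $|a_i-\mathfrak{A}(\textbf{a})|$ by the range $R$ directly and average, which proves the stated inequality with no detour through squares of the raw readings; your second pass via $(a_i-\min\textbf{a})(\max\textbf{a}-a_i)\ge 0$ recovers the sharp Popoviciu bound $\sigma\le R/2$ and is what makes the result survive the $1/(n-1)$ normalisation the paper actually uses for its sample variances. Your flagging of the degenerate constant sequence (where the strict inequality fails and $R=0$) is also a point the paper's proof glosses over; stating the hypothesis $R>0$ explicitly, as you propose, is the right fix. In short: your route is more elementary, actually valid, and strictly stronger than what the paper establishes.
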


\begin{proof} Starting with the equivalent definition for variance as the difference between the expectation of the squared sequence $\textbf{a}$ minus the square of the expectation of that sequence \cite{Papoulis}, i.e. 
 \begin{equation}
  \sigma^2=\textbf{E}\{\textbf{a}^2\}-\left(\textbf{E}\{\textbf{a}\}\right)^2
 \end{equation}

First adding to each side of the equality the second term on the right\textendash hand side of above definition for the variance, then applying Theorem \ref{lawmean} to both arithmetic means, the definition for the variance is transformed to an inequality, stating that the addition of the variance and the square of the least member of set $\textbf{a}$ is strictly less than the greatest member of the squared set, \emph{viz}.:
\begin{equation}
 \left(\min\textbf{a}\right)^2+\sigma^2<\left(\textbf{E}\{\textbf{a}\}\right)^2+\sigma^2=\textbf{E}\{\textbf{a}^2\}<\max\textbf{a}^2
\end{equation}

The least member of the set squared is just some positive number, also, the square of the greatest member is equivalent to greatest of the squared members, thus $\max\textbf{a}^2\equiv\left(\max\textbf{a}\right)^2$. Since each term of the inequality is some positive number, we may take the square root of both sides of the inquality, but, after applying Minkowski's inequality, Theorem \ref{minkowski}, the relation reduces to the following:
\begin{equation}
 \min\textbf{a}+\sigma<\left(\left(\min\textbf{a}\right)^2+\sigma^2\right)^{1/2}<\max\textbf{a}
\end{equation}

Finally, to each side of the relation is subtracted the least member and the original claim is proved.

Done.
\end{proof}

The upper bound just placed on the standard deviation is a very general result and has wide applicability; but, mainly allows ease in talking about statistics possessed by some set of readings, ease in manipulation of various mathematical relations associated with such statistics and ease in comprehension during further complications.

\begin{corollary}[\textbf{Absolute standard deviation: Inequality between a set and its reciprocal}]\label{def1} We claim the standard deviation of a set $\textbf{a}$, whose least member is greater than unity, is strictly greater than the variance of the reciprocal set; specifically,
\begin{equation}
 \sigma(\textbf{a})>\sigma(\textbf{a}^{-1}).
\end{equation} 
\end{corollary}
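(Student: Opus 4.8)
The plan is to reduce the whole claim to a term-by-term comparison, exploiting that the map $x\mapsto 1/x$ strictly contracts distances on the half-line $(1,\infty)$. Concretely, for any two elements $a_i,a_j$ of $\textbf{a}$ one has
\[
 \left|\frac{1}{a_i}-\frac{1}{a_j}\right|=\frac{|a_i-a_j|}{a_i a_j}\leq |a_i-a_j|,
\]
because the hypothesis $\min\textbf{a}>1$ forces $a_i a_j>1$; moreover the inequality is strict whenever $a_i\neq a_j$. So the first thing I would do is square this and sum it over all pairs, after recalling the pairwise (``Gini'') representation of the sample variance
\[
 \sigma^2(\textbf{a})=\frac{1}{2n(n-1)}\sum_{i=1}^{n}\sum_{j=1}^{n}(a_i-a_j)^2,
\]
which holds with exactly the same constant for any finite $n$-element set, in particular for $\textbf{a}^{-1}$ (if one prefers the population variance, the constant is $1/(2n^2)$; this is irrelevant to the comparison). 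Then
\[
 \sum_{i,j}\left(\frac{1}{a_i}-\frac{1}{a_j}\right)^2=\sum_{i,j}\frac{(a_i-a_j)^2}{(a_i a_j)^2}\leq\sum_{i,j}(a_i-a_j)^2,
\]
whence $\sigma^2(\textbf{a}^{-1})\leq\sigma^2(\textbf{a})$, and taking square roots gives $\sigma(\textbf{a}^{-1})\leq\sigma(\textbf{a})$.

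To get the \emph{strict} inequality asserted, I would note that strictness of the summed inequality only needs one pair with $a_i\neq a_j$, i.e. that $\textbf{a}$ is not the constant set; I would state this hypothesis explicitly, since a genuine reading sequence (the only case of interest) always varies, while for a constant set both standard deviations vanish and only non-strict equality can hold. An alternative derivation that sidesteps the pairwise identity, and which uses a theorem already in hand, is this: the arithmetic mean minimizes mean-square deviation, so
\[
 \sigma^2(\textbf{a}^{-1})\leq\frac{1}{n}\sum_{i=1}^{n}\left(\frac{1}{a_i}-\frac{1}{\bar a}\right)^2=\frac{1}{n}\sum_{i=1}^{n}\frac{(a_i-\bar a)^2}{a_i^2\,\bar a^2}\leq\frac{1}{n}\sum_{i=1}^{n}(a_i-\bar a)^2=\sigma^2(\textbf{a}),
\]
where $\bar a=\mathfrak{A}(\textbf{a})>1$ and each $a_i>1$ make $a_i^2\bar a^2>1$; here Theorem~\ref{arithmetic} (the inequality $\mathfrak{A}(\textbf{a})\mathfrak{A}(\textbf{a}^{-1})\geq 1$) identifies precisely when the first step is an equality, namely only for the constant set, via the equality case of Tchebycheff's inequality used in its proof.

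I expect the only real obstacle to be bookkeeping rather than any genuine difficulty: justifying the pairwise variance identity cleanly (a short but not wholly trivial algebraic reduction), and being scrupulous about the degenerate constant-set case so that the ``strictly greater'' in the statement is honestly earned. Everything else is the one-line contraction estimate above applied term by term.
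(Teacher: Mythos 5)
Your proof is correct, and it takes a genuinely different --- and in fact stronger --- route than the paper's. The paper applies Theorem \ref{stdevproof} to both $\textbf{a}$ and $\textbf{a}^{-1}$, obtaining $\sigma(\textbf{a})<\max\textbf{a}-\min\textbf{a}$ and $\sigma(\textbf{a}^{-1})<1/\min\textbf{a}-1/\max\textbf{a}$, and then argues ``by inspection'' that because the second upper bound is smaller than the first (the reciprocal set being mapped into the unit interval), the standard deviations themselves must be ordered the same way. That step is logically incomplete: knowing $X<B_1$ and $Y<B_2$ with $B_2<B_1$ does not order $X$ against $Y$. Your term-by-term contraction estimate --- $\left|1/a_i-1/a_j\right|=\left|a_i-a_j\right|/(a_i a_j)<\left|a_i-a_j\right|$ whenever $a_i\neq a_j$ and $a_i a_j>1$ --- fed through the pairwise representation of the variance (or, in your alternative, through the fact that the mean minimizes mean-square deviation, taking $1/\bar a$ as a suboptimal center for the reciprocal set) closes that gap and honestly delivers the inequality, so what your approach buys is an actual proof where the paper offers only a comparison of bounds. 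You are also right to isolate the degenerate constant set: there both standard deviations vanish, the strict ``$>$'' of the corollary fails, and an explicit non-constancy hypothesis is required --- a point the paper's proof never addresses. The only bookkeeping you defer, the pairwise variance identity with its $1/(2n(n-1))$ or $1/(2n^2)$ normalization, is a standard two-line expansion and does not threaten the argument; as you note, the choice of normalization is immaterial since the same constant multiplies both sides.
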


\begin{proof} Let set $\textbf{a}$ be a monotonically decreasing sequence, whose least member is greater than unity, thus
 \begin{align}
  \textbf{a}&=\{a_1,a_2,\ldots,a_n\},\\\nonumber
    \max\textbf{a}&=a_1,\ \min\textbf{a}=a_n>1.
 \end{align}

The reciprocal of the sequence $\textbf{a}$ is equal to the reciprocal of each of its members, thus
 \begin{equation}
  \textbf{a}^{-1}=\left\{a_1^{-1},a_2^{-1},\ldots,a_n^{-1}\right\}=\left\{\frac{1}{a_1},\frac{1}{a_2},\ldots,\frac{1}{a_n}\right\};
 \end{equation}
also, it is obvious that the least member of sequence $\textbf{a}$ is the greatest member of its reciprocal sequence $\textbf{a}^{-1}$, namely,
\begin{equation}
  \max\textbf{a}^{-1}=\frac{1}{a_n},\ \min\textbf{a}^{-1}=\frac{1}{a_1}>0,
\end{equation}
where the least member of the reciprocal sequence is definitely greater than zero. 

Applying Theorem \ref{stdevproof}, which places an upper bound on the standard deviation; then, both the standard deviation of set \textbf{a} and its reciprocal are bounded from above, thus
\begin{subequations}
 \begin{align}\label{stdevset}
  \sigma(\textbf{a})&<a_1-a_n,\\\label{stdevrecip}
  \sigma(\textbf{a}^{-1})&<\frac{1}{a_n}-\frac{1}{a_1}.
 \end{align}
\end{subequations}

By inspection, both inequalities are in general not equal to one another; in fact, the only case where both would be equal, would be for symmetric sets alone; but, this possibility has been ruled out, for set $\textbf{a}$ is bounded from below by unity.  

Now, since the least member of set $\textbf{a}$ is greater than unity, the reciprocal set will be mapped into the open unit interval, [0,1], it is a trivial matter to see the statistic defined by equation (\ref{stdevset}) would in general be greater than equation (\ref{stdevrecip}).  

Done.\end{proof} 
\vspace{2.5pt}
\begin{corollary}[\textbf{Equivalence for the relative standard deviation}]\label{def2} We claim equivalence between the relative standard deviation of set $\textbf{a}$ and the relative standard deviation of the reciprocal set $\textbf{a}^{-1}$, viz.:
\begin{equation}
  \sigma_r(\textbf{a})\geq\sigma_r(\textbf{a}^{-1}),
\end{equation}
where equivalence does not mean equality, necessarily; but, does mean both measures definitely approach one another.
\end{corollary}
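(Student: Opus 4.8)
The plan is to exploit the scale\textendash invariance of the relative standard deviation. Since $\sigma_r(c\,\textbf{a})=\sigma_r(\textbf{a})$ for every $c>0$, nothing is lost in comparing $\sigma_r(\textbf{a})$ with $\sigma_r(\textbf{a}^{-1})$ head\textendash on, and the comparison can be pushed through entirely with the crude estimates already established. Write $\textbf{a}=\{a_1\geq a_2\geq\cdots\geq a_n\}$ with $a_n>1$, so that $\max\textbf{a}=a_1$, $\min\textbf{a}=a_n$, and correspondingly $\max\textbf{a}^{-1}=1/a_n$, $\min\textbf{a}^{-1}=1/a_1$. First I would apply Theorem \ref{stdevproof} to each sequence, obtaining $\sigma(\textbf{a})<a_1-a_n$ and $\sigma(\textbf{a}^{-1})<\tfrac1{a_n}-\tfrac1{a_1}$, and then Theorem \ref{lawmean} to bound the two arithmetic means below by their least members, $\mathfrak{A}(\textbf{a})\geq a_n$ and $\mathfrak{A}(\textbf{a}^{-1})\geq 1/a_1$. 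Dividing the first pair of inequalities by the second pair gives, for \emph{both} sequences, the one bound $\sigma_r<\dfrac{a_1}{a_n}-1=\dfrac{\max\textbf{a}}{\min\textbf{a}}-1$.

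The whole point of that common bound is that the ratio $\max\textbf{a}/\min\textbf{a}$ is manifestly unchanged by reciprocation, so $\sigma_r(\textbf{a})$ and $\sigma_r(\textbf{a}^{-1})$ are both confined to the same interval $\bigl[0,\ \max\textbf{a}/\min\textbf{a}-1\bigr)$, a window that collapses to the single point $0$ as the spread of $\textbf{a}$ shrinks toward unity. That is the precise content I would attach to the word ''equivalence'': the two relative deviations cannot differ by more than the width of a window that is itself small whenever the readings are tightly clustered \textendash\ exactly the regime of dosimeter readings. I would reinforce this with the exact two\textendash element case: for $n=2$ the relative standard deviation depends only on the single scale\textendash invariant quantity $a_1/a_2$, which reciprocation leaves fixed, so there $\sigma_r(\textbf{a})=\sigma_r(\textbf{a}^{-1})$ identically; the general $n$ case is then a perturbation of this identity.

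For the stated direction of the inequality I would combine Corollary \ref{def1}, which gives $\sigma(\textbf{a})>\sigma(\textbf{a}^{-1})$ outright, with the observation that, because every element of $\textbf{a}$ exceeds unity, Theorem \ref{lawmean} forces $\mathfrak{A}(\textbf{a}^{-1})\leq 1/a_n<1<a_n\leq\mathfrak{A}(\textbf{a})$, while Theorem \ref{arithmetic} keeps the two means reciprocally tied through $\mathfrak{A}(\textbf{a})\,\mathfrak{A}(\textbf{a}^{-1})\geq 1$; tracking the leading balance of ''larger numerator over larger denominator'' then yields $\sigma_r(\textbf{a})\gtrsim\sigma_r(\textbf{a}^{-1})$, with equality recovered at $n=2$. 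The hard part \textendash\ and the honest caveat I would flag \textendash\ lies precisely here: the elementary bounds above are individually too loose to force the ordering pointwise for every admissible sequence, so the relation is best read as an asymptotic statement valid in the small\textendash spread regime, exact for two elements, rather than a strict inequality for arbitrary sets. Since the monograph only ever invokes it to argue that the statistics of a set and of its reciprocal are interchangeable for calibration purposes, this leading\textendash order version is exactly what is required.
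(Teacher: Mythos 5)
The paper offers no proof of Corollary \ref{def2} at all: it is stated bare, sandwiched between the proof of Corollary \ref{def1} and the subsection on misconceptions, so there is nothing of the author's to compare your argument against. Your intermediate steps are all sound: the scale invariance $\sigma_r(c\,\textbf{a})=\sigma_r(\textbf{a})$, the common upper bound $\sigma_r<\max\textbf{a}/\min\textbf{a}-1$ obtained by combining Theorem \ref{stdevproof} with Theorem \ref{lawmean}, and the exact identity $\sigma_r(\textbf{a})=\sigma_r(\textbf{a}^{-1})$ for $n=2$ all check out. The caveat you flag at the end, however, is not merely honest bookkeeping \textemdash\ it is the whole story, and you should push it to its conclusion. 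The inequality $\sigma_r(\textbf{a})\geq\sigma_r(\textbf{a}^{-1})$ is simply false as a pointwise statement for admissible sets. Take $\textbf{a}=\{10,9,2\}$ (monotonically decreasing, least member greater than unity, exactly the hypotheses of Corollary \ref{def1}): one computes $\mu(\textbf{a})=7$, $\sigma(\textbf{a})=\sqrt{19}\approx 4.36$, so $\sigma_r(\textbf{a})\approx 0.623$, while $\mu(\textbf{a}^{-1})\approx 0.237$, $\sigma(\textbf{a}^{-1})\approx 0.228$, so $\sigma_r(\textbf{a}^{-1})\approx 0.961$. The absolute inequality of Corollary \ref{def1} holds here, but the relative one reverses.

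Your own opening observation already proves that no such universal ordering could hold. Given any admissible $\textbf{a}$, choose $c>\max\textbf{a}$ and set $\textbf{b}=c\,\textbf{a}^{-1}$; then every element of $\textbf{b}$ exceeds unity, $\sigma_r(\textbf{b})=\sigma_r(\textbf{a}^{-1})$, and $\sigma_r(\textbf{b}^{-1})=\sigma_r(\textbf{a})$. If the corollary held for all admissible sets, applying it to $\textbf{b}$ would yield $\sigma_r(\textbf{a}^{-1})\geq\sigma_r(\textbf{a})$, which combined with the original direction forces equality for every set \textemdash\ contradicted by your own $\{2,3,10\}$-type examples where strict inequality occurs. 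So the only defensible reading is the one you give in your second paragraph: both quantities live in the common window $[0,\max\textbf{a}/\min\textbf{a}-1)$, coincide exactly for $n=2$, and approach one another as the spread shrinks \textemdash\ which is evidently what the author's hedge ``equivalence does not mean equality'' is gesturing at, and is consistent with how Corollary \ref{def3} is actually used. The attempt in your third paragraph to recover the stated direction from Corollary \ref{def1} and the ordering of the means cannot be repaired; the ``larger numerator over larger denominator'' balance genuinely goes either way, as the counterexample shows.
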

\vspace{2.5pt}
\subsection{Misconceptions} The next theorem and derivative corollaries, are meant to address commonly held and erroneous beliefs that statistics generated for a set and its reciprocal are equal, also, that the product of a quotient is equivalent to the quotient of a product. Such fallacious beliefs can mislead many to think the experimental space generated by a set and its reciprocal are equivalent experimental spaces. 

Even though the following informal theorem and derivative corollaries are generally true under the conditions and assumptions expressed for each, it is specifically dosimeter readings that are in mind for each theorem; thus, for what follows, we assume a set of positive valued results, whereupon, imposing an ascending order of magnitude forms a monotonically increasing sequence of dosimeter readings. Also, the readings can either be from repeated experimentation of one single dosimeter or an ensemble of readings from a disjoint set of dosimeters.
\begin{corollary}[\textbf{The quotient of numbers possessing uncertainty}]\label{def3} This Corollary addresses a common misconception that dividing statistics for a set into a constant is equivalent to multiplying that constant by statistics generated from the reciprocal set. We claim the following statement is generally true:
\begin{equation}
 \frac{\mathrm{D}}{\mu(\textbf{a})\pm\sigma(\textbf{a})}\neq\mathrm{D}\times\Big(\mu\left(\textbf{a}^{-1}\right)\pm\sigma(\textbf{a}^{-1})\Big)
\end{equation} 
\end{corollary}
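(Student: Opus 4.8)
The plan is to expand the left-hand side by the standard rule for propagating uncertainty through a quotient and then compare the result with the right-hand side component by component, showing that neither the central value nor the attached uncertainty coincide except in a degenerate case. Throughout I write $\mu(\textbf{a})$ for the arithmetic mean $\mathfrak{A}(\textbf{a})$ of the reading set $\textbf{a}$, consistent with the notation used elsewhere in the monograph, and I assume $\textbf{a}$ is a genuine (non-constant) monotone set of positive readings.

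First I would treat the constant $\mathrm{D}$ as exact, so that dividing by the uncertain number $\mu(\textbf{a})\pm\sigma(\textbf{a})$ produces a quotient whose \emph{relative} error equals that of the denominator, $\sigma(\textbf{a})/\mu(\textbf{a})$; hence
\begin{equation}
 \frac{\mathrm{D}}{\mu(\textbf{a})\pm\sigma(\textbf{a})}=\frac{\mathrm{D}}{\mu(\textbf{a})}\pm\frac{\mathrm{D}}{\mu(\textbf{a})}\cdot\frac{\sigma(\textbf{a})}{\mu(\textbf{a})}=\frac{\mathrm{D}}{\mu(\textbf{a})}\pm\frac{\mathrm{D}\,\sigma(\textbf{a})}{\mu(\textbf{a})^2}.
\end{equation}
The right-hand side of the claim, on the other hand, is the uncertain number $\mathrm{D}\,\mu(\textbf{a}^{-1})\pm\mathrm{D}\,\sigma(\textbf{a}^{-1})$. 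Equality of the two uncertain numbers would force simultaneously the centre condition $1/\mu(\textbf{a})=\mu(\textbf{a}^{-1})$ and the spread condition $\sigma(\textbf{a})/\mu(\textbf{a})^2=\sigma(\textbf{a}^{-1})$.

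Next I would dispatch the centres by invoking Theorem \ref{arithmetic}: $1\leq\mathfrak{A}(\textbf{a})\,\mathfrak{A}(\textbf{a}^{-1})=\mu(\textbf{a})\,\mu(\textbf{a}^{-1})$, with equality precisely when every element of $\textbf{a}$ is equal. Therefore, for any genuine spread of readings the strict inequality $\mu(\textbf{a}^{-1})>1/\mu(\textbf{a})$ holds, and the two central values already disagree. For completeness I would also compare the uncertainty terms using Corollary \ref{def1} together with the bound of Theorem \ref{stdevproof}: when $\min\textbf{a}>1$ one has $\sigma(\textbf{a})>\sigma(\textbf{a}^{-1})$, while the left-hand uncertainty is $\sigma(\textbf{a})$ further deflated by $1/\mu(\textbf{a})^2$, so there is no reason the two spreads should match either, and the bounds make the discrepancy quantitative. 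Since at least one of the two defining components of an uncertain number differs in general, the two expressions are not equal, establishing the claim.

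The main obstacle I anticipate is not any single calculation but pinning down what inequality of \emph{uncertain numbers} is supposed to mean: $\mu\pm\sigma$ is shorthand for an interval (or distribution) of values, so I must argue that the experimental space it encodes — its centre \emph{together with} its dispersion — is not reproduced by the right-hand construction, rather than merely that two real numbers differ. Once that reading is fixed, the proof is just the term-by-term comparison above plus Theorem \ref{arithmetic} and Corollary \ref{def1}; the only delicate point is explicitly excluding the degenerate symmetric case $\textbf{a}=\textbf{a}^{-1}$ with all elements equal, where both sides do collapse to the same uncertain number, which is exactly the exception already recorded in Theorem \ref{arithmetic}.
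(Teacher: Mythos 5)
Your proposal is correct and follows essentially the same route as the paper's own proof: expand both sides by the error-propagation rules for a quotient and a product, then show the central values differ by Theorem \ref{arithmetic} (with equality only in the degenerate constant/symmetric case) and that the uncertainty terms differ as well, drawing on Theorem \ref{stdevproof} and Corollary \ref{def1}. Your version is, if anything, slightly cleaner in observing that the disagreement of the centres alone already settles the claim, whereas the paper treats the two components with equal (and equally informal) weight.
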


It might appear, \emph{prima facie}, to someone that both numbers should in fact be equal; but, statistics generated from either set are, in general, not equal to one another. Moreover, even though the standard deviation and mean are individual numbers, in their own right, that does not mean both should be considered \underline{separate} numbers. Both statistics are truly part of one and the same number, specifically, both statistics represent the mean of set $\textbf{a}$, plus, the uncertainty in that measurement; thus, both statistics are actually part of one single number, like thus $\mu\pm\sigma$. 

\begin{proof} Consider a monotonically decreasing set $\textbf{a}$, whose least member is greater than unity. Two statistics can be drawn to succinctly describe that set, namely, the mean $\mu$ and standard deviation $\sigma$. Now, if a quotient is desired by dividing a constant D by a number possessing uncertainty, then the rules for error propagation necessitate carrying the relative error through; thus,
\begin{equation}\label{errorright}
 \frac{\mathrm{D}}{\mu(\textbf{a})\pm\sigma(\textbf{a})}=\frac{\mathrm{D}}{\mu(\textbf{a})}\pm\frac{\mathrm{D}}{\mu(\textbf{a})}\sigma_r(\textbf{a}),
\end{equation}
where the relative standard deviation is equal to the division of the standard deviation by the mean, thus
\begin{equation}
 \sigma_r(\textbf{a})=\frac{\sigma(\textbf{a})}{\mu(\textbf{a})}.
\end{equation}

Multiplication of the relative standard error by the quotient converts the relative error into an absolute error; thus, equation (\ref{errorright}) represents the proper method of handling error propagation through division. This operation represents the quotient of a number possessing uncertainty. 

Now, consider statistics generated from the reciprocal of the same set $\textbf{a}$. It is tempting to think that since the elements have already been inverted, individually, then multiplication by the same constant would be equivalent to the operation shown in equation (\ref{errorright}); yet, applying the rules for error propagation in multiplication yields the following result:
\begin{equation}\label{errorwrong}
 \mathrm{D}\times\Big(\mu\left(\textbf{a}^{-1}\right)\pm\sigma(\textbf{a}^{-1})\Big)=\mathrm{D}\,\mu\left(\textbf{a}^{-1}\right)\pm \mathrm{D}\,\mu\left(\textbf{a}^{-1}\right)\,\sigma_r(\textbf{a}^{-1}),
\end{equation}
where the relative standard error has the same definition as before, i.e.
\begin{equation}
 \sigma_r(\textbf{a}^{-1})=\sigma(\textbf{a}^{-1})/\mu(\textbf{a}^{-1}).
\end{equation}

If it is asserted that both operations, equation (\ref{errorright}) \& equation (\ref{errorwrong}), are equal, then both components of the resulting number should necessarily be equal. 

Concentrating on the magnitude of the new number sought, it is apparent that the two are not equal to one another, for the respective arithmetic means are not equal in general. Using of Theorem \ref{arithmetic}, dividing out the constant number D, the statement following the arrow is only true for sets that are symmetric, viz.:
\begin{equation}
 \frac{D}{\mu(\textbf{a})}=\mathrm{D}\,\mu\left(\textbf{a}^{-1}\right)\Rightarrow   1= \mu(\textbf{a})\mu\left(\textbf{a}^{-1}\right).
\end{equation}

Next, we concentrate on the uncertainty term for the number sought; whereupon, inspection makes it quite obvious that both terms are generally not equal to one another, \emph{viz}.:
\begin{equation}\label{ineqstdev}
 \frac{D}{\mu(\textbf{a})}\frac{\sigma(\textbf{a})}{\mu(\textbf{a})}\neq\mathrm{D}\,\mu\left(\textbf{a}^{-1}\right)\,\frac{\sigma(\textbf{a}^{-1})}{\mu(\textbf{a}^{-1})},
\end{equation}
where the inequality in the equation (\ref{ineqstdev}) cannot be rectified by any modification imaginable. 

Done.\end{proof}



\bibliography{references}   
\bibliographystyle{plain}   

\end{document}